\newcommand{\magenta}[1]{\textcolor{magenta}{#1}}
\newcommand{\ret}{\mathtt{return}}
\newcommand{\inv}{\mathtt{invoke}}
\newcommand{\W}{\mathtt{W}}
\newcommand{\R}{\mathtt{R}}
\newcommand{\loc}{\approx}
\newcommand{\X}{\mathrm{X}}
\newcommand{\Op}{\mathrm{O}}
\newcommand{\F}{\mathtt{F}}
\newcommand{\sa}{\mathsf{a}}
\newcommand{\sbb}{\mathsf{b}}
\newcommand{\Xa}{\mathrm{X} \sa}
\newcommand{\Y}{\mathrm{Y}}
\newcommand{\Yb}{\mathrm{Y} \sbb}
\newcommand{\RfM}{\mathtt{RfM}}
\newcommand{\RfB}{\mathtt{RfB}}
\newcommand{\ob}{\overset{ob}\rightsquigarrow}
\newcommand{\fb}{\circlearrowleft}
\newcommand{\lREAD}{\mathsf{Read}}
\newcommand{\lWRITE}{\mathsf{Write}}
\newcommand{\node}[1]{\langle#1\rangle}
\newcommand{\raissa}[1]{\magenta{#1}}
\newcommand{\prop}{\mathtt{prop}}
\newcommand{\shift}[1]{\mathtt{shift}_\Delta[#1]}
\newcommand{\hatm}{\hat{m}}
\newcommand{\Past}{\mathtt{Past}}
\newcommand{\Pastrp}{{\mathtt{Past}_r^+}}
\newcommand{\Ag}{\mathrm{Ag}}
\newcommand{\ijj}{\{i,j\}}
\newtheorem{notation}[theorem]{Notation}
\newtheorem*{notation*}{Notation}
\newcommand{\citet}[1]{{\bf \color{red}Fix citet }\cite{#1}}
\DeclareSymbolFont{Shuffle}{U}{shuffle}{m}{n}
\DeclareFontFamily{U}{shuffle}{}
\DeclareFontShape{U}{shuffle}{m}{n}{%
  <-8>shuffle7%
  <8->shuffle10%
}{}
\DeclareMathSymbol\shuffle{\mathbin}{Shuffle}{"001}
\DeclareMathSymbol\cshuffle{\mathbin}{Shuffle}{"002}
\theoremstyle{definition}
\renewcommand{\paragraph}[1]{\smallskip\noindent{\bfseries #1}}
\newcommand{\emptyseq}{\varepsilon}
\newcommand{\set}[1]{\{{#1}\}}
\newcommand{\size}[1]{|{#1}|}
\newcommand{\stt}{\; | \;}
\newcommand{\tup}[1]{{\langle{#1}\rangle}}
\newcommand{\restrict}[2]{{#1}|_{#2}}
\newcommand{\maketil}[1]{{#1}\ldots{#1}}
\newcommand{\til}{\maketil{,}}
\newcommand{\proc}{i}
\newcommand{\act}{\mathsf{a}}
\newcommand{\Act}{{Act}}
\newcommand{\buf}{\mathsf{buf}}
\newcommand{\evt}{e}
\newcommand{\var}{x}
\newcommand{\Val}{{\mathsf{Val}}}
\newcommand{\Var}{{\mathsf{Var}}}
\newcommand{\Proc}{\Pi}
\newcommand{\Write}{\texttt{W}}
\newcommand{\Fence}{\texttt{F}}
\newcommand{\RMW}{\mathtt{RMW}}
\newcommand{\ack}{{\normalfont\texttt{ack}}}
\newcommand{\vexp}{v_\textsf{exp}}
\newcommand{\vnew}{v_\textsf{new}}
\newcommand{\ev}[2]{{#1}\,\text{:}\,{#2}}
\newcommand{\snapshoto}{\mathsf{Snapshot}}
\newcommand{\updateop}{{\rm update}}
\newcommand{\updateopp}[1]{\updateop(#1)}
\newcommand{\scanop}{{\rm scan}}
\newcommand{\noop}{\bot}
\newtheorem*{theorem*}{Theorem}
\newcommand{\Mem}{\mathsf{m}}
\newcommand{\invoke}{\mathtt{invoke}}
\crefname{algocf}{algorithm}{algorithms}
\Crefname{algocf}{Algorithm}{Algorithms}
\title{Time, Fences and the Ordering of Events in TSO}
\author{Ra\"issa Nataf}{Technion, Israel} {raissa.nataf@cs.technion.ac.il}{https://orcid.org/0009-0003-1127-754X}{}
\author{Yoram Moses}{Technion, Israel} {moses@ee.technion.ac.il}{https://orcid.org/0000-0001-5549-1781}{}
\authorrunning{R.Nataf, Y.Moses}
\keywords{TSO, linearizability, happens before, fences, synchronization actions}
\begin{document}
\maketitle
\begin{abstract}
 The Total Store Order (TSO) is arguably the most widely used relaxed memory model in multiprocessor architectures, widely implemented, for example in Intel's x86 and x64 platforms. 
    It allows processes to delay the visibility of writes through store buffering. While this supports hardware-level optimizations and makes a significant contribution to multiprocessor efficiency, it complicates reasoning about correctness, as executions may violate sequential consistency. Ensuring correct behavior often requires inserting synchronization primitives such as memory fences ($\F$) or atomic read-modify-write ($\RMW$) operations, but this approach can incur significant performance costs.
    In this work, we develop a semantic framework that precisely characterizes when such synchronization is necessary under TSO. We introduce a novel TSO-specific {\em occurs-before} relation, which adapts Lamport’s celebrated happens-before relation from asynchronous message-passing systems to the TSO setting. Our main result is a theorem that proves that the only way to ensure that two events that take place at different sites are temporally ordered is by having the execution create an occurs-before chain between the events. 
    By studying the role of fences and $\RMW$s in creating occurs-before chains, we are then able to capture cases in which these costly synchronization operations are unavoidable. 
        Since proper real-time ordering of events is a fundamental aspect of consistency conditions such as Linearizability, our analysis provides a sound theoretical understanding of essential aspects of the TSO model. In particular, we are able to 
        generalize  prior lower bounds for linearizable implementations of shared memory objects such as registers or snapshots. 
        Our results capture the structure of information flow and causality in the TSO model by extending the standard communication-based reasoning from asynchronous systems to the TSO memory model.
\end{abstract}
\newcommand{\longrightarrowtail}{\relbar\joinrel\rightarrowtail}

\section{Introduction}

Modern multiprocessors rely on relaxed memory models to improve performance through techniques such as store buffering and out-of-order execution. Among these models, \emph{Total Store Order (TSO)}---used, for example, in Intel’s x86 and x64 architectures---is perhaps the most widely deployed. TSO improves efficiency by allowing writes to be temporarily stored in local buffers, deferring their visibility to other processors. However, this optimization comes at a conceptual cost: it breaks \emph{sequential consistency} (SC) \cite{lamport79seq}, making it significantly harder to reason about program correctness. A program that is correct under SC may behave incorrectly under TSO due to the delayed visibility of writes.
Therefore, to enforce correctness under TSO, programmers resort to synchronization primitives such as \emph{memory fences} ($\F$) and \emph{read-modify-write} ($\RMW$) actions. These primitives ensure memory visibility and ordering by flushing store buffers, or enforcing atomic access. While effective, they also inhibit hardware-level optimizations and limit concurrency, introducing significant performance overhead~\cite{LawsOfOrder2011, LahavCPP11, VafeiadisNardelli2011}. As a result, understanding \emph{when synchronization is necessary} has become a central challenge in the study of weak memory models.

Prior work has approached this challenge from both practical and theoretical angles. Automated tools have been developed to minimize or eliminate fences while preserving program correctness~\cite{DontSitOnTheFence, VafeiadisNardelli2011}, and several impossibility results have shown that synchronization is unavoidable in certain contexts. Notably, Castañeda et al.~\cite{WeakMemDISC24} proved that \emph{linearizable implementations} of shared objects under TSO must use fences or $\RMW$s in some executions. Yet, while existing results identify scenarios where synchronization is required, their methods do not reveal \emph{why} it is required. Our goal in this paper is to provide a semantic explanation that exposes the causal structure underlying TSO executions and clarifies the origin of synchronization constraints.
\subsection*{Our Approach and Contributions}

In this work, we propose a \emph{semantic framework} that allows us to precisely characterize when synchronization is required under TSO. At the heart of our framework is a novel \emph{TSO-specific occurs-before relation}, denoted $\ob$, which generalizes Lamport’s happens-before relation from asynchronous message-passing systems and adapts it to the TSO setting.
The happens-before relation has been considered in different models, including TSO \cite{AlglaveMSS10}.
In this work, we introduce a new definition for the analogue of happens-before in TSO, which we call the \emph{occurs-before} relation. One of the differences is that our definition stems from an operational model of TSO, whereas existing definitions are formulated using declarative models. As a result, our approach operates at a lower level of abstraction, making it easier to derive concrete implementation constraints.


Based on the new causal relation we state and prove the \emph{Delaying the Future (DtF) Theorem} (\Cref{thm:DtFTSO}), which shows an intimate connection between the occurs-before relation and the ability to switch the timing of events and operations in TSO. This result serves as a central tool in our analysis and formalizes the following principle: If an event is not in the causal past (under $\ob$) of another, then it can be delayed arbitrarily without any process noticing the change.
This theorem gives rise to a powerful indistinguishability argument: if no $\ob$ chain exists between two events, their relative order can be reversed without affecting any process’s local view. 
The theorem serves as a foundation for proving when the $\F$ or $\RMW$ synchronization actions are unavoidable under TSO. It allows us to precisely characterize execution patterns that inherently require synchronization, both in general and for \emph{linearizable implementations} of shared memory objects.
In particular, our framework explains, at a semantic level, why synchronization is required in the implementation of various shared memory objects—not just in specific instances, but because of the fundamental limitations on information flow imposed by TSO.

The current paper can be viewed as performing an analogous analysis for TSO to that of~\cite{CommRequirDISC24}. Our analysis highlights the role of synchronization primitives, extending lower bounds on the use of synchronization primitives in TSO.
Practically, such lower bounds identify the synchronization mechanisms implementations must employ and indicate when further attempts to remove them would be futile.
Our framework builds a conceptual bridge between memory models and asynchronous systems, offering new tools for the analysis  of synchronization in the TSO model.

\section{Related Work}
Our work is connected to several threads of research on action reordering, causality, and the necessity of synchronization in distributed systems. 
Lamport’s happens-before relation~\cite{Lam78causal}  plays a fundamental role in the study of asynchronous distributed computing, used, for example, in the design of vector clocks, race detection, causal memories \cite{mattern1989virtual,flanagan2009fasttrack,ahamad1995causal} as well as in and many other applications. 
In a precise sense, happens-before captures all of the information about timing that processes can obtain in standard asynchronous systems.
The notion of timing is essential in concurrent systems. Consider for instance Linearizability \cite{HerlihyLineari}, the gold standard correctness criterion for concurrent implementations of shared objects. Informally, an object implementation is linearizable if in each one of its executions, operations appear to occur instantaneously, in a way that is consistent with the execution and the object's specification. I.e., the actions performed and values observed by processes depend on the real-time ordering of non-overlapping operations. However, processes do not have direct access to real time in the asynchronous setting, and this makes  satisfying linearizability especially challenging.
The only way processes can obtain information about the real-time order of events in asynchronous message-passing systems is via {\em message chains} ({\em cf.}\ Lamport's {\em happens before}  relation \cite{Lam78causal}).

The happens-before relation has been considered in different models, including TSO \cite{AlglaveMSS10}.
In this work, we introduce a new definition for the analogue of happens-before in TSO, which we call the \emph{occurs-before} relation. One of the differences is that our definition stems from an operational model of TSO, whereas previous definitions such as in \cite{AlglaveMSS10} are formulated using declarative models. As a result, our approach employs a lower level of abstraction, making it easier to derive concrete implementation constraints.

Our work can be viewed as a weak-memory counterpart to a recent paper by Nataf and Moses~\cite{CommRequirDISC24} in which they use the happens-before relation to obtain 
a theorem called Delaying the Future that captures the relationship between real-time and message chains, and forms an effective  tool for the study of linearizable implementations in asynchronous systems. 
They show, for example,  that in linearizable implementations of registers, two operations (read or write) on different values must typically be connected by message chains, implying communication costs and providing insight into the structure of protocols implementing such registers.
In TSO, communication is done through memory rather than messages, with all the subtleties that  deferring of write to the memory implies. Our Delaying the Future here (\Cref{thm:DtFTSO}) is the TSO version of the DtF theorem of \cite{CommRequirDISC24}. As a consequence, many of their results about the necessity of message chains in linearizable implementations of objects apply in TSO, with message chains being replaced by~$\ob$ chains.

Attiya et al.~\cite{LawsOfOrder2011} and Castañeda et al.~\cite{WeakMemDISC24} are most directly related to our TSO results. Both works establish conditions on when  
linearizable implementations require the use of synchronization operations. In~\cite{LawsOfOrder2011}, covering arguments are used to show that precise communication patterns involving reads and writes are unavoidable in implementing classic and widely used specifications. Their results apply to objects whose methods are strongly non-commutative—such as sets, queues, or stacks—but not to registers.
The more recent \cite{WeakMemDISC24} proves a mergeability theorem for TSO traces and  applies it to obtain results about objects with one-sided non commutative methods such as registers. They show
 that \emph{linearizable} TSO implementations of objects, including registers,  must use fences or $\RMW$s in some executions.  
In \cite{CommRequirDISC24}, a DtF theorem is proved for asynchronous message-passing systems. It is then used to prove the need to construct message chains between operations in linearizable register implementations in that model. 

\section{Model and Preliminary Definitions}\label{sec:model}
\subsection{The Basic TSO Model}\label{sec:tsomodel}
We consider a TSO model based on the operational definition of TSO given in \cite{WeakMemDISC24}. Previous works proposed (slightly different) operational models for TSO, e.g., \cite{OwensTPHOLS09,BoudolPOPL09,burckhardt2008effective}.
It consists of a set $\Pi=\{1,\ldots,n\}$ of $n$ processes  and a finite set~$\Var$ of variables. 
A basic TSO state is a pair $\sigma=\tup{\Mem,\buf}$, where
$\Mem \in\Var \to \Val$ describes the main memory and
$\buf \in\,\Proc \to (\Var \times \Val)^*$ assigns a queue called a \emph{store buffer} to every
process.
Processes can perform actions from the set $\{\R[x], \W[x,v], \F, \RMW[x,\vexp,\vnew],\noop\}$. 
These correspond, respectively, to reading the value of~$x$---which will return a value $v\in\Val$, writing the value $v$ in~$x$, performing a \emph{fence}, performing a read-modify-write action, and (added simply for our convenience) performing a null action. To account for propagating values from the store buffer to variables in memory, we associate with each $i\in\Pi$ a ``local dispatcher'' component $d_i$ with a single action $\prop$. 
We denote $\Ag\triangleq \Pi\cup \{d_i\}_{i\in \Pi}$ and call its elements \emph{agents}.
Reads, writes and null actions are always enabled, while fences, $\RMW$ and $\prop$ actions can take place only if specific preconditions hold. We think of an action taking place as being an event. It is convenient to consider two types of read events: $\RfB(x,v)$ and $\RfM(x,v)$. The former is when the value of~$x$ is read from the store buffer, and the latter is when the value of $x$ is read from the variable $x\in\Mem$ in memory.
In both instances, the value returned by the read action is~$v\in\Val$. 
Using $\restrict{\beta}{\var}$  to denote the restriction of a store buffer $\beta$ to pairs
of the form $\tup{\var,\_}$, the preconditions and effects of the event of a single action being performed 
in a TSO state $\sigma=\tup{\Mem,\buf}$ are defined as follows: \\[1ex]
\begin{mathpar} 
  \inferrule[write]
  {\evt=\ev{\proc}{\Write(\var,v)} \\\\
  \buf' = \buf[\proc \mapsto \buf(\proc) \cdot \tup{\var, v}]}
  {\tup{\Mem, \buf} \xrightarrow{\evt} \tup{\Mem, \buf'}}
  \and
    \inferrule[read-from-buffer]
  {\evt=\ev{\proc}{\RfB(\var,v)} \\\\ 
      \restrict{\buf(\proc)}{\var} = \_ \cdot \tup{\tup{\var,v}}}
  {\tup{\Mem, \buf} \xrightarrow{\evt} \tup{\Mem,\buf}}
  \and
    \inferrule[read-from-memory]
  {\evt=\ev{\proc}{\RfM(\var,v)} \\\\ 
    \restrict{\buf(\proc)}{\var} = \emptyseq \\  \Mem(\var)=v}
  {\tup{\Mem, \buf} \xrightarrow{\evt} \tup{\Mem,\buf}}
   \\
  \inferrule[rmw]{\evt=\ev{\proc}{\RMW(\var,\vexp,\vnew)} \\\\ \buf(\proc) = \emptyseq \\ \Mem(\var)=\vexp}
{\tup{\Mem,\buf} \xrightarrow{\evt} \tup{\Mem[\var\mapsto \vnew],\buf}}
\and  
  \inferrule[fence]
  {\evt = \ev{\proc}{\Fence}
    \\\\ \buf(\proc) = \emptyseq }
  {\tup{\Mem, \buf} \xrightarrow{\evt} \tup{\Mem, \buf}} 
    \and
  \inferrule[propagate]
 {\evt = \ev{d_\proc}{\prop(x,v)}
    \\\\
  \buf(\proc) = \tup{\tup{\var,v}} \cdot \beta \\\\
  \Mem'= \Mem[\var \mapsto v] \\ b' = b[\proc \mapsto \beta]}
  {\tup{\Mem, \buf} \xrightarrow{\evt} \tup{\Mem', \buf'}} 
    \and
  \inferrule[null]
  {\evt = \ev{\proc}{\,\noop} \\
  }
  {\tup{\Mem, \buf} \xrightarrow{\evt} \tup{\Mem, \buf }} 
\end{mathpar}
Roughly speaking, actions operate as follows.
Read and write actions are always enabled. A write by process~$i$ places an item $\tup{var,v}$ at the tail of store buffer queue. A read $\R[x]$ by~$i$ will result in a read-from-buffer event $\RfB(x,v)$ if  $\tup{x,v}$ is the latest write to~$x$  in~$i$'s buffer; if the buffer contains no such pair, then a read-from-memory event $\RfM(x,v)$ takes place, returning the value $v=\Mem(x)$. Fence  and read-modify-write actions  by~$i$ block until their precondition is satisfied. In both cases the action requires~$i$'s buffer to be empty, and the action's success informs~$i$ that this was the case.
An $\RMW(\var,\vexp,\vnew)$ event occurs if, in addition, the current value of~$\var$ is~$\vexp$, resulting in the value of $x$ in memory being changed to~$\vnew$. Finally, a $\prop$ action by $d_i$ can succeed only if $i$'s buffer is nonempty, and it removes the oldest $\tup{x,v}$ pair from the buffer, and assigns the value~$v$ to~$x$. 
It will be useful to be able to distinguish actions that access memory from those that don't. For this purpose, we define: 

\begin{definition}[Memory access actions]
    A TSO action is considered a {\em memory access} of the variable $x$ if it is either an $\RMW[x,\cdot]$, it is a $\prop$ action that results in a $\prop(x,\cdot)$ event, or a read action~$\R[x]$ that results in a read from memory $\RfM(x,\cdot)$.   
\end{definition}
Notice that neither writes nor $\RfB$'s are considered memory accesses. This is because they directly interact only with the buffer store and not with the variables of shared memory. Roughly speaking, the occurrence of such an action cannot be observed by other processes before later writes by the process performing are  propagated to shared memory. 
\subsection{Runs and Protocols}
\label{sec:runs and protocols}
Our purpose is to study time and ordering in TSO systems, and to obtain insights into the structure of protocols that implementing various objects and tasks in this setting. A protocol is a tuple $P=(P_1,\ldots,P_n)$, containing an individual component $P_i$ for every $i\in\Pi$. Each of these is a nondeterministic function $P_i:L_i\to 2^{\Act_i}\setminus\emptyset$ 
where~$L_i$ is the set of local states of~$i$ and $\Act_i$ is its set of possible actions. $P_i(\ell)$ determines a nonempty set of possible actions that~$i$ can perform when in state~$\ell\in L_i$, out of which one action will be chosen nondeterministically whenever~$i$ moves in state~$\ell$. ($P_i$ is \emph{deterministic} if $P_i(\ell)$ is a singleton for all $\ell\in L_i$.) 

\paragraph{Actions:}\quad 
In addition to the TSO actions $\R$, $\W$, $\RMW$ and $\F$ for~$i$, it is convenient to allow in $\Act_i$ internal actions that will only be recorded in~$i$'s local state, and will not affect the TSO state. \footnote{ We consider a $\RfM$ in~$r$ and a $\RfB$ of the same tag in~$r'$ to be the ``same ($\R$) action''.}
One such action we will see in \Cref{sec:implOpTSO} is $\ret(\Op)$ which will mark the completion of an operation~$\Op$ invoked at~$i$. 
We will also allow the scheduler, which we sometimes call the environment, to perform $\invoke[i,\Op]$ actions, where, again,~$\Op$ is the name of an operation. (Operations will be discussed in  \Cref{sec:implOpTSO}.) 

\paragraph{States:}\quad 
The local state of a process records the history of its observations so far. 
Starting initially from $\ell_i=\lambda$, an empty list, whenever~$i$ performs a non-null action, its record is added to the list. 
For write, rmw and fence  actions the events $\W(x,v), \RMW(x,\vexp,\vnew)$ and~$\F$, respectively, will be recorded in the local state. A read action returning value~$v$ will be recorded as $\R(x,v)$, since~$i$ does not observe whether the read was from its buffer or from memory. Internal actions will similarly be recorded in the state, and if the environment performs $\invoke[i,\Op]$ then $\invoke(\Op)$ is added to~$\ell_i$. 
At any given point in time, we consider the system as being in a \emph{global state}, which is a tuple 
$(\sigma,\ell_1,\ldots,\ell_n)$ consisting of a TSO state $\sigma=\node{\Mem,\buf}$ and a local state for every process. 

\paragraph{Runs:}\quad 
We identify time with the natural numbers, and define a \emph{run} to be an infinite sequence of global states $r(0),r(1),r(2)\ldots$. In the initial state $r(0)$ all buffers in~$\sigma$ are empty ($\buf_i=\lambda)$, and all local states are initial local states. The transition from $r(t)$ to $r(t+1)$ is sometimes called \emph{round~$t+1$} of~$r$. 
For all times~$t\ge 0$, the global state $r(t+1)$ at time~$t+1$ is the result of an enabled \emph{joint action}
taking place at $r(t)$. This joint action $\vec{\act}=(\act_1,\ldots\act_n; \act_{d_1},\ldots,\act_{d_n};\act_e)$ 
contains an  action $\act_i\in\Act_i$ (possibly the null action~$\bot$) that is enabled at the state $r(t)$ for each $i\in\Pi$, an enabled action from $\{\prop,\bot\}$ for every local dispatcher~$d_i$, and an environment action $\act_e$ that may contain up to one $\invoke(i,\Op)$ message for each~$i\in\Pi$. (Operations~$\Op$ are discussed in greater detail in \Cref{sec:implOpTSO}.) We say that a joint action schedules an agent~$b$ to~{\em move} if $\act_b\ne\bot$. 
TSO stands for \emph{Total Store Order}, since it guarantees that memory accesses to any given variable must be totally ordered. Consequently, all actions in a given joint action must be non-conflicting, where two actions~$\act$ and $\act'$ 
are {\em conflicting} if (i) at least one of them is not a $\RfM$, and
(ii) they are both memory accesses to the same variable, and
(iii) it is not the case that one of them is $\prop$ at $d_i$ and the other is $\RfM$ at $i$.

\paragraph{Transitions by joint actions:}\quad 
Recall that in \Cref{sec:tsomodel} we defined the transition on the TSO state~$\sigma$ that is caused by a single TSO action. 
We define the transition  resulting from the joint action that $\vec{\act}=(\act_1,\ldots\act_n; \act_{d_1},\ldots,\act_{d_n};\act_e)$ being performed in $r(t)=(\sigma,\cdots)$, thereby producing $r(t+1)=(\sigma',\cdots)$ as follows. 
The state~$\sigma'$ is the result of starting from~$\sigma$ and performing the actions of $\vec{\act}$ one by one in the order they appear in~$\vec{\act}$. The local states are updated by adding the records of (non-null) actions to the processes according to~$\vec{\act}$.

\paragraph{Runs of a protocol:}\quad
We denote by $r_i(t)$ process~$i$'s local state in $r(t)$. 
A run~$r$ is considered a run of a protocol~$P=(P_1,\ldots,P_n)$ if for all $t\ge 0$ and $i\in\Pi$, 
if~$i$ performs a non-null action~$\act_i$ in the joint action~$\vec{\act}$ in round~$t+1$ of~$r$ at state $r(t)$, then $\act_i$ is 
consistent with~$P$. I.e., $\act_i\in P_i(r_i(t))$. 

\paragraph{Scheduling and Fairness:}\quad
None of our results before \Cref{sec:implOpTSO} depend on fairness assumptions, but protocols designed for TSO usually do. 
These involve the scheduling of processes and the propagation of messages from buffer stores. 
The first is that in every TSO run, every enabled $\prop$ action is eventually performed. Thus, if $\buf_i$ becomes nonempty at some~$t$ then $d_i$ will perform a $\prop$ action at a time $t'>t$. Notice that since a fence operation~$\F$ by~$i$ will  block until $\buf_i$ is empty, this assumption ensures that the fence will eventually be enabled. 
The second fairness condition is that a process that has an enabled action to perform, according to its protocol, will eventually move.

\paragraph{Nodes and Tags:}\quad 
Roughly speaking, we wish to state causal connections analogous to Lamport's happens-before between pairs of points along the timelines of (possibly different) processes. Given the subtle properties of TSO, we have added as distinct agents the local dispatchers~$d_i$ that are in charge of propagating messages from the store buffers. Considering points on their timelines as well as points on those of the processes will play an important role in allowing us to define the new relations. For an agent $b\in\Ag$ and a time~$t\ge 0$, we will define a \emph{node} $\theta=\node{b,t}$ that will serve to refer to the point at time~$t$ on~$b$'s timeline. With respect to a given run~$r$ and node~$\theta=\node{b,t}$, we will denote by $\theta.\alpha$ the action that agent~$b$ performs at time~$t$ in~$r$. (This can, in particular, be the null action~$\bot$.)

Another tool that will be used in defining our framework is the ability to relate values in the store buffer with the action that propagates them, and read operations that read them from the buffer or from memory. To this end, we define the auxiliary
notion of a tag, which consists of a process, and index, and a value as follows: 
\begin{definition}[$\alpha.tag$]
The $k$'th write event $\W(x,v)$ by a process~$i$ in a given run will have the field $\W.tag\triangleq\node{i,k}$.   Similarly, we associate a $tag$ field with $\prop(x,v)$, $\RfB(x,v)$ and $\RfM(x,v)$ events. Namely, we  have $\prop.tag\triangleq\node{i,k}$ (resp. $\RfB.tag\triangleq\node{i,k}$, $\RfM.tag\triangleq\node{i,k}$) if the corresponding action propagates (resp. reads) the item created by~$i$'s $k^{th}$ write operation in $r$. In the case of a read this will, of course, return only the value $v\in\Val$ written by this write. 
\end{definition}

\section{The Occurs-before Relation in TSO}
\label{sec:occurs-before}
In asynchronous message-passing systems, the only way that a protocol can ensure that an action at one process will take place later than a specific event that occurs at a second process, is by forcing the action to be delayed until a message chain from the second process has been constructed. This is Lamport's \emph{happens-before} relation (\cite{Lamclocks}). 
Our purpose now is to define a relation that similarly captures the timing information that may be available to processes in the TSO model. Notice that in the TSO model process scheduling and the timing of propagation is asynchronous. Consequently, processes do not have direct information about the timing and order of events. Further complicating matters is the fact that distinct processes can read different values of a variable~$x$ \emph{at the same time}. To mitigate this, TSO provides 
synchronization primitives such as fences and read-modify-write actions. 
Our relation, which we call \emph{occurs-before} will also involve creating a chain among points on the timelines of different processes.%
\footnote{We chose a new name for this relation, because ``happens before'' has become synonymous with message chains, and the chains created in the case of TSO look very different from message chains.} We  proceed as follows:
\begin{definition}[TSO occurs before]\label{def:obTSO}
    Let~$t<t'$, let $i,j\in\Pi$ and let $b,c\in \Ag$. We define the binary \emph{occurs before} relation $\ob_r$ between nodes in $r$  as follows:
    \begin{enumerate}
        \item\label{it:local} (locality) 
        $\node{b,t}\ob_r\node{b,t'}$ for every agent~$b$.
          \item (actions involving the buffer) \label{it:defhbprocprop} $\node{i,t}\ob_r\node{d_i,t'}$ if \label{it:defhbprocenv}
        a write at $\node{i,t}$ or an item that is read from $i$'s buffer at $\node{i,t}$ is propagated at $\node{d_i,t'}$, more formally: $\node{i,t}.\alpha\in\{\W,\RfB\}$,  $\node{d_i,t'}.\alpha=\prop$ and $\node{i,t}.\alpha.tag=\node{d_i,t'}.\alpha.tag$.

             \item \label{it:samevar} (memory accesses to a given variable) $\node{b,t}\ob_r\node{c,t'}$ if $\node{b,t}.\alpha$ and $\node{c,t'}.\alpha$ are both memory access actions of the same variable \textbf{\underline{except}} if 
             \begin{enumerate}
                 \item $\node{b,t}.\alpha$ and  $\node{c,t'}.\alpha$ are both $\RfM$ actions, or if 
                 \item \label{it:propRfM}  $b=d_i$, $c=i$,  $\node{d_i,t}.\alpha=\prop$ and $\node{i,t'}.\alpha=\RfM$.
             \end{enumerate}
        \item\label{it:fences} $\node{d_i,t}\ob_r\node{i,t'}$ if $\node{d_i,t}.\alpha=\prop$ and $\node{i,t'}.\alpha\in\{\F,\RMW\}$. 
        \item\label{it:transitive} (transitivity) 
        $\theta\ob_r\theta'$ if $\theta\ob_r\hat\theta$ and $\hat\theta\ob_r\theta'$ for some node $\hat\theta$.
        \end{enumerate}
        
\end{definition}

Notice that if $\theta\ob_r\theta'$ then these nodes must be connected by a chain  whose links are ``base steps'' obtained by clauses~\ref{it:local} to~\ref{it:fences}. The links are stringed into a chain by the transitivity clause~\ref{it:transitive}. 
Roughly speaking, the relation is intended to capture information that may be available to the processes regarding the ordering of events. Thus, for example,  exception (a) in clause~\ref{it:samevar} is motivated by the fact that if two processes read the same item from a given variable in memory, then the order among their reads will never be observable to the processes unless some event occurs in between the two events. Similarly, exception (b) is motivated by the fact that the information a process obtains when it reads a value that it has written does not by itself determine whether or not it has been propagated to memory by the time of the read. We remark that $\F$ and $\RMW$ actions are a signal that a process receives from its store buffer, implying that propagations have taken place. Clause~\ref{it:fences} accounts for this signal. 
Two immediate consequences of \Cref{def:obTSO} are:  

\begin{observation}\label{obs: ob occurs before}
If $\node{b,t}\ob_r\node{c,t'}$ then $t<t'$.
\end{observation}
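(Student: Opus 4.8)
The plan is to prove the claim by induction on the structure of the derivation of $\node{b,t}\ob_r\node{c,t'}$, equivalently on the length of the occurs-before chain connecting the two nodes. As noted immediately after \Cref{def:obTSO}, every instance of $\ob_r$ is obtained by stringing together ``base steps'' from clauses~\ref{it:local}--\ref{it:fences} via the transitivity clause~\ref{it:transitive}, so this induction exhausts all cases.

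For the base case, I would observe that each of clauses~\ref{it:local}--\ref{it:fences} is stated under the standing assumption ``$t<t'$'' that opens \Cref{def:obTSO}. Hence whenever $\node{b,t}\ob_r\node{c,t'}$ holds by a single application of one of these clauses, the required inequality $t<t'$ is immediate from the definition itself, irrespective of which clause produced the link.

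For the inductive step, suppose $\node{b,t}\ob_r\node{c,t'}$ is obtained via the transitivity clause~\ref{it:transitive}, so that there is an intermediate node $\hat\theta=\node{\hat b,\hat t}$ with $\node{b,t}\ob_r\hat\theta$ and $\hat\theta\ob_r\node{c,t'}$, each derivable by a strictly shorter chain. Applying the induction hypothesis to these two sub-derivations gives $t<\hat t$ and $\hat t<t'$, and transitivity of the usual strict order on $\N$ then yields $t<t'$, completing the induction.

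I do not expect any genuine obstacle here: the whole content of the argument is the single observation that the four base clauses of \Cref{def:obTSO} are quantified only over pairs of nodes with $t<t'$, after which the result follows from the routine fact that $<$ on the naturals is transitive. The only point that warrants a word of care is making explicit that the ``Let $t<t'$'' preamble of the definition applies uniformly to \emph{all} of clauses~\ref{it:local}--\ref{it:fences}, so that no base step can link a node to an earlier or simultaneous one; once this is granted, the induction is immediate.
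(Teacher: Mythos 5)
Your proof is correct and follows essentially the same route as the paper: the paper's own justification likewise notes that the ``Let $t<t'$'' preamble forces every base link (clauses~\ref{it:local}--\ref{it:fences}) to go from an earlier time to a strictly later one, and that the transitivity clause~\ref{it:transitive} then preserves this via transitivity of $<$ on $\N$. Your version merely makes the implicit induction on the derivation explicit, which is a harmless (and arguably welcome) formalization of the same argument.
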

To see why this is true, notice that all base cases of the definition (\ref{it:local}--\ref{it:fences}) relate a node at a time $t_1$ to a node at a time $t_2>t_1$, while the final clause (\ref{it:transitive}) transitively closes the relation. 
\Cref{obs: ob occurs before} justifies saying that if $\theta\ob_r\theta'$ then $\theta$ \emph{occurs before}~$\theta'$ in~$r$.%
\footnote{We have not called the relation \emph{happens before} because the latter is very closely associated with message chains, following~\cite{Lamclocks}, whereas $\ob_r$ is not.}
This observation implies that $\ob_r$ is irreflexive, and clause (\ref{it:transitive}) shows that it is transitive.
It follows that $\ob_r$ is a strict partial order relation on the nodes of~$r$. 

\begin{observation}\label{obs:si-iimpliesfence}
    If there is $\node{d_i,\cdot}\ob\node{i,t}$ obtained using a base case in \Cref{def:obTSO}, then $\node{i,t}.\alpha\in\{\F,\RMW\}$.
\end{observation}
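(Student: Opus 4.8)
The plan is to proceed by a case analysis over the four base clauses (\ref{it:local}--\ref{it:fences}) of \Cref{def:obTSO}, checking for each whether it can produce a step whose source is a dispatcher node $\node{d_i,\cdot}$ and whose target is a process node $\node{i,t}$, and, when it can, determining what the target action must be. Since the hypothesis stipulates that the step is obtained by a base case, clause~\ref{it:transitive} (transitivity) is excluded from the outset.

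First I would dispose of the two clauses that cannot connect $d_i$ to $i$ at all. Clause~\ref{it:local} (locality) relates two nodes of the \emph{same} agent, and since $d_i$ and $i$ are distinct agents in $\Ag$, it cannot yield a step from $\node{d_i,\cdot}$ to $\node{i,t}$. Clause~\ref{it:defhbprocprop} relates a process node $\node{i,\cdot}$ to its dispatcher node $\node{d_i,\cdot}$, i.e., it is directed from $i$ to $d_i$, the opposite direction; hence it too is ruled out.

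This leaves clauses~\ref{it:samevar} and~\ref{it:fences}. Clause~\ref{it:fences} is immediate: it directly requires $\node{i,t}.\alpha\in\{\F,\RMW\}$, which is exactly the desired conclusion. The substantive case is clause~\ref{it:samevar}. Here I would use the fact that the dispatcher $d_i$ performs only the action $\prop$, so the source action is necessarily $\prop$ (which is indeed a memory access), while the memory-access actions available to a process $i$ are exactly $\RMW$ and a read-from-memory $\RfM$ (writes and $\RfB$'s are not memory accesses). If the target action were $\RfM$, we would be precisely in the configuration of exception~(b) of clause~\ref{it:samevar} ($b=d_i$, $c=i$, source $\prop$, target $\RfM$), which is explicitly excluded; so clause~\ref{it:samevar} cannot produce such a step when the target is $\RfM$. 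The only remaining possibility is $\node{i,t}.\alpha=\RMW$, again yielding the desired conclusion.

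The only point requiring care---the ``main obstacle,'' though it is minor---is the correct bookkeeping of exception~(b) in clause~\ref{it:samevar}: one must observe that a dispatcher's sole memory access is $\prop$ and that a process's only non-$\RMW$ memory access is $\RfM$, so the $\RfM$ target is exactly the case removed by the exception, leaving $\RMW$ as the sole surviving option. Combining the surviving cases (clause~\ref{it:fences} giving $\F$ or $\RMW$, and clause~\ref{it:samevar} giving $\RMW$) establishes that $\node{i,t}.\alpha\in\{\F,\RMW\}$ in every base case.
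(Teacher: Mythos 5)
Your proof is correct and coincides with what the paper intends: the paper states this observation without proof, as an immediate consequence of \Cref{def:obTSO}, and your exhaustive case analysis over the base clauses---locality and clause~\ref{it:defhbprocprop} ruled out by agent identity and direction, clause~\ref{it:samevar} forced to $\RMW$ because the dispatcher's only memory access is $\prop$ and the $\RfM$ target is exactly what exception~(b) excludes, and clause~\ref{it:fences} giving $\{\F,\RMW\}$ directly---is precisely the implicit justification. There are no gaps.
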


What makes $\ob_r$ interesting and useful is the fact that, in a precise sense, it covers all the information that may be available to the processes regarding the ordering of events. Indeed, we will show that, roughly speaking, if an event~$e$ in~$r$ is not related by $\ob_r$ to another event~$e'$ in~$r$, then there is a run~$r'$ that is indistinguishable from~$r$ in which~$e'$ takes place strictly before~$e$ does. We now turn to prove a more general result which will imply this fact.
\section{Delaying the Future in TSO}
Our ultimate goal is to use our analysis, and specifically the occurs-before relation, to be able to prove lower bounds and necessary conditions on the structure of protocols that solve problems of interest. Roughly speaking, an important first step towards this will be to show that processes cannot be guaranteed that events are ordered in a particular fashion unless an occurs-before chain is formed between them. In general, if a process has the same local state at two points of different executions at which it is scheduled to move, then the protocol will specify the same action to be performed at both points. Consequently, events that are not reflected in the local state do not affect the behavior of the process. This motivates the following definition of $\Past_r(S)$, which in broad terms consist of all nodes that are in the causal past of the nodes of~$S$.

\begin{definition}[The past]
For a set of nodes $S$ in a run~$r$, we define 
    \[\Past_r(S)\triangleq\{\theta:\,\theta\ob_r\theta'\mbox{~for some }\theta'\in S\}\quad\mbox{and}\quad
    \Pastrp(S)~\triangleq~\Past_r(S)\,\cup\,S.\]
 
\end{definition}
See \Cref{fig:pasts} for an illustration.

Roughly speaking, the information available to a process at a given point is determined by its local state there. A process is unable to distinguish between runs in which it passes through the same sequence of local states. 
We will find it useful to consider when two runs cannot \emph{ever} be distinguished by any of the processes. 

\begin{definition}[Local Equivalence]
\label{def:loc-eq}
    Two runs $r$ and $r'$ are called \emph{locally equivalent}, denoted by  $r\loc r'$,  if for every process $j$, a local state $\ell_j$ of~$j$ appears in~$r$ iff $\ell_j$ appears in~$r'$. 
\end{definition}
Recall that the local state of a process~$i$ consists of its local history so far. Consequently, if two runs are locally equivalent, then every process starts in the same state, performs the same actions 
all in the same order, in both runs.

We will state and prove a slightly stronger property than discussed above. Namely, we will show that fixing a finite%
\footnote{All sets of nodes mentioned in the sequel are assumed to be finite.} 
set of nodes~$S\subseteq \Ag\times\mathbb{N}$ in a run~$r$, there is a locally equivalent run in which all actions that are performed in~$r$ at nodes that do not occur before nodes in~$S$ can be delayed by an arbitrary amount. In particular, this enables fixing the timing of nodes of~$S$ and moving such actions so that they all take place later in real time than all nodes in~$S$. 
Before we can state and prove this theorem we need to make a few technical definitions:

For every agent~$b\in\Ag$, we will be interested in identifying the first node on~$b$'s timeline that is not in the causal past:

\begin{figure}
    \centering
    \includegraphics[width=\linewidth]{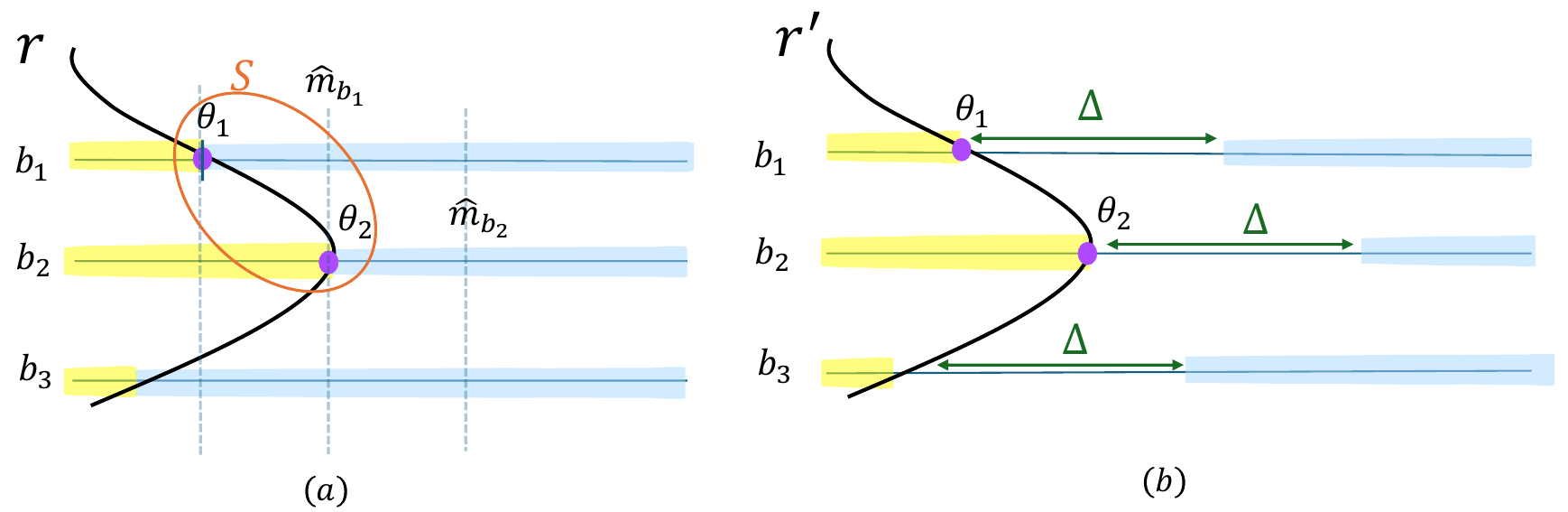}
    \caption{$S=\{\theta_1,\theta_2\}$ and $\Past_r(S)$ is composed of all the nodes in the yellow part. $\Past^+_r(S)$ contains in addition the nodes on the black curve, delimiting the nodes in past of $S$ from the other nodes. (b) represents the run $r'$ guaranteed to exist by \Cref{thm:DtFTSO}.}
    \label{fig:pasts}
\end{figure}
\begin{notation}
    Fix a set~$S$ of nodes and a run $r$. For each $b\in \Ag$ we define%
    \footnote{While the value of~$\hatm_b$ depends on~$S$, the set~$S$ is typically clear from context, so we write $\hatm_b$ rather than $\hatm^S_b$ for ease of exposition.} the time
    \item \begin{equation*}
       \hat{m}_b~\triangleq~
    	\begin{cases}    		
    		
    	0 & \node{b,0}\notin \Pastrp(S)\\
        t& t>0, ~\node{b,t-1}\in\Pastrp(S),~\text{and}~ \node{b,t}\notin \Pastrp(S) 
    	    	\end{cases}     
    \end{equation*}
   
\end{notation}

    We are now ready to state and prove a fundamental result concerning reordering of actions in TSO: 
\begin{restatable}[Delaying the future in TSO]{theorem}{primethm}\label{thm:DtFTSO}
   Let~$r$ be a TSO run~$r$ of a protocol~$P$, let $S$ be a set of nodes of~$r$, and let $\Delta\ge 0$. 
    Then there exists a TSO run~$r'\approx r$ of the same protocol~$P$ such that: 
    
\begin{itemize}
    \item[(a)] Every agent~$b\in\Ag$ performs exactly the same actions in the same order in both runs;
     moreover, an action that $b$ performs at time~$t$ in~$r$ is performed in~$r'$ at time~$t$ if $\node{b,t}\in\Past_r(S)$, and it is performed at time $t+\Delta$ otherwise.     
    \item[(b)] Every process~$j$'s local states are shifted accordingly:
    \begin{equation*}
    	{r}_j(t)=
    	\begin{cases}    		
    		{r'}_j(t)& \textrm{for all~}\node{j,t}\in\Past_r^+(S)\\
    	{r'}_j(t+\Delta) & \textrm{otherwise}.
        \end{cases}
         \end{equation*}
\end{itemize}
 \end{restatable}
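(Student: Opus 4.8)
The plan is to construct $r'$ explicitly as the ``$\Delta$-shifted'' version of $r$ and then verify that it is a legal run of $P$, locally equivalent to $r$, with the claimed timing. Concretely, along each agent $b$'s timeline I keep the actions at nodes in $\Past_r(S)$ at their original times and delay every later action by $\Delta$, inserting a block of $\Delta$ null rounds at the boundary $\hatm_b$. Two structural facts make this split well behaved. First, $\Past_r(S)$ is \emph{downward closed} along each timeline: by the locality clause $\node{b,s}\ob_r\node{b,t}$ whenever $s<t$, so if $\node{b,t}\in\Past_r(S)$ then $\node{b,s}\in\Past_r(S)$ for all $s<t$; hence the ``keep/shift'' partition of $b$'s nodes is exactly the prefix/suffix split at $\hatm_b$, and the shift map $\phi_b(t)=t$ for $t<\hatm_b$ and $\phi_b(t)=t+\Delta$ for $t\ge \hatm_b$ is monotone. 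Second, \emph{no $\ob_r$-edge leads from a shifted node into a kept node}: if $\node{b,t}\notin\Past_r(S)$ and $\node{b,t}\ob_r\node{c,t'}$ with $\node{c,t'}\in\Past_r(S)$, then transitivity (clause~\ref{it:transitive}) gives $\node{b,t}\ob_r\theta'$ for some $\theta'\in S$, contradicting $\node{b,t}\notin\Past_r(S)$.

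The technical heart is a \textbf{commutation (diamond) lemma}. I first serialize each joint action into a sequence of single-agent events; this is sound because all actions of one joint action act on $r(t)$ at the same time $t$, so they are pairwise $\ob_r$-incomparable (\Cref{obs: ob occurs before}) and, by the model, pairwise non-conflicting, hence order-independent. The lemma states: if two \emph{adjacent} events $e$ (agent $b$) and $e'$ (agent $c$), $b\ne c$, carry no base-case $\ob_r$-edge $e\ob_r e'$, then they commute---either order is enabled and yields the same global state, and each agent records the same action. The enabling observation is that any two conflicting actions are joined by a clause-\ref{it:samevar} $\ob_r$-edge oriented by their times, so the absence of an $\ob_r$-edge already forces non-conflict. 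What remains is to check, by a case analysis over action types, that the non-commuting adjacent pairs are \emph{precisely} those carrying a base edge: a $\W$ by $i$ and the $\prop$ consuming it fail to commute and are joined by clause~\ref{it:defhbprocprop}; a $\prop$ by $d_i$ followed by an $\F$ or $\RMW$ of $i$ fails to commute and is joined by clause~\ref{it:fences}; two memory accesses to a shared variable are governed by clause~\ref{it:samevar}, whose two exceptions (two $\RfM$'s, and the same-process $\prop/\RfM$ pair) are exactly the commuting cases. All other pairs touch disjoint portions of $\langle\Mem,\buf\rangle$ and commute trivially.

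With the diamond lemma in hand, I assemble $r'$ as follows. Both the time order of $r$ (a linear extension of the $\ob_r$ base relation by \Cref{obs: ob occurs before}) and the target order induced by $\phi$ are linear extensions of the same base relation: the target respects every base edge because kept$\to$kept, shifted$\to$shifted, and kept$\to$shifted edges keep their orientation under the monotone $\phi$, while shifted$\to$kept edges do not exist by the second structural fact above. Two linear extensions of a partial order differ by a finite sequence of transpositions of adjacent incomparable elements; applying the diamond lemma to each such transposition transforms the serialization of $r$ into that of the target without changing any agent's sequence of actions or any reachable global/local state. Padding the result with the $\Delta$ null rounds at each $\hatm_b$ realizes the exact timestamps $\phi_b(t)$, giving part~(a); since null actions are always enabled and leave the protocol constraint vacuous, and every non-null action is issued from the same local state as in $r$, the outcome is a run of $P$ with $r'\loc r$. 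Part~(b) follows because each process passes through the identical sequence of local states, now stamped by $\phi$; the local state is \emph{constant} across an agent's idle gap, which is exactly why (a) may split on $\Past_r(S)$ while (b) splits on $\Pastrp(S)$---both representatives name the same state.

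I expect the main obstacle to be the diamond lemma's case analysis on the buffer-sensitive actions ($\prop$ versus $\W$, $\prop$ versus $\F/\RMW$, and $\RfM$/$\RfB$ against $\prop$ and $\RMW$), where one must verify that the $\ob_r$ base clauses track non-commutativity exactly---neither forbidding a swap that is in fact safe nor permitting one that would alter an agent's observed value or an action's enabledness. The global-to-local gap (that swapping only $\ob_r$-incomparable adjacent events never disturbs a far-away precondition) is then discharged uniformly by the linear-extension argument rather than by re-examining the whole state at each step.
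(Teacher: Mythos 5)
Your proposal is correct in its essentials but follows a genuinely different route from the paper. The paper builds $r'$ directly---each agent replays its $r$-actions at times displaced by a $\mathtt{shift}_\Delta$ operator---and then proves, by induction on time, that every action's precondition still holds at the corresponding point of $r'$ (separate claims for $\prop$, $\F$, $\RMW$, and reads, the last showing that the same tag is returned even when an $\RfM$ of $r$ turns into an $\RfB$ of $r'$). You instead factor the argument through a commutation (Mazurkiewicz-trace-style) lemma: serialize joint actions, show that adjacent events not joined by a base $\ob_r$-edge commute, observe that the target shifted order is another linear extension of the base relation, and convert one serialization into the other by adjacent transpositions. Your case analysis isolates exactly the load-bearing cases: that the two exceptions in clause~\ref{it:samevar} of \Cref{def:obTSO} (two $\RfM$'s; the same-process $\prop$/$\RfM$ pair) are precisely the commuting pairs is the same subtle point the paper singles out in its proof sketch and in its appendix claim on reads. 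What your approach buys is modularity: enabledness of far-away actions is preserved automatically because each swap fixes the global state outside the swapped pair, whereas the paper must re-derive preconditions (empty buffers, memory contents) inductively at every round. What the paper's approach buys is never having to reason about intermediate hybrid executions at all.

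Two points need repair before your argument is complete. First, the fact that two linear extensions of a poset differ by a finite sequence of adjacent transpositions holds for \emph{finite} posets, but a run is infinite. You must add the observation that $\Pastrp(S)$ is finite (by \Cref{obs: ob occurs before}, every node of $\Past_r(S)$ has time below the maximal time occurring in the finite set $S$), hence only finitely many pairs are inverted between the two orders---each inversion pairs a kept node with a shifted node lying at most $\Delta$ before it---so the transposition process terminates, and beyond the last kept node the two serializations agree up to a uniform shift. Second, your diamond lemma is stated relative to the events of $r$, but after some swaps an event's type may have changed (an $\RfM$ of $r$ appears as an $\RfB$ in an intermediate sequence), so the lemma must be carried as an invariant over all intermediate sequences: $r$-incomparability of a pair must be shown to imply commutation of the pair \emph{as it appears now}, with the same recorded action and the same tag. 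Your cases do support this (a read that has mutated from $\RfM$ to $\RfB$ only commutes more freely, and the mutation can only happen in the same-process exception where the tag is preserved), but the invariant needs to be stated explicitly and maintained through the induction on transpositions; as written, each application of the lemma silently presupposes the conclusion of the previous ones.
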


\begin{proofsketch}
To show (a) and (b), we define $r'$ as follows. 
$r'(0)\triangleq r(0)$, and every agent~$b$ will act as follows: 
For times $t<\hatm_b$, its action~$\alpha$ at time~$t$ in~$r'$ will be the same as its at time~$t$ in~$r$ if this action is enabled and the agent's local state is the same at $r'(t)$, and will be $\noop$ otherwise; 
for $\hatm_b\le t<\hatm_b+\Delta$, agent~$b$ will perform~$\noop$ at time~$t$ in~$r'$; and finally for 
time $t\ge \hatm_b+\Delta$ agent~$b$ will perform in~$r'$ the action it performs at time~$t-\Delta$ in~$r$ if that is enabled at $r'(t)$, and $\noop$ if it is not enabled.  
The proof then proceeds to show by induction on~$k$ that 
(i) the local state $r'_i(k)$ of every process~$i$ at time~$k$  in~$r'$ is the same as its local state at the corresponding time ($r_i(k)$ or $r_i(k-\Delta)$ depending on whether $k<\hatm_i$), establishing 
 that part (b) holds up to time~$k$, and 
 (ii) that the action it performs at the corresponding time in~$r$, if any, is enabled at $r'(k)$; and similarly
 (iii) for every local dispatcher $d_i$, the buffer $\buf_i$ is nonempty and the oldest value each one contains are identical
 at $r'(k)$ if 
 $d_i$ performs $\prop$ at the corresponding time (depending on whether $k<\hatm_{d_i}$) in~$r$. 
It follows that (a) holds up to time~$k$ as well. Finally, we show using part (b) that $r'$ is locally equivalent to $r$.
A full proof with complete details appears in the Appendix.

One of the most subtle cases, for example, arises when a $\prop$ action by $d_i$ is shifted forward by $\Delta$ rounds in~$r'$ while a read from memory of the propagated tag by process~$i$ is not. Hence, by definition of~$r'$, the read action is performed in~$r$ at a node of $\Past^+_r(S)$, while the $\prop$ is performed at a node that is not in $\Past^+_r(S)$. If the shift is such that the $\prop$ occurs after the read in~$r'$, then the $\RfM$ of~$x$ in~$r$  will become an $\RfB$ in~$r'$.  
Clearly, the $\RfB$ must read a tag that~$i$ has written, so we need to show that the tag read by the $\RfM$ in~$r'$ was the same one written by~$i$. This is guaranteed by the definition of $\ob$. We proceed as follows. Suppose the read tags are not the same. Then a different tag is propagated  to~$x$ in~$r$ (by some $d_j\ne d_i$)  after the $\prop$ by $d_i$ and before~$i$ reads.  Since all three actions are memory accesses to~$x$, we have by  clause~3 of \Cref{def:obTSO} that $d_i$'s $\prop$ occurs before $d_j$'s $\prop$ in~$r$ and the latter occurs before~$i$'s $\RfM$. But then both $\prop$'s occur at nodes of $\Past^+_r(S)$, in~$r$, contradicting the assumption that $d_i$'s $\prop$  was shifted forward in time in~$r'$. It follows that~$i$ reads the same tag in both runs, as claimed.
\end{proofsketch}
\begin{toappendix}
To simplify the case analysis in the proof of \Cref{thm:DtFTSO}, we define:
\begin{definition}[The $\mathtt{shift}$ operator]
For $k,t,\Delta\ge 0$
we define 
\begin{equation*}
    	\shift{t,k} ~\triangleq~
    	\begin{cases}    		
    		t& t\leq k\\
    	t+\Delta & t\geq k+1
    	    	\end{cases}     
    \end{equation*}
\end{definition}
We now restate and prove \Cref{thm:DtFTSO}.

\primethm*

For the proof purposes we first introduce some definitions and useful claims.
The fact that the set $\Past_r(S)$ is backward closed under $\ob_r$ immediately implies that the shift operator as we will use it does not reorder nodes that stand in the $\ob_r$ relation:
\begin{claim}\label{claim:order}
            Fix~$S$, a run~$r$ and nodes $\theta_1=\node{b_1,t_1}$ and $\theta_2=\node{b_2,t_2}$. 
            If $\theta_1\ob_r\theta_2$
             then $\shift{t_1+1,\hatm_{b_1}}<\shift{t_2+1,\hatm_{b_2}}$.
          \end{claim}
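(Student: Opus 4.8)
The plan is to reduce the claim to a short case analysis on whether each endpoint lies in $\Pastrp(S)$, after first rewriting the two shift values in terms of this membership. The starting point is the observation that $\hatm_b$ was defined precisely as the first time at which $b$'s timeline leaves $\Pastrp(S)$, and that this is meaningful because $\Pastrp(S)$ is downward closed along each individual timeline: by the locality clause (clause~\ref{it:local}), $\node{b,t}\ob_r\node{b,t'}$ whenever $t<t'$, so if $\node{b,t'}\in\Pastrp(S)$ then $\node{b,t}$ lies in $\Pastrp(S)$ as well (via the definition of $\Past_r$ and transitivity, clause~\ref{it:transitive}). I would first record that, consequently, $\node{b,t}\in\Pastrp(S)$ holds exactly when $t<\hatm_b$, i.e. $t+1\le\hatm_b$. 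Feeding this equivalence into the definition of the $\shift{\cdot}$ operator yields the convenient reformulation
\[
\shift{t+1,\hatm_b}=
\begin{cases}
t+1 & \node{b,t}\in\Pastrp(S)\\
t+1+\Delta & \node{b,t}\notin\Pastrp(S),
\end{cases}
\]
so the value is unshifted for past nodes and shifted by $\Delta$ for non-past nodes.

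Next I would isolate the two facts that drive the argument. First, by \Cref{obs: ob occurs before}, the hypothesis $\theta_1\ob_r\theta_2$ forces $t_1<t_2$. Second, and this is the real engine, $\Pastrp(S)$ is backward closed under $\ob_r$: if $\theta_1\ob_r\theta_2$ and $\theta_2\in\Pastrp(S)$, then $\theta_1\in\Pastrp(S)$. This follows directly from the definition of $\Past_r$ together with transitivity: if $\theta_2\ob_r\theta'$ for some $\theta'\in S$ then $\theta_1\ob_r\theta'$ as well, placing $\theta_1$ in $\Past_r(S)$; and if $\theta_2\in S$, then $\theta_1\ob_r\theta_2$ already places $\theta_1$ in $\Past_r(S)\subseteq\Pastrp(S)$.

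With these in hand the claim follows by a four-way split on membership of $\theta_1$ and $\theta_2$ in $\Pastrp(S)$, using the reformulation above. When both nodes lie in $\Pastrp(S)$ the two shift values are $t_1+1$ and $t_2+1$, and $t_1<t_2$ settles it; when $\theta_1\in\Pastrp(S)$ but $\theta_2\notin\Pastrp(S)$ the values are $t_1+1$ and $t_2+1+\Delta$, which is only larger; and when neither node lies in $\Pastrp(S)$ the values are $t_1+1+\Delta$ and $t_2+1+\Delta$, again ordered by $t_1<t_2$. The one potentially troublesome case, $\theta_1\notin\Pastrp(S)$ with $\theta_2\in\Pastrp(S)$, would pit a shifted value $t_1+1+\Delta$ against an unshifted $t_2+1$ and could fail for large $\Delta$; but this is exactly the case ruled out by backward closure. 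I therefore expect the main (and essentially only) obstacle to be establishing backward closure, which guarantees that the ``dangerous'' case never arises; everything that remains is arithmetic over $t_1<t_2$ and $\Delta\ge 0$.
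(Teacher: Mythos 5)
Your proposal is correct and follows essentially the same route as the paper's own proof: both rest on $t_1<t_2$ (from \Cref{obs: ob occurs before}) plus backward closure of $\Pastrp(S)$ under $\ob_r$, and then finish with a case split on membership of $\theta_1,\theta_2$ in $\Pastrp(S)$, where the only dangerous case ($\theta_1\notin\Pastrp(S)$, $\theta_2\in\Pastrp(S)$) is exactly the one excluded by backward closure. If anything, your write-up is slightly more complete than the paper's, which asserts backward closure ``by definition'' and states only one direction of the equivalence $\node{b,t}\in\Pastrp(S)\iff t+1\le\hatm_b$, whereas you justify both via locality and transitivity.
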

        \begin{proof}
        By \Cref{obs: ob occurs before}
        we have $t_1<t_2$ (*) Observe in addition that for every agent $b$ and time $t$, if $\node{b,t}\in\Past^+(S)$ then $t+1\leq \hatm_b$.
        We now reason by cases:
        \begin{itemize}
            \item If $\theta_2\in\Past^+(S)$ then by definition, $\theta_1\in\Past^+(S)$. So, $\shift{t_1+1,\hatm_{b_1}}=t_1+1$ and $\shift{t_2+1,\hatm_{b_2}}=t_2+1$.
            \item If $\theta_2\notin\Past^+(S)$ and $\theta_1\notin\Past^+(S)$, then $\shift{t_1+1,\hatm_{b_1}}=t_1+1+\Delta$ and $\shift{t_2+1,\hatm_{b_2}}=t_2+1+\Delta$.
            \item If $\theta_2\notin\Past^+(S)$ and $\theta_1\in\Past^+(S)$, then $\shift{t_1+1,\hatm_{b_1}}=t_1+1$ and $\shift{t_2+1,\hatm_{b_2}}=t_2+1+\Delta$.
        \end{itemize}
        In every case, together with (*), we have that $\shift{t_1+1,\hatm_{b_1}}<\shift{t_2+1,\hatm_{b_2}}$.
        \end{proof}

\begin{claim}\label{claim:shift-1welldefined}
     $\shift{t+1,\hatm_b}-1\geq 0$
     for all $t\geq 0$ and for all agents $b$.
\end{claim}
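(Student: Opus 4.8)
The plan is to unfold the definition of $\shift{\cdot,\cdot}$ and verify the inequality $\shift{t+1,\hatm_b}-1\ge 0$ in each of its two cases. Recall that $\shift{s,k}$ equals $s$ when $s\le k$ and equals $s+\Delta$ when $s\ge k+1$; here we apply it with $s=t+1$ and $k=\hatm_b$. Since $t\ge 0$ we always have $s=t+1\ge 1$, and $\Delta\ge 0$ by hypothesis, so the goal reduces to checking that in each branch the shifted value is at least $1$.

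First I would treat the case $t+1\le \hatm_b$. Here $\shift{t+1,\hatm_b}=t+1$, and since $t\ge 0$ we get $\shift{t+1,\hatm_b}=t+1\ge 1$, whence $\shift{t+1,\hatm_b}-1=t\ge 0$ as required. Second I would treat the case $t+1\ge \hatm_b+1$, i.e.\ $t+1>\hatm_b$. Here $\shift{t+1,\hatm_b}=t+1+\Delta$, and since $t\ge 0$ and $\Delta\ge 0$ we obtain $\shift{t+1,\hatm_b}-1=t+\Delta\ge 0$. Since the two cases $t+1\le\hatm_b$ and $t+1\ge\hatm_b+1$ are exhaustive (as $\hatm_b$ is a natural number), the inequality holds for every $t\ge 0$ and every agent $b$, which completes the argument.

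This is a purely arithmetical unfolding and I do not anticipate any genuine obstacle; the only point requiring mild care is to confirm that the two branches of the $\shift$ operator are exhaustive for the argument $s=t+1$, which is immediate because $s$ and $\hatm_b$ are integers so exactly one of $s\le\hatm_b$ or $s\ge\hatm_b+1$ holds. The essential content of the claim is simply that the smallest possible value of $\shift{t+1,\hatm_b}$, attained in the unshifted branch at $t=0$, is $1$, so subtracting $1$ never produces a negative number.
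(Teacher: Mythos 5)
Your proof is correct and follows essentially the same route as the paper: both split on the two branches of the $\mathtt{shift}$ operator (your case $t+1\le\hatm_b$ is exactly the paper's $t<\hatm_b$, and $t+1\ge\hatm_b+1$ is its $t\ge\hatm_b$) and conclude by the same elementary arithmetic using $t\ge 0$ and $\Delta\ge 0$. No gap.
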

 \begin{proof}
     If $t < \hatm_b$, then $\shift{t+1, \hatm_b} = t+1$, so $\shift{t+1, \hatm_b} - 1 = t \geq 0$; and if $t \ge \hatm_b$, then $\shift{t+1, \hatm_b} = t+1 + \Delta$, hence again $\shift{t+1, \hatm_b} - 1 \geq t \geq 0$.
 \end{proof}
    We will use in the proof the following useful notation:
    \begin{notation}
    We denote the contents of process~$i$'s buffer at time $t$ in run~$r$ by $\buf(i,r,t)$, and its elements by $(\buf(i,r,t)[0],\buf(i,r,t)[1],\dots)$. Similarly, $x(r,t)$ denotes the content of variable $x$ in the memory at time $t$ in $r$.
\end{notation}

    \begin{proof}
We shall construct a run ${r'}\approx r$ satisfying, for every process~$j$, agent $b$, all $m\ge 0$:
                     
                      \begin{enumerate}[label=(\arabic*)]
             
                   \item \label{it:i}  $r_j(m)=r'_j(\shift{m,\hatm_{j}})$ for all $m\ge 0$,  
                   \item \label{it:ii} $b$ performs exactly the same actions and reads the same tags in round~$m$ of~$r$ and in round $\shift{m,\hatm_b}$ of~${r'}$, for all $m\ge 1$.
        \end{enumerate}

We construct ${r'}$ as follows. Both runs start  in the same initial state: ${r'}(0)\triangleq r(0)$.
For every agent~$b$ and 
for all $\tilde{m}$ in the range 
$\hatm_b+1\le \tilde{m}\leq \hatm_b+\Delta$ agent $b$ does not move in round $\tilde{m}$ of $r'$. For all $m\ge 1$ and every local dispatcher~$d_i$, if $\node{d_i,m-1}.\alpha=\prop$ and 
$\buf(i,r',\shift{m,\hatm_{d_i}}-1)\neq\epsilon$ then $d_i$ performs a $\prop$ in round $\shift{m,\hatm_{d_i}}$ of $r'$.
Otherwise, $d_i$ does not move in round $\shift{m,\hatm_{d_i}}$ of $r'$. 
For all processes~$j$ and $m> 0$, if (i) $j$ moves in round $m$ of $r$, (ii) ${r'}_j(\shift{m,\hatm_{j}}-1)=r_j(m-1)$ and
(iii) the preconditions of $\node{j,m-1}.\alpha$ in $r$ hold in $r'$ at time $\shift{m,\hatm_{j}}-1$, then $j$ moves in round $\shift{m,\hatm_{j}}$ of $r'$ and $j$ performs the same action $\alpha_j\in P_j(r_j(m-1))$ in round $\shift{m,\hatm_{j}}$ of~${r'}$ as in round~$m$ of~$r$, and otherwise $j$ does not move.
Observe that, by definition, all processes follow the protocol~$P=(P_1,\ldots,P_n)$ in~${r'}$.
\Cref{claim:shift-1welldefined} ensures that all references to time points of the form $\shift{t+1, \hatm_b} - 1$ in our constructions are well-defined and non-negative.

Moreover, observe the following useful property of~${r'}$:
 
\begin{claim}\label{claim:m-1}
 For all $m>0$ and for every process $j$: ${r'}_j(\shift{m,\hatm_{j}}-1)={r'}_j(\shift{m-1,\hatm_{j}})$.
\end{claim}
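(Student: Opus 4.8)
The plan is to prove the identity ${r'}_j(\shift{m,\hatm_{j}}-1)={r'}_j(\shift{m-1,\hatm_{j}})$ by a case analysis on where $m$ sits relative to the threshold $\hatm_j$, exploiting the piecewise definition of the shift operator. The shift is the identity below the threshold and a uniform $+\Delta$ above it, so in the two ``generic'' ranges both $m$ and $m-1$ fall on the same branch, the two time arguments on the left- and right-hand sides coincide outright, and the identity reduces to an equality of states at a single point. The only round where the two arguments genuinely differ is the boundary $m=\hatm_j+1$, and that case is the only one requiring an argument about $r'$ itself.

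First I would treat $m\le\hatm_j$: here $\shift{m,\hatm_j}=m$, and since $m-1\le\hatm_j$ we also have $\shift{m-1,\hatm_j}=m-1$, so $\shift{m,\hatm_j}-1=m-1=\shift{m-1,\hatm_j}$, and both sides are literally ${r'}_j(m-1)$. Symmetrically, for $m\ge\hatm_j+2$ both arguments lie above the threshold, giving $\shift{m,\hatm_j}-1=m-1+\Delta=\shift{m-1,\hatm_j}$, again a single time on both sides. Well-definedness of the index $\shift{m,\hatm_j}-1$ is guaranteed by \Cref{claim:shift-1welldefined}.

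The main step is the boundary case $m=\hatm_j+1$. Here $\shift{m,\hatm_j}-1=\hatm_j+\Delta$, while $\shift{m-1,\hatm_j}=\shift{\hatm_j,\hatm_j}=\hatm_j$, so the goal becomes ${r'}_j(\hatm_j+\Delta)={r'}_j(\hatm_j)$. This is precisely where the construction of $r'$ enters: by definition, every agent $b$ performs $\noop$ in each round $\tilde m$ with $\hatm_b+1\le\tilde m\le\hatm_b+\Delta$. Taking $b=j$, process $j$ does not move in any of the $\Delta$ rounds $\hatm_j+1,\ldots,\hatm_j+\Delta$, and a non-moving process leaves its local state unchanged, so ${r'}_j(\hatm_j)={r'}_j(\hatm_j+1)=\cdots={r'}_j(\hatm_j+\Delta)$, which is exactly the required equality. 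When $\Delta=0$ this range is empty and the boundary case collapses into the arithmetic of the generic cases. Combining the three cases establishes the claim.
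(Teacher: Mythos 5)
Your proof is correct and follows essentially the same route as the paper's: the two generic ranges are handled by the arithmetic of the shift operator, and the boundary case $m=\hatm_j+1$ is resolved by the fact that, by construction of $r'$, process $j$ performs no action in rounds $\hatm_j+1,\ldots,\hatm_j+\Delta$, so its local state is frozen across that interval. The only cosmetic difference is that you split into three cases where the paper merges the two generic ranges into one.
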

\begin{proof}
      We consider two cases: 
\begin{itemize}
\item $m=\hatm_{j}+1$:\quad Recall that, by definition of~$r'$, we have that $j$ does not move in rounds $\hatm_{j}+1\leq m'\leq \hatm_{j}+\Delta$. Consequently, 
${r'}_j(\hatm_{j}+\Delta)={r'}_j(\hatm_{j}+\Delta-1)=\dots={r'}_j(\hatm_{j})$. 
Hence, ${r'}_j(\shift{m,\hatm_{j}}-1)={r'}_j(\shift{\hatm_{j}+1,\hatm_{j}}-1)={r'}_j(\hatm_{j}+1+\Delta-1)={r'}_j(\hatm_{j}+\Delta)={r'}_j(\hatm_{j})={r'}_j(\shift{m-1,\hatm_{j}})$.
    \item $0<m\ne \hatm_{j}+1$: 
    If $m\leq \hatm_{j}$ then by definition of  $\mathtt{shift}_\Delta$ we have  that  \mbox{$\shift{m,\hatm_{j}}=m$} and $\shift{m-1,\hatm_{j}}=m-1=\shift{m,\hatm_{j}}-1$. 
    Similarly, if $m>\hatm_{j}+1$ then $\shift{m,\hatm_{j}}=m+\Delta$ and $\shift{m-1,\hatm_{j}}=m-1+\Delta=\shift{m,\hatm_{j}}-1$. In both cases we obtain that ${r'}_j(\shift{m,\hatm_{j}}-1)={r'}_j(\shift{m-1,\hatm_{j}})$, as desired. \vspace{-3mm}
\end{itemize}
\end{proof}

 We are now ready to prove that~${r'}$ satisfies \ref{it:i} for all processes $j$ and \ref{it:ii} for all agents~$b$. We prove this by induction on~$m\ge 0$.

 \uline{Base:} $m=0$. Observe that $\shift{0,\hatm_{j}}=0$ for all $j$.
 By definition of ${r'}$ we have that $r'_j(0)=r_j(0)$, ensuring \ref{it:i}. 
 \ref{it:ii} holds vacuously because there is no round $m=0$.
 
 \uline{Step:} Let $m>0$ and assume inductively that \ref{it:i} and \ref{it:ii} hold for all agents at all times strictly smaller than~$m$.
We now show that each possible TSO action occurs at $\theta$ in $r$ iff it occurs at the corresponding node of $\theta$ in $r'$.

\textbf{The following claim establishes that propagate actions for a given tag occur at corresponding nodes in both the original run~$r$ and~$r'$.}
\begin{claim}\label{claim:propmem}
For every process $i$, tag $\kappa$ written by $i$ is propagated to variable $x$ in the memory in round $m$ of $r$, (i.e., $\node{d_i,m-1}.\alpha=\prop(x,\kappa)$) iff it is propagated to $x$ in the memory in round $\shift{m,\hatm_{d_i}}$ of~$r'$.
\end{claim}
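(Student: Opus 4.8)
The plan is to reduce \Cref{claim:propmem} to a single statement about the oldest element of process~$i$'s store buffer, and then transfer that statement from~$r$ to~$r'$ using the induction hypothesis together with \Cref{claim:order}. By the construction of~$r'$, the dispatcher~$d_i$ performs a $\prop$ in round $\shift{m,\hatm_{d_i}}$ of~$r'$ exactly when $\node{d_i,m-1}.\alpha=\prop$ in~$r$ and $\buf(i,r',\shift{m,\hatm_{d_i}}-1)\neq\epsilon$, and in that case it removes and propagates the oldest buffer pair. Hence it suffices to prove the following: whenever $\node{d_i,m-1}.\alpha=\prop(x,\kappa)$ in~$r$, the buffer $\buf(i,r',\shift{m,\hatm_{d_i}}-1)$ is nonempty and its oldest element is the pair $\tup{x,\kappa}$ created by~$i$'s write of~$\kappa$. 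This single sub-statement yields both directions of the ``iff'': the forward one directly, and the backward one from the determinism of propagation, since a $\prop$ always removes the oldest element, so agreeing oldest elements in the two runs force the propagated variable and tag to coincide.

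To prove the sub-statement I would first locate the round $m_w$ at which~$i$ writes the pair $\tup{x,\kappa}$ in~$r$. Since this write is propagated at $\node{d_i,m-1}$, clause~2 (actions involving the buffer) of \Cref{def:obTSO} gives $\node{i,m_w-1}\ob_r\node{d_i,m-1}$, and \Cref{obs: ob occurs before} yields $m_w\le m-1<m$, so the induction hypothesis~\ref{it:ii} applies at round~$m_w$: in~$r'$ the same write occurs at round $\shift{m_w,\hatm_i}$. Applying \Cref{claim:order} to the edge $\node{i,m_w-1}\ob_r\node{d_i,m-1}$ gives $\shift{m_w,\hatm_i}<\shift{m,\hatm_{d_i}}$, so the write of~$\kappa$ has already taken place in~$r'$ by time $\shift{m,\hatm_{d_i}}-1$.

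It then remains to argue that in~$r'$ this pair is still present at time $\shift{m,\hatm_{d_i}}-1$ and is the oldest such pair. Because propagation is FIFO and $\tup{x,\kappa}$ is the pair propagated next (at round~$m$) in~$r$, every write preceding it in~$i$'s write order is already propagated in some round $m'<m$ of~$r$; by the induction hypothesis~\ref{it:ii} these propagations occur in~$r'$ at rounds $\shift{m',\hatm_{d_i}}\le\shift{m,\hatm_{d_i}}-1$, so all strictly older writes are already drained in~$r'$ as well. Conversely, $\kappa$ itself is not propagated before round~$m$ in~$r$, so by the same hypothesis it is not propagated at any round $\shift{m',\hatm_{d_i}}$ with $m'<m$, i.e.\ not before time $\shift{m,\hatm_{d_i}}$ in~$r'$. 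Since the shift is strictly monotone, write order is preserved, so any write newer than~$\kappa$ sits behind it in the buffer and cannot overtake it. Thus $\tup{x,\kappa}$ is exactly the oldest element present in $\buf(i,r',\shift{m,\hatm_{d_i}}-1)$, which establishes the sub-statement and hence the claim.

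The step I expect to be the main obstacle is that the write of~$\kappa$ and its propagation live on the timelines of two distinct agents,~$i$ and~$d_i$, whose shift thresholds $\hatm_i$ and $\hatm_{d_i}$ are in general different; a priori this could reorder a write relative to its own propagation in~$r'$, or allow a later write of~$i$ to be drained ahead of~$\kappa$. What rescues the argument is precisely that these two events are joined by an $\ob_r$ edge (clause~2), so \Cref{claim:order} forbids the shift from inverting their order. I would also stress that full buffer equality $\buf(i,r',\shift{m,\hatm_{d_i}}-1)=\buf(i,r,m-1)$ need \emph{not} hold—newer writes may be delayed past this time in~$r'$—so the proof deliberately reasons only about nonemptiness and the identity of the oldest element, which is why restricting attention to~$\kappa$ and to the writes strictly older than it is essential.
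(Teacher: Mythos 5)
Your proposal is correct and follows essentially the same route as the paper's proof: locate the write of $\kappa$ via induction hypothesis~\ref{it:ii}, use clause~\ref{it:defhbprocprop} of \Cref{def:obTSO} together with \Cref{claim:order} to ensure the shifted write still precedes the shifted propagation, and then argue via the inductive assumption that older tags are already drained and $\kappa$ is not prematurely propagated in~$r'$. Your packaging of both directions into one sub-statement about the oldest buffer element, and your explicit FIFO reasoning, merely make explicit what the paper's proof (which phrases the middle step as a contradiction and treats the converse direction as ``immediate'') leaves terse.
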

\begin{proof}

We show both directions:
\begin{itemize}
    \item Let $\kappa$ be the tag propagated in round $m$ of $r$.  We need to show that in $r'$ at time $\shift{m,\hatm_{d_i}}-1$, tag $\kappa$ is at the the head of $i$'s buffer. Denote by $\tilde{m}$ the round at which $\kappa$ has been written by $i$ in $r$. By clause \ref{it:ii} of the inductive assumption we have that $\kappa$ is written by $i$ in $r'$ in round $\shift{\tilde{m},\hatm_{i}}$.
    By \Cref{def:obTSO}, $\node{i,\tilde{m}-1}\ob_r\node{d_i,m-1}$ so by \Cref{claim:order}, $\shift{\tilde{m},\hatm_{i}}<\shift{m,\hatm_{d_i}}$.
    By the induction assumption we have that all tags that have been propagated at $d_i$ before round $m$ in $r$ have been propagated before round $\shift{m,\hatm_{d_i}}$ in $r'$. We now show that $\kappa$ has not been propagated before round $\shift{m,\hatm_{d_i}}$ in $r'$.
    Assume by way of contradiction that there exists $m'>0$ such that $\shift{\tilde{m},\hatm_{i}}<\shift{m',\hatm_{d_i}}<\shift{m,\hatm_{d_i}}$ and $\kappa$ is propagated in $r'$ in round $\shift{m',\hatm_{d_i}}$. Then we obtain by clause  \ref{it:ii} of the inductive assumption and \Cref{claim:order} that $\kappa$ is propagated in round $m'<m$ of $r$. 
    This contradicts the fact that $\kappa$ has not been propagated in $r$ at time $m-1$. We thus conclude that $\kappa$ has been written by $i$ but not been propagated in round $\shift{m,\hatm_{d_i}}$ of $r'$. So, $\kappa$ is propagated as well in round $\shift{m,\hatm_{d_i}}$ of~$r'$.
\item The other direction is immediate: Assume $\kappa$ written by $i$ is propagated to $x$ in the memory in round $\shift{m,\hatm_{d_i}}$ of $r'$.
            By definition of $r'$, $\node{d_i,m-1}.\alpha=\prop$ in $r$ and the same preconditions hold at $(r,m-1)$ and $(r',\shift{m-1,\hatm_{d_i}})$, so
            $\kappa$ is propagated to the memory in round $m$ of $r$.

\end{itemize}

\end{proof}
\textbf{The following claim establishes that if a $\RMW$ or a $\F$ occurs at node $\theta$ of $j$ in $r$, then the buffer of $j$ is empty at the corresponding node of $\theta$ in $r'$.}
\begin{claim}\label{claim:RMWeps}

    If $\node{j,m-1}.\alpha\in\{\RMW,\F\}$ in $r$, then $\buf(r',\shift{m,\hatm_{j}}-1)=\epsilon$.
\end{claim}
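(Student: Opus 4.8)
The plan is to import the emptiness of $j$'s buffer from the witnessing time $m-1$ in $r$ to the corresponding time $T\triangleq\shift{m,\hatm_j}-1$ in $r'$, using the fence clause of $\ob_r$ to control how the relevant propagations are shifted. Since $\node{j,m-1}.\alpha\in\{\F,\RMW\}$ in $r$, the precondition for an $\F$ or $\RMW$ event (which requires $j$'s buffer to be empty) forces $\buf(j,r,m-1)=\epsilon$. I would then show that no tag can remain in $j$'s buffer at time $T$ in $r'$, arguing by contradiction: assume some tag $\kappa$ lies in $\buf(j,r',T)$.

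Such a $\kappa$ must have been written by $j$ at some node $\node{j,t_w}$ of $r$ and, by the correspondence guaranteed by the inductive hypothesis \ref{it:ii}, performed in $r'$ at round $\shift{t_w+1,\hatm_j}\le T<\shift{m,\hatm_j}$. Because $\shift{\cdot,\hatm_j}$ is non-decreasing in its first argument, this forces $t_w+1<m$, and since $\node{j,m-1}$ carries an $\F/\RMW$ rather than a write we in fact get $t_w\le m-2$; thus $\kappa$ is a write performed in a round $\le m-1$ of $r$. As $\buf(j,r,m-1)=\epsilon$, this write has already been drained in $r$: there is a node $\node{d_j,s}$ with $s\le m-2$ at which $d_j$ propagates $\kappa$. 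The key step is to bound the shifted time of this propagation. By clause~\ref{it:fences} of \Cref{def:obTSO} we have $\node{d_j,s}\ob_r\node{j,m-1}$ (here $s<m-1$), so \Cref{claim:order} gives $\shift{s+1,\hatm_{d_j}}<\shift{m,\hatm_j}=T+1$, i.e.\ $\shift{s+1,\hatm_{d_j}}\le T$. Applying the inductive hypothesis \ref{it:ii} to the round $s+1<m$, agent $d_j$ propagates the same tag $\kappa$ at round $\shift{s+1,\hatm_{d_j}}\le T$ of $r'$. Hence $\kappa$ is both written and propagated by time $T$ in $r'$ and cannot still occupy the buffer, a contradiction; therefore $\buf(j,r',T)=\epsilon$.

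I expect the main obstacle to be exactly the point where clause~\ref{it:fences} is used: a priori a propagation needed to empty the buffer could be delayed by $\Delta$ in $r'$ while the matching write is not, which would leave the buffer non-empty. The fence clause is what rules this out, since it places every earlier $\prop$ of $d_j$ in the $\ob_r$-past of the $\F/\RMW$ event, and \Cref{claim:order} then forbids the shift from reordering that $\prop$ after the synchronization event. A secondary subtlety, also handled through \Cref{claim:order} together with locality (clause~\ref{it:local}), is to guarantee that no write from a \emph{future} round $>m$ of $r$ is pulled ahead of time $T$ in $r'$: any such write node lies in the $\ob_r$-future of $\node{j,m-1}$ and is therefore shifted strictly beyond $\shift{m,\hatm_j}>T$, so it never enters the buffer by time $T$.
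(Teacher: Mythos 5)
Your proof is correct and takes essentially the same route as the paper: both derive $\buf(r,m-1)=\epsilon$ from the $\F$/$\RMW$ precondition, match each write of $j$ in $r'$ to a write in a round $<m$ of $r$ via the inductive hypothesis, locate its propagation node $\node{d_j,s}$ in $r$, and use the fence clause of \Cref{def:obTSO} together with \Cref{claim:order} to conclude that this propagation also occurs before round $\shift{m,\hatm_{j}}$ in $r'$. The only difference is stylistic: the paper argues directly that every such write has been propagated in time, while you phrase the same argument as a contradiction with a leftover tag in the buffer.
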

\begin{proof}
    By their definitions, the $\RMW$ and $\F$ operations can take place in round~$m$ of~$r$ only if  $\buf(r,m-1)=\epsilon$. If there is no write action by~$j$ up to round $m$ in $r$, then by clause \ref{it:ii} of the inductive assumption, there is no write action by~$j$ up to round $\shift{m,\hatm_{j}}$ in $r'$ and therefore, $\buf(r',\shift{m,\hatm_{j}}-1)=\epsilon$. 
    Otherwise, let $\shift{\tilde{m},\hatm_{j}}<\shift{m,\hatm_{j}}-1$ such that $\node{j,\shift{\tilde{m},\hatm_{j}}}.\alpha=\W$ in $r'$ and denote by $\kappa$ the sequence number of this Write \raissa{action}. By clause \ref{it:ii} of the inductive assumption, we have that $\node{j,\tilde{m}}.\alpha=\W$ in $r$ and $\node{j,\tilde{m}}.\alpha.tag=\kappa$.
    Since $\buf(r,m-1)=\epsilon$ we have that there is $\tilde{m}<m_1<m-1$ such that $\node{d_j,m_1}.\alpha=\prop$ in $r$ and $\node{d_j,m_1}.\alpha.tag=\kappa$. By \Cref{def:obTSO} we obtain $\node{j,\tilde{m}}\ob_r\node{d_j,m_1}\ob_r\node{j,m-1}$.
    By \Cref{claim:order}, we have that $\shift{\tilde{m}+1,\hatm_{j}}<\shift{m_1+1,\hatm_{d_j}}<\shift{m,\hatm_{j}}$.
    By the inductive assumption \ref{it:ii}, $\kappa$ is propagated in round $\shift{m_1,\hatm_{d_j}}+1$ of~$r'$.
    I.e., $\kappa$ is propagated in some round earlier than $\shift{m,\hatm_{j}}$ in $r'$. This holds for every write operation by $j$. So, $\buf(r',\shift{m,\hatm_{j}}-1)=\epsilon$.
\end{proof}

\textbf{This implies that fences $\F$ occur at corresponding nodes in $r$ and $r'$.}
\begin{claim}\label{claim:F}
$\node{j,m-1}.\alpha=\F$ iff $\node{j,\shift{m,\hatm_{j}}-1}.\alpha=\F$.
\end{claim}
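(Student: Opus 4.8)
The plan is to prove the biconditional by verifying each direction directly against the construction of~$r'$, using Claim~\ref{claim:RMWeps} to discharge the fence precondition and the inductive hypotheses \ref{it:i} and \ref{it:ii} (available for all times strictly below~$m$) to align the relevant local states. The key observation is that a fence carries a single precondition, namely an empty store buffer, so once Claim~\ref{claim:RMWeps} has established that $j$'s buffer is empty in~$r'$ at the shifted time, the enabledness of the fence transfers automatically and the rest is bookkeeping.

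For the forward direction I would assume $\node{j,m-1}.\alpha=\F$ in~$r$ and check the three conditions under which the construction makes~$j$ move in round $\shift{m,\hatm_{j}}$ of~$r'$. Condition (i), that $j$ moves in round~$m$ of~$r$, is immediate since~$j$ performs a fence. For condition (ii), that ${r'}_j(\shift{m,\hatm_{j}}-1)=r_j(m-1)$, I would rewrite the left-hand side using \Cref{claim:m-1} as ${r'}_j(\shift{m-1,\hatm_{j}})$ and then apply the inductive hypothesis \ref{it:i} at time $m-1<m$, which gives $r_j(m-1)={r'}_j(\shift{m-1,\hatm_{j}})$. For condition (iii), the precondition of the fence at $\node{j,m-1}$ is exactly that $j$'s buffer be empty, and since $\F\in\{\RMW,\F\}$, \Cref{claim:RMWeps} yields precisely that $j$'s buffer is empty in~$r'$ at time $\shift{m,\hatm_{j}}-1$. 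With all three conditions met, the construction stipulates that~$j$ performs the same action as in round~$m$ of~$r$, namely~$\F$, so $\node{j,\shift{m,\hatm_{j}}-1}.\alpha=\F$ in~$r'$.

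For the backward direction I would argue straight from the construction: a process~$j$ moves in round $\shift{m,\hatm_{j}}$ of~$r'$ only through the clause that copies its round-$m$ action from~$r$, and in that case it performs exactly $\node{j,m-1}.\alpha$. Hence if $j$ performs $\F$ in round $\shift{m,\hatm_{j}}$ of~$r'$, then $j$ must have moved in round~$m$ of~$r$ and the action performed there is the very same fence, giving $\node{j,m-1}.\alpha=\F$.

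The only step that genuinely requires care is condition (iii) in the forward direction—ensuring the buffer is emptied in~$r'$ at the shifted time so that the fence stays enabled despite the reordering of propagations. However, that work has already been isolated and carried out in \Cref{claim:RMWeps}, so for the present claim it is invoked as a black box and no fresh reasoning about store-buffer contents is needed. Consequently I expect this claim to be a clean corollary rather than a source of difficulty; the remaining effort is purely the local-state alignment via \Cref{claim:m-1} and hypothesis~\ref{it:i}, and reading the action off the definition of~$r'$.
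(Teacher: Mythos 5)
Your proof is correct and follows essentially the same route as the paper's: the forward direction invokes \Cref{claim:RMWeps} for the empty-buffer precondition and clause~\ref{it:i} of the inductive hypothesis for local-state agreement, and the backward direction reads the action off the construction of~$r'$. The only difference is cosmetic—you make explicit the bridge ${r'}_j(\shift{m,\hatm_{j}}-1)={r'}_j(\shift{m-1,\hatm_{j}})$ via \Cref{claim:m-1}, a step the paper compresses into its citation of clause~\ref{it:i}.
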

\begin{proof}
We show both directions.
\begin{itemize}

    \item Assume $\node{j,m-1}.\alpha=\F$.  We have by \Cref{claim:RMWeps} that $\buf(r',\shift{m,\hatm_{j}}-1)=\epsilon$ and by clause \ref{it:i} of the inductive assumption $r_j(m-1)=r'_j(\shift{m,\hatm_{j}}-1)$. So, by the definition of $r'$ we have $\node{j,\shift{m,\hatm_{j}}-1}.\alpha=\F$.
   \item The other direction is immediate. By the definition of $r'$, $\node{j,\shift{m,\hatm_{j}}-1}.\alpha=\F$ only if $\node{j,m-1}.\alpha=\F$.
\end{itemize}
   
\end{proof}

\textbf{The following claim establishes that if a $\RMW(x,\cdot,\cdot)$ occurs at node $\theta$ of $j$ in $r$, then the variable $x$ in the memory contains identical values at the corresponding node of $\theta$ in $r'$.}

\begin{claim}\label{claim:RMWv1}

    If $\node{j,m-1}.\alpha=\RMW(x,v_1,v_2)$ in $r$, then $x(r',\shift{m,\hatm_{j}}-1)=v_1$.
\end{claim}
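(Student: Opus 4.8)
The plan is to trace the value that $x$ holds in memory back to the memory-modifying actions that set it, and to show that exactly the same such actions, in the same order and writing the same values, precede the $\RMW$ in $r'$. Since the \textbf{rmw} rule requires $\Mem(x)=v_1$ in its precondition, the hypothesis $\node{j,m-1}.\alpha=\RMW(x,v_1,v_2)$ gives $x(r,m-1)=v_1$. Now $x(r,m-1)$ is fixed by the most recent \emph{memory-modifying} access to $x$ — a $\prop(x,\cdot)$ or an $\RMW(x,\cdot,\cdot)$ — occurring at a node strictly before $\node{j,m-1}$, and equals the common initial value of $x$ if there is none. I call this access the \emph{witness} $A$ (which writes $v_1$), and I will establish the claim by showing that the memory-modifying accesses to $x$ occurring in $r'$ before round $\shift{m,\hatm_j}$ are precisely the shifted copies of those occurring before round $m$ in $r$, in the same order and writing the same values; since in $r$ these culminate in $v_1$, so do they in $r'$.

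The mechanism is the total order that $\ob_r$ imposes on memory accesses to $x$. Neither a $\prop(x,\cdot)$ nor an $\RMW(x,\cdot,\cdot)$ is an $\RfM$, so for any two of them, and for each of them together with the $\RMW$ at $\node{j,m-1}$, neither exception of clause~\ref{it:samevar} of \Cref{def:obTSO} applies; hence they are $\ob_r$-ordered according to their real-time order in $r$. In particular every memory-modifying access to $x$ at a round $<m$ stands in $\ob_r$ to $\node{j,m-1}$, so by \Cref{claim:order} its shifted round is $<\shift{m,\hatm_j}$, and the shift preserves the order among these accesses. By clause~\ref{it:ii} of the inductive assumption (for $\RMW$'s) and by \Cref{claim:propmem} (for $\prop$'s), each is reproduced faithfully in $r'$: same variable $x$, same tag, and therefore same written value. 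Thus the witness $A$ (if any) reappears in $r'$ before the $\RMW$, still writing $v_1$, as the last such reproduced access.

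The main obstacle, mirroring the subtle case highlighted in the proof sketch, is to rule out that a memory access to $x$ that in $r$ occurs at or after round $m$ is shifted into the gap in $r'$ and overwrites $v_1$. Let $B$ be any memory-modifying access to $x$ performed in $r'$ before round $\shift{m,\hatm_j}$. By the construction of $r'$, $B$ is the shifted copy of a memory-modifying access to $x$ performed by the same agent at some round $m_B$ of $r$ (for a $\prop$ this uses that propagation is FIFO, so the item reproduced in $r'$ carries the same tag, hence the same variable, as in $r$ — the delicate point, settled by the same tag-tracking used in \Cref{claim:propmem}). Since accesses to $x$ are conflicting, they cannot share a round, so $m_B\neq m$; and if $m_B>m$ then $\node{j,m-1}\ob_r B$ by clause~\ref{it:samevar}, whence by \Cref{claim:order} the shifted round of $B$ exceeds $\shift{m,\hatm_j}$, contradicting that $B$ precedes it. Therefore $m_B<m$, so $B$ is one of the shifted pre-$\RMW$ accesses already accounted for.

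Consequently, no access overwrites the witness before the $\RMW$ in $r'$, and $x(r',\shift{m,\hatm_j}-1)=v_1$. The witness-free case follows symmetrically: an empty set of pre-$\RMW$ memory-modifying accesses to $x$ in $r$ forces, by the same bijection, an empty set in $r'$ before round $\shift{m,\hatm_j}$, leaving $x$ at its common initial value $v_1$ (recall $r'(0)\triangleq r(0)$). I expect the only genuinely nontrivial step to be the faithful reproduction of a $\prop$'s variable across the two runs in the backward direction of the last paragraph, which is exactly the FIFO propagation argument underlying \Cref{claim:propmem}.
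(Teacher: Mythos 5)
Your proof is correct and rests on the same two pillars as the paper's own argument: that memory-modifying accesses to $x$ (none being an $\RfM$) are $\ob_r$-ordered by clause~\ref{it:samevar} of \Cref{def:obTSO}, so that \Cref{claim:order} forces the shift to preserve their order relative to $\node{j,m-1}$, combined with the faithful reproduction of earlier actions given by clause~\ref{it:ii} of the inductive assumption. The decomposition differs, however. The paper argues asymmetrically and minimally: it shows only that whenever some $v'\neq v_1$ is propagated to $x$ in $r'$ before the shifted $\RMW$, the propagation of $v_1$ that must follow it in $r$ (because $x(r,m-1)=v_1$) is also shifted to a round after the $v'$-propagation and still before the shifted $\RMW$, so the last modification of $x$ writes $v_1$. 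You instead build an order-preserving bijection between \emph{all} memory-modifying accesses to $x$ preceding the $\RMW$ in the two runs (forward via $\ob_r$ and \Cref{claim:order}; backward by ruling out $m_B=m$ through conflict-freedom and $m_B>m$ through $\node{j,m-1}\ob_r B$). Your route buys completeness: you treat $\RMW$s by other processes as memory modifiers (the paper's proof speaks only of propagations, silently ignoring that an $\RMW$ also changes memory) and you cover the witness-free case where $x$ retains its initial value; the paper's route buys brevity by proving only the ``last writer'' property it needs. Both proofs share the same informal spot, which you at least flag explicitly: identifying the variable and tag of a $\prop$ in $r'$ whose source round in $r$ is not below $m$ (your $m_B>m$ subcase, and the paper's application of clause~\ref{it:ii} to $\tilde m$ without verifying $\tilde m<m$) rests on the FIFO correspondence between the buffers of $r$ and $r'$ rather than on the stated induction hypothesis.
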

\begin{proof}
    By the semantics of the $\RMW$ action, the fact that the action is enabled at time~$m-1$ in~$r$ implies that $x(r,m-1)=v_1$. We show that if a value $v'\neq v_1$ is propagated to $x$ in the memory in some round earlier than $m$, then there is a later round earlier than~$m$ at which $v_1$ is propagated to~$x$. Let $\shift{\tilde{m},\hatm_{d_i}}<\shift{m-1,\hatm_{j}}$ such that $\node{d_i,\shift{\tilde{m},\hatm_{d_i}}}.\alpha=\prop(x,v')$ with $v'\neq v_1$ in $r'$. By clause \ref{it:ii} of the inductive assumption, we have that $\node{d_i,\tilde{m}}.\alpha=\prop(x,v')$.
    Given that there is no conflicting actions in a joint action in $r$ and that $x(r,m-1)=v_1$, we have that there is 
    $\tilde{m}< m_1< m-1$ such that
    $\node{d_q,m_1-1}.\alpha=\prop(x,v_1)$ in~$r$ for some agent~$d_q$. By \Cref{def:obTSO} we get $\node{d_i,\tilde{m}}\ob_r\node{d_q,m_1-1}\ob_r\node{j,m-1}$.
    \Cref{claim:order} implies $\shift{\tilde{m}+1,\hatm_{d_i}}<\shift{m_1,\hatm_{d_q}}<\shift{m,\hatm_{j}}$.
    By clause \ref{it:ii} of the inductive assumption we obtain that $\node{d_q,\shift{m_1-1,\hatm_{d_q}}}.\alpha=\prop(x,v_1)$ in $r'$. So, $x(r',\shift{m,\hatm_{j}}-1)=v_1$.
\end{proof}

\Cref{claim:RMWv1} and \Cref{claim:RMWeps} imply that:
\begin{claim}\label{claim:RMW}
 $\node{j,m-1}.\alpha=\RMW(x,v_1,v_2)$ iff $\node{j,\shift{m,\hatm_{j}}-1}.\alpha=\RMW(x,v_1,v_2)$.
\end{claim}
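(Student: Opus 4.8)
The plan is to prove \Cref{claim:RMW} by the same two-direction argument used for \Cref{claim:F}, treating the statement essentially as an assembly of the two immediately preceding claims together with the inductive hypothesis. The forward direction is where the (small) amount of work lies: assuming $\node{j,m-1}.\alpha=\RMW(x,v_1,v_2)$ in~$r$, I must certify that at the corresponding node in~$r'$ process~$j$ is in the same local state (so that the protocol prescribes the same action) and that both preconditions of the $\RMW$ transition rule---an empty buffer and memory value~$v_1$ at~$x$---hold there, so that the construction of~$r'$ indeed schedules the $\RMW$. The reverse direction will be immediate from the definition of~$r'$.

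For the forward direction I would first establish local-state agreement. Combining \Cref{claim:m-1}, which gives $r'_j(\shift{m,\hatm_{j}}-1)=r'_j(\shift{m-1,\hatm_{j}})$, with clause~\ref{it:i} of the inductive hypothesis at time~$m-1$, namely $r_j(m-1)=r'_j(\shift{m-1,\hatm_{j}})$, yields $r'_j(\shift{m,\hatm_{j}}-1)=r_j(m-1)$. Hence~$j$ is poised to perform in round $\shift{m,\hatm_{j}}$ of~$r'$ the same protocol action as in round~$m$ of~$r$, provided the preconditions hold. Next I would verify those preconditions directly: \Cref{claim:RMWeps} gives $\buf(r',\shift{m,\hatm_{j}}-1)=\epsilon$, and \Cref{claim:RMWv1} gives $x(r',\shift{m,\hatm_{j}}-1)=v_1$, which are exactly the two enabling conditions of the [rmw] rule. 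By the definition of~$r'$, since~$j$ moves in round~$m$ of~$r$, its local states agree at the corresponding times, and the preconditions of $\node{j,m-1}.\alpha$ hold in~$r'$ at time $\shift{m,\hatm_{j}}-1$, process~$j$ moves in round $\shift{m,\hatm_{j}}$ of~$r'$ performing the identical action; thus $\node{j,\shift{m,\hatm_{j}}-1}.\alpha=\RMW(x,v_1,v_2)$.

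The reverse direction is immediate: by the construction of~$r'$, whenever~$j$ moves in round $\shift{m,\hatm_{j}}$ of~$r'$ it performs exactly the action it performs in round~$m$ of~$r$, so $\node{j,\shift{m,\hatm_{j}}-1}.\alpha=\RMW(x,v_1,v_2)$ can arise only when $\node{j,m-1}.\alpha=\RMW(x,v_1,v_2)$, paralleling the reverse direction of \Cref{claim:F}. The only genuine difficulty has already been discharged in the two preceding claims---in particular \Cref{claim:RMWv1}, whose proof uses clause~3 of \Cref{def:obTSO} to argue that if any conflicting value were propagated to~$x$ before the $\RMW$ in~$r'$, the $\ob_r$-ordering would force a later (still earlier than round~$m$) propagation restoring~$v_1$. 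Once those are granted, the present claim is a routine packaging of the empty-buffer and correct-memory-value facts into the $\RMW$ precondition, so I expect no further obstacle here.
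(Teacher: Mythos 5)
Your proposal is correct and follows essentially the same route as the paper: the forward direction combines \Cref{claim:RMWv1} (memory holds $v_1$) and \Cref{claim:RMWeps} (empty buffer) with the definition of~$r'$, and the reverse direction is immediate from the construction of~$r'$. Your explicit local-state-agreement step (via \Cref{claim:m-1} and clause~(1) of the inductive hypothesis) is exactly the paper's Equation~\eqref{eq:induc}; the paper leaves it implicit in this particular claim but spells it out in the parallel proof of \Cref{claim:F}, so this is a harmless---indeed slightly more careful---rendering of the same argument.
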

\begin{proof}
We show both directions.
\begin{itemize}

    \item Assume $\node{j,m-1}.\alpha=\RMW(x,v_1,v_2)$. By \Cref{claim:RMWv1} we have that $x(r',\shift{m,m_j}-1)=v_1$ and by \Cref{claim:RMWeps} we have that $\buf(r',\shift{m,m_j}-1)=\epsilon)$.
So, by the definition of $r'$ we have that $\node{j,\shift{m,\hatm_{j}}-1}.\alpha=\RMW(x,v_1,v_2)$.
    \item The other direction is immediate. By the definition of $r'$, $\node{j,\shift{m,\hatm_{j}}-1}=\RMW(x,v_1,v_2)$ only if $\node{j,m-1}.\alpha=\RMW(x,v_1,v_2)$.
\end{itemize}
\end{proof}

\textbf{The following claim implies that a tag $\kappa$ is read in $r$ iff it is read at the corresponding node in $r'$.}
\begin{claim}\label{claim:reads}
   Process $j$ reads a value with tag $\kappa$ from variable $x$ in round $m$ of $r$ iff $j$ reads the same value with tag~$\kappa$ from $x$ in round $\shift{m,\hatm_{j}}$ of $r'$. 
\end{claim}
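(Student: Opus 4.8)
The plan is to reduce the claim to a statement about the tag alone, and then pin down that tag in both runs by a case analysis on how the read is served. First I would argue that $j$ performs a read of $x$ in round $m$ of $r$ if and only if it does so in round $\shift{m,\hatm_{j}}$ of $r'$: read actions are always enabled, by clause~\ref{it:i} of the inductive assumption we have $r_j(m-1)=r'_j(\shift{m,\hatm_{j}}-1)$, and the construction of $r'$ makes $j$ perform the very action $P_j$ prescribes at this local state (whose precondition, being a read, holds trivially) and makes $j$ move in $r'$ only when it moves with a matching action in $r$. It then remains to show that, when $j$ reads $x$, the tag obtained is identical in both rounds. By clause~\ref{it:ii} of the inductive assumption the writes of $j$ occur at corresponding nodes — so the latest write of $j$ to $x$ preceding the read, if any, is the same write with the same tag $\kappa_W$ in both runs — and by \Cref{claim:propmem} the propagations to $x$ occur at corresponding nodes. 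Hence the only thing that can differ is the \emph{position of the read} relative to these writes and propagations, and this is exactly what \Cref{claim:order} controls, except across the exception-(b) boundary of \Cref{def:obTSO}, where a $\prop$ by $d_j$ and a later $\RfM$ by $j$ are deliberately left $\ob_r$-unrelated.

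If the read in $r$ is served from the buffer, i.e.\ $\RfB$ of $\kappa_W$, I would show it is again $\RfB$ of $\kappa_W$ in $r'$. The write $\node{j,\tilde{m}}$ creating $\kappa_W$ precedes the read by locality (clause~\ref{it:local}), and if $\kappa_W$ is ever propagated, at $\node{d_j,t_p}$, then clause~\ref{it:defhbprocprop} yields $\node{j,m-1}\ob_r\node{d_j,t_p}$; by \Cref{claim:order} the write still precedes, and the propagation still follows, the read in $r'$, so $\kappa_W$ is the latest $x$-item of $j$ still buffered at read time in $r'$. Backward-closedness of $\Past^+(S)$ under $\ob_r$ rules out the reverse phenomenon (an item being propagated \emph{earlier} in $r'$), so there is no symmetric subtle case here.

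If instead the read in $r$ is served from memory, i.e.\ $\RfM$ of some tag $\kappa$, there are two possibilities in $r'$. When the read is again served from memory, I would use that any two propagations to $x$ are $\ob_r$-comparable (clause~\ref{it:samevar}, as two $\prop$ events meet neither exception), so by \Cref{claim:order} the totally ordered sequence of propagations to $x$ is the same in $r$ and $r'$; since every propagation to $x$ by a process other than $j$ is moreover $\ob_r$-comparable to the read (no exception applies, since $j'\ne j$ and the read is performed by a process rather than a dispatcher), while $j$'s own $x$-propagations are FIFO-ordered with $\kappa_W$ last, the read must land in the same gap of this sequence in both runs, and hence obtain the same memory tag.

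The remaining possibility is that the read \emph{becomes} an $\RfB$ in $r'$, and this is the main obstacle. Here a $\prop$ of $j$'s own write $\kappa_W$ has been delayed past the read, so its node is not in $\Past^+(S)$ while the read's node is. I would show that the $\RfM$ in $r$ must in fact have read $\kappa_W$, matching the $\RfB$ in $r'$. Suppose it read $\kappa\ne\kappa_W$; then some agent $d_{j'}$ with $j'\ne j$ (the inequality holds because $j$'s $x$-writes propagate in FIFO order with $\kappa_W$ last) propagates $\kappa$ to $x$ after $d_j$'s propagation of $\kappa_W$ and before the read, all in $r$. As all three are memory accesses to $x$, clause~\ref{it:samevar} gives $\node{d_j,\cdot}\ob_r\node{d_{j'},\cdot}\ob_r\node{j,m-1}$, the second link being legitimate precisely because $j'\ne j$ avoids exception~(b). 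Since the read lies in $\Past^+(S)$ and this set is backward-closed under $\ob_r$, both propagations, in particular $d_j$'s propagation of $\kappa_W$, lie in $\Past^+(S)$, contradicting that $d_j$'s $\prop$ was delayed. Hence $\kappa=\kappa_W$. This contradiction is the crux of the whole claim. Finally, since in every case the tag read in $r$ determines and equals the tag read in $r'$, and the read obtains a unique tag in each run, the two coincide, yielding the stated ``iff''.
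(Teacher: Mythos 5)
Your proposal is correct, and it uses exactly the same machinery as the paper's proof: the inductive assumptions, \Cref{claim:order}, the buffer-link clause~\ref{it:defhbprocprop} and memory-access clause~\ref{it:samevar} of \Cref{def:obTSO} (with its two exceptions), FIFO propagation, and backward-closure of $\Past^+_r(S)$ under $\ob_r$. The difference is organizational. The paper argues by contradiction, assuming a tag $\kappa'\neq\kappa$ is read in $r'$ and splitting on who wrote $\kappa'$ (another process vs.\ $j$) and on whether $\kappa$ was written by $j$, deriving in each sub-case an $\ob_r$-chain whose shifted image violates \Cref{claim:order}. You instead argue forward, splitting on the service mechanism ($\RfB$/$\RfM$ in $r$, then $\RfB$/$\RfM$ in $r'$) and directly pinning down the tag in each combination; in particular you prove constructively that an $\RfB$ in $r$ stays an $\RfB$ of the same tag in $r'$, where the paper gets this indirectly by refutation. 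The two treatments of the crux case ($\RfM$ in $r$ becoming $\RfB$ in $r'$) essentially coincide: your middle-propagation argument via some $d_{j'}$ with $j'\neq j$, using that exception (b) of clause~\ref{it:samevar} does not apply across distinct processes and that $\Past^+_r(S)$ is backward-closed, is precisely the argument the paper highlights in its proof sketch of \Cref{thm:DtFTSO}; indeed your version spells this out more explicitly than the corresponding sub-case of the appendix proof, which compresses it into an appeal to the inductive assumption. Two minor imprecisions in your write-up, neither fatal: propagations at rounds strictly before $m$ are governed by clause (2) of the inductive assumption rather than by \Cref{claim:propmem} (which concerns round $m$ itself), and your ``same gap'' argument in the $\RfM$-to-$\RfM$ case silently uses the sub-case assumption that $j$'s buffer holds no $x$-item at the read in $r'$ to conclude that the set of $d_j$-propagations preceding the read agrees in both runs; it would be worth stating that explicitly.
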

\begin{proof}
Denote by $\kappa$ the tag read by $j$ at $(r,m-1)$. Clearly, $\kappa$ has been written in $r$ by some process $q$ before round $m$. By clause \ref{it:ii} of the inductive assumption, $\kappa$ is written in $r'$ before round 
$\shift{m,m_q^+}$. 
In addition, by the construction of $r'$ and by clause \ref{it:i} of the inductive assumption we have that $\node{j,\shift{m,\hatm_{j}}-1}.\alpha$ in $r'$ is a read operation.
We distinguish between two cases whether the read in $r$ is from buffer or from memory. I.e., if $\node{j,m-1}.\alpha=\RfM$ or $\node{j,m-1}.\alpha=\RfB$.
\begin{itemize}
    \item If $\node{j,m-1}.\alpha=\RfB$.
    Assume by way of contradiction that $\kappa'\neq\kappa$ is read by~$j$ at $(r',\shift{m,\hatm_{j}}-1)$. Observe that $\kappa'$ could not have been written by $j$, as this would immediately contradict (by clause \ref{it:ii} of the inductive assumption) the fact that $j$ reads $\kappa$ at $(r,m-1)$. 
    Thus, $\kappa'$ is read by $j$ at $(r',\shift{m,\hatm_{j}}-1)$ from the memory. 
    Observe in addition that since $\kappa$ is written by $j$, we have that $\kappa$ has been propagated to the memory in $r'$ before round $\shift{m,\hatm_j}$. Let $\tilde{m}>0$ such that $\kappa$ is propagated in $r'$ in round $\shift{\tilde{m},\hatm_{d_j}}$.
    Clearly, $\shift{\tilde{m},\hatm_{d_j}}<\shift{m,\hatm_{j}}$. By the definition of $r'$, $\kappa$ is propagated in round $\tilde{m}$ of $r$.
    We now show that $\tilde{m}<m$, reaching a contradiction to the fact that $\kappa$ is read from buffer in round $m$ of~$r$.
    Assume $m<\tilde{m}$. Then, by \Cref{def:obTSO}(\ref{it:defhbprocprop}), we have that $\node{j,m-1}\ob_r\node{d_j,\tilde{m}-1}$. By \Cref{claim:order} we obtain $\shift{m,\hatm_{j}}<\shift{\tilde{m},\hatm_{d_j}}$. This contradicts the fact that $\shift{\tilde{m},\hatm_{d_j}}<\shift{m,\hatm_{j}}$.
    
    \item  If $\node{j,m-1}.\alpha=\RfM$. Recall that $\kappa$ is the tag read by $j$ at $(r,m-1)$. We denote by $\node{d_q,\tilde{m}-1}$ the node at which $\kappa$ is propagated in $r$. Assume by way of contradiction that $\kappa'\neq\kappa$ is returned by $j$ at $(r',\shift{m,\hatm_{j}}-1)$.
\begin{itemize}
    \item If $\kappa'$ is written by a process $i\neq j$, then there exists $m'>0$ such that~$\kappa'$ is propagated at $(r',\shift{m',\hatm_{d_i}}-1)$  with $\shift{\tilde{m},\hatm_{d_q}}<\shift{m',\hatm_{d_i}}<\shift{m,\hatm_{j}}$. 
    By the definition of $r'$, we have that $\kappa'$ is propagated to $x$ in round $m'$ of~$r$. By the definition of $r'$ and since there is no conflicting actions in joint actions of~$r$, we have that $\tilde{m}\neq m'$ and $m'\neq m$.
        We show that $\tilde{m}<m'<m$. Assume this is not the case.
        \begin{itemize}
            \item If $m'<\tilde{m}$, then by \Cref{def:obTSO}(\ref{it:samevar}) $\node{d_i,m'-1}\ob_r\node{d_q,\tilde{m}-1}$. It follows by \Cref{claim:order} that $\shift{m',\hatm_{d_i}}<\shift{\tilde{m},\hatm_{d_q}}$. This contradicts the fact that $\shift{\tilde{m},\hatm_{d_q}}<\shift{m',\hatm_{d_i}}$.
            \item If $m'>m$, then by \Cref{def:obTSO}(\ref{it:samevar}), $\node{j,m-1}\ob_r\node{d_i,m'-1}$. It follows by \Cref{claim:order} that $\shift{m,\hatm_{j}}<\shift{m',\hatm_{d_i}}$, This contradicts the fact that $\shift{m',\hatm_{d_i}}<\shift{m,\hatm_{j}}$.
        \end{itemize}
        So, $\tilde{m}<m'<m$, contradicting that $\kappa$ is read by $j$ at $(r,m-1)$.
        \item If $\kappa'$ is written by process $j$. Let $m'>0$ such that~$\kappa'$ is written by~$j$ in round $\shift{m',\hatm_{j}}<\shift{m,\hatm_{j}}$ of~$r'$.
        \begin{itemize}
            \item  If $\kappa$ is written by $j$ (i.e., $j=q$), then by the definition of $\prop$ and $\R$ we have that $\shift{\tilde{m},\hatm_{j}}<\shift{m',\hatm_{j}}$.
        By the construction of $r'$, we have that $\kappa'$ is written by $j$ in round $m'$ of $r$.
        By the locality clause of \Cref{def:obTSO} and \Cref{claim:order}, we have $\tilde{m}<m'<m$. By the definition of $\prop$ and $\R$ in TSO, we have that $\kappa$ is not returned in the read of $\node{j,m-1}$ in $r$, thus having a contradiction.
        \item If $q\neq j$, i.e., $\kappa$ is not written by $j$.
        \begin{itemize}
            \item If $\kappa'$ is read from memory, then similarly to the previous case where $\kappa'$ is written by $i\neq j$, we reach a contradiction.
            \item If $\kappa'$ is read from buffer, i.e., it is not propagated before round $\shift{m,\hatm_{j}}$ of $r'$. Then $\kappa'$ is not propagated before round $m$ of $r$ (otherwise we would obtain a contradiction to clause \ref{it:ii} of the inductive assumption). This contradicts the fact that $\kappa$ is read by $j$ at $\node{j,m-1}$.
        \end{itemize}
        \end{itemize}
       
\end{itemize}
\end{itemize}


We showed that if $\kappa$ is read at $\node{j,m-1}$ in $r$, then it is read at $\node{j,\shift{m,\hatm_{j}}-1}$ in $r'$. 
The other direction is now immediate. Assume $\kappa$ is read at $\node{j,\shift{m,\hatm_{j}}-1}$ in $r'$. By definition of $r'$ there is a read operation at $\node{j,m-1}$ in $r$. Denote by $\kappa'$ the tag being read there. We showed that $\kappa'=\kappa$.
To conclude, $j$ reads tag $\kappa$ from variable $x$ in round $m$ of $r$ iff $j$ reads $\kappa$ from~$x$ in round $\shift{m,\hatm_{j}}$ of $r'$.
\end{proof}
Recall that we have by clause \ref{it:i} of the inductive assumption that $r'_j(\shift{m-1,\hatm_{j}})=r_j(m-1)$. \Cref{claim:m-1} thus implies that
\begin{equation}\label{eq:induc}
   ~~~ r'_j(\shift{m,\hatm_{j}}-1)~=~r_j(m-1).
\end{equation}

We can now show that \ref{it:i}-\ref{it:ii} hold for every process~$j$, agent $b$ and~$m$ by cases depending on whether $j$ and $b$ move or not in round $m$ of $r$.
\begin{itemize}
    \item If $j$ does not move in round $m$ of $r$:\quad By definition of ${r'}$, we have that $j$ does not move in round $\shift{m, \hatm_{j}}$ of $r'$, so clearly \ref{it:i} holds  and in addition, ${r'}_j(\shift{m,\hatm_{j}})={r'}_j(\shift{m,\hatm_{j}}-1)=r_j(m-1)=r_j(m)$, proving \ref{it:i}.
    
    \item If $j$ moves in round $m$ of $r$:\quad We showed that the preconditions of the action performed in $(r,m)$ hold at $(r',\shift{m,\hatm_{j}})$. Thus,  by the definition of ${r'}$, process $j$ moves in round $\shift{m,\hatm_j}$ and performs the same action $\alpha_j\in P_j(r_j(m))$ in the round $\shift{m,\hatm_{j}}$ of ${r'}$ as it does in the round $m$ of $r$. If $\alpha_j$ contains a read ($\R$ or $\RMW$) then by \Cref{claim:reads} and \Cref{claim:RMW}, we have that the same value is read by $j$ in the round $m$ of $r$ and round $\shift{m,\hatm_{j}}$ of $r'$, ensuring \ref{it:ii} holds. If $\alpha_j=\W$ then $j$ performs the same write in round $\shift{m,\hatm_{j}}$ of $r'$, ensuring the same values are appended to the buffer in the same order in $r$ and $r'$. Thus, identical actions are performed in round $m$ of $r$ and round $\shift{m,\hatm_{j}}$ of $r'$ and same values are returned, ensuring \ref{it:i} and \ref{it:ii}.
    \item Similarly to previous item, if dispatcher $b$ does not move in round $m$ of $r$ then it does not in round $\shift{m,\hatm_b}$ of $r'$. If it does, we have by \Cref{claim:propmem} that exactly the same action occurs in round $\shift{m,\hatm_b}$ of $r'$ as in in round $m$ of $r$.
    \end{itemize}

\end{proof}
 Observe that a consequence of \ref{it:i} and \ref{it:ii} is that  for all nodes $\theta=\node{b,t}$ and $\theta'=\node{c,t'}$ it is the case that 
 \begin{equation}\label{eq:ob_preserving}
     \node{b,t}\ob_r\node{c,t'} ~~\text{iff}~~ \node{b,\shift{t,\hatm_b}}\ob_{r'}\node{c,\shift{t',\hatm_c}}.
 \end{equation}

A useful immediate consequence of \Cref{def:obTSO}(\ref{it:defhbprocenv}) and \ref{def:obTSO}(\ref{it:samevar})  is:
\begin{observation}\label{obs:conflictimpliesob}
    If conflicting actions are performed at  nodes $\theta=\node{b,t}$ and $\theta'=\node{b',t'}$ in~$r$ with $t<t'$, then  $\theta\ob_r\theta'$.
\end{observation}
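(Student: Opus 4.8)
The plan is to derive the occurs-before relation directly from the same-variable clause, \Cref{def:obTSO}(\ref{it:samevar}), after verifying that neither of its two listed exceptions can apply to a conflicting pair. First I would unfold the definition of \emph{conflicting}: by part~(ii), $\theta.\alpha$ and $\theta'.\alpha$ are both memory accesses of the same variable. Since the memory-access actions are exactly $\RMW$, $\prop$, and $\RfM$ (writes and $\RfB$'s being explicitly excluded), this already places the pair squarely within the hypothesis of clause~(\ref{it:samevar}); all that remains is to discharge the two exceptions, after which clause~(\ref{it:samevar}) delivers $\theta\ob_r\theta'$ immediately.

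For the two exceptions I would match each one against a condition in the definition of conflicting. Exception~(a) requires that $\theta.\alpha$ and $\theta'.\alpha$ both be $\RfM$ actions; but part~(i) of the conflict definition states that at least one of them is not an $\RfM$, so (a) cannot hold. Exception~(b) requires $b=d_i$ and $b'=i$ for a single process $i$, with $\theta.\alpha=\prop$ and $\theta'.\alpha=\RfM$; this is precisely the configuration of a $\prop$ at $d_i$ together with an $\RfM$ at the same process $i$ that is forbidden by part~(iii). Hence (b) cannot hold either, and clause~(\ref{it:samevar}) applies, yielding $\theta\ob_r\theta'$.

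The step I expect to require the most care is confirming that the two exceptions of clause~(\ref{it:samevar}) coincide \emph{exactly} with the configurations excluded by conflict~(i) and conflict~(iii)---that is, that no conflicting pair of memory accesses slips through an exception. I would make this airtight by a short enumeration over the possible action types at $\theta$ and $\theta'$ (each one of $\RMW$, $\prop$, $\RfM$), checking in particular that the index $i$ in exception~(b) is forced to be the same process on both sides, so that it is genuinely ruled out by conflict~(iii), and is not confused with the superficially similar but distinct case of a $\prop$ by $d_j$ with $j\neq i$ preceding an $\RfM$ by $i$ (which \emph{is} a conflicting pair and is correctly captured, rather than excluded, by clause~(\ref{it:samevar})). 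Finally, I note that although the observation is phrased as a consequence of clauses~(\ref{it:defhbprocenv}) and~(\ref{it:samevar}), the buffer clause~(\ref{it:defhbprocenv}) links a write or an $\RfB$ to a $\prop$ and so never fires for a conflicting pair; clause~(\ref{it:samevar}) carries the whole argument, and the same-agent case $b=b'$ needs no separate handling, since exception~(b) involves two distinct agents and therefore cannot arise there.
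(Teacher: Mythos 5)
Your proof is correct and matches the paper's treatment: the paper states this observation as an immediate consequence of \Cref{def:obTSO}(\ref{it:samevar}), and your argument simply makes explicit the check that conflict conditions (i) and (iii) rule out exceptions (a) and (b) of the same-variable clause, including the key point that exception (b) requires the $\prop$ and the $\RfM$ to belong to the \emph{same} process. Your side remark that the buffer clause (\Cref{def:obTSO}(\ref{it:defhbprocenv})) never fires for a conflicting pair---since writes and $\RfB$'s are not memory accesses---is accurate, even though the paper cites that clause alongside the same-variable one.
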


To complete the proof that $r'$ is a TSO run, we show:
    \begin{claim}
        Joint actions in $r'$ do not contain conflicting actions.
    \end{claim}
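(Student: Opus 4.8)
The plan is to argue by contradiction, reducing a hypothetical conflict in~$r'$ to a conflict in~$r$, which is impossible since~$r$ is a genuine TSO run. Suppose two conflicting actions were performed in the same joint action of~$r'$, i.e.\ at two nodes $\node{b,s-1}$ and $\node{b',s-1}$ with $b\ne b'$ at a common time $s-1$. Conflicting actions are memory accesses, hence not $\noop$, so by the construction of~$r'$ each of them is the shifted copy of an action that~$b$ (resp.~$b'$) performs in~$r$: there are rounds $m,m'$ with $\shift{m,\hatm_b}=s=\shift{m',\hatm_{b'}}$ such that the action at $\node{b,s-1}$ in~$r'$ is the action at $\node{b,m-1}$ in~$r$, and similarly for~$b'$.

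The first step is to verify that the corresponding pair of actions in~$r$ is again a conflicting pair of memory accesses to the same variable~$x$. For $\RMW$ and $\prop$ actions this is immediate from \Cref{claim:RMW} and \Cref{claim:propmem}, which show these actions occur at corresponding nodes with identical variable and tag. The delicate case is a read: an $\RfB$ is not a memory access whereas an $\RfM$ is, so I must rule out that a read which is $\RfM$ in~$r'$ was merely an $\RfB$ in~$r$. Here I would establish the contrapositive, that any read which is $\RfB$ in~$r$ remains $\RfB$ in~$r'$: if such a read returns the writer's own tag~$\kappa$ at $\node{j,m-1}$, then~$\kappa$ is still in~$j$'s buffer in~$r$, and \Cref{def:obTSO}(\ref{it:defhbprocprop}) gives $\node{j,m-1}\ob_r\node{d_j,\cdot}$ for the node at which~$\kappa$ is later propagated; combined with \Cref{claim:propmem} and \Cref{claim:order} this forces the propagation of~$\kappa$ in~$r'$ to occur strictly after the read, so the read still finds~$\kappa$ in the buffer and is $\RfB$ in~$r'$. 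Consequently a read that is $\RfM$ in~$r'$ is also $\RfM$, reading the same tag, in~$r$ (using \Cref{claim:reads}), and the two $r$-actions genuinely form a conflicting pair.

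It then remains to extract the timing contradiction. Since~$r$ is a TSO run its joint actions are conflict-free, so the two conflicting $r$-actions cannot share a round; thus $m\ne m'$, and assume without loss of generality $m<m'$, so $m-1<m'-1$. Because conflicting actions stand in the occurs-before relation in the earlier one's direction, \Cref{obs:conflictimpliesob} yields $\node{b,m-1}\ob_r\node{b',m'-1}$, whence \Cref{claim:order} gives $\shift{m,\hatm_b}<\shift{m',\hatm_{b'}}$. This contradicts $\shift{m,\hatm_b}=s=\shift{m',\hatm_{b'}}$, completing the argument. The main obstacle is the read-source case in the second step: establishing that an $\RfM$ in~$r'$ cannot arise from an $\RfB$ in~$r$, since only then are we entitled to treat the read as a memory access in~$r$ and invoke \Cref{obs:conflictimpliesob}; once that is in hand, the rest is a direct application of the monotonicity of the shift operator under $\ob_r$.
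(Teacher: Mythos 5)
Your route is genuinely different from the paper's, and it is worth saying how. The paper never argues that the conflicting pair in $r'$ corresponds to a conflicting pair in $r$: it applies \Cref{obs:conflictimpliesob} directly to the two nodes of $r'$, pulls the resulting $\ob_{r'}$ relation back to $\ob_r$ through \Cref{eq:ob_preserving}, and then invokes \Cref{claim:order} to separate the shifted times. You instead transfer the \emph{conflict} itself back to $r$ --- which forces you to prove that memory-access status is preserved, i.e.\ that a read which is an $\RfM$ in $r'$ was already an $\RfM$ in $r$ --- and only then apply \Cref{obs:conflictimpliesob} and \Cref{claim:order} inside $r$. Your closing step (using that $r$ is a genuine TSO run to get $m\ne m'$, then \Cref{obs:conflictimpliesob} and \Cref{claim:order} to contradict equality of the shifted times) is fine. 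The problem is the transfer step, which you yourself flag as the crux.

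The gap is in your argument that ``any read which is $\RfB$ in $r$ remains $\RfB$ in $r'$.'' It silently assumes that the tag $\kappa$ read from the buffer is propagated at a \emph{strictly later} node, so that \Cref{def:obTSO}(\ref{it:defhbprocprop}) yields $\node{j,m-1}\ob_r\node{d_j,\cdot}$ and \Cref{claim:order} can keep the propagation after the read in $r'$. But the model allows the $\RfB$ and the $\prop$ of that very tag to occur \emph{in the same joint action} of $r$: an $\RfB$ is not a memory access, so the two actions are not conflicting, and both preconditions hold simultaneously when the item carrying $\kappa$ is the only one in $j$'s buffer. Since \Cref{def:obTSO} relates only nodes at strictly increasing times, clause~(\ref{it:defhbprocprop}) then gives \emph{no} edge from the read node to the prop node, and nothing forces them into $\Pastrp(S)$ together: taking, say, $S=\{\node{d_j,m-1}\}$, the prop node is in $\Pastrp(S)$ while the read node $\node{j,m-1}$ is not. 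The construction then leaves the prop in round $m$ and delays the read by $\Delta$, after which $j$'s buffer is empty and the read really is an $\RfM$ in $r'$ (still returning $\kappa$, consistently with \Cref{claim:reads}). So your intermediate claim is false for the construction as defined, not merely unproven, and with it the reduction ``conflict in $r'$ implies conflict in $r$'' collapses: an $\RfM$/$\prop$ conflict in $r'$ can pull back to an $\RfB$/$\prop$ pair in $r$, which is not a conflict, so there is nothing in $r$ to contradict. (A smaller omission: ``the node at which $\kappa$ is later propagated'' need not exist at all, since fairness is not assumed for \Cref{thm:DtFTSO}; that subcase is harmless but should be addressed.) It is only fair to add that this same-time $\RfB$/$\prop$ configuration is exactly the point where the paper's own proof is also loose --- it applies \Cref{obs:conflictimpliesob} to two nodes of $r'$ at \emph{equal} times, outside that observation's hypothesis $t<t'$ --- but the paper's route through \Cref{eq:ob_preserving} at least never needs the $\RfB$-preservation claim on which your argument rests.
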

\begin{proof}
This follows from \Cref{eq:ob_preserving}.
    Let $m_1,m_2>0$ such that conflicting actions are performed at the nodes $\theta\triangleq\node{b,\shift{m_1+1,\hatm_b}-1}$ and $\theta'\triangleq\node{c,\shift{m_2+1,\hatm_c}-1}$ in $r'$ and such that $\shift{m_1+1,\hatm_b}\leq \shift{m_2+1,\hatm_c}$.  By \Cref{obs:conflictimpliesob} we have that $\theta\ob_{r'}\theta'$. By \Cref{eq:ob_preserving}, we have that $\node{b,m_1}\ob_r\node{c,t_2}$. By \Cref{claim:order}, $\shift{m_1+1,\hatm_b}<\shift{m_2+1,\hatm_c}$. Hence, $\theta$ and $\theta'$ do not occur at the same time. 

    
\end{proof}

\end{toappendix}

\Cref{thm:DtFTSO} is inspired by the ``Delaying the Future'' theorem of Moses and Nataf \cite{CommRequirDISC24}, which shows a (slightly weaker, but essentially the same) property relating ordering and the happens-before relation in asynchronous message passing systems. Our result exposes a new connection between the causal structure of classic asynchronous systems and TSO. 

\Cref{thm:DtFTSO} provides us with a concrete handle on coordination and timing in TSO: In a precise sense, two events are guaranteed to occur in a given temporal order if and only if the execution contains an occurs-before chain between them. Indeed, this also applies to complex operations. Intuitively, whenever the results of an operation depend on the operations that preceded it, the occurs-before relation must be established. 
The theorem implies that in the absence of an $\ob$ connection, operations that precede a given operation can be shifted so that they no longer precede it, without the results of the operation of interest changing. Moreover, we can show that, at least when a process runs solo  (or when it cannot detect that it has not run solo) while performing an operation, the only way to ensure an $\ob$ chain from the operation to future operations is by performing synchronization operations. We provide further details in the next section.

\section{Implementing Operations in TSO}\label{sec:implOpTSO}
Distributed objects in a model such as TSO are typically implemented using a set of methods (we will, rather, use the term operations) whose specification typically draws a causal connection between them. Thus, for example, an implementation of a linearizable register must satisfy strict relationship between a read operation and the writes that precede it in real time. Using \Cref{thm:DtFTSO}, it is possible to show that occurs-before chains must be constructed by any implementations of such objects. 
This requirement can then be showed to obtain insights into the structure of such implementations which, in particular, enables us to identify cases in which synchronization actions must be performed.

To leverage the strength of \Cref{thm:DtFTSO}, we now aim to demonstrate how operations on distributed objects can be reordered while preserving local equivalence. We consider operations that are associated with individual processes. An operation $\Op$
starts with an invocation input $\inv(i,\Op)$ from the environment to process~$i$, and ends when process~$i$ performs a matching response action   $\ret(\Op)$ action. 
Operation invocations in our model are nondeterministic and asynchronous --- they can be invoked at arbitrary times.\footnote{We assume for simplicity that following an $\inv(i,\cdot)$, the environment will not issue another $\inv(i,\cdot)$  before~$i$ has issued a matching $\ret$ completing the first operation.}  For the purpose of our analysis in this section, we make a few fairly standard and straightforward definition.

We say that an operation $\X$ occurs between nodes $\theta=\node{i,t}$ and $\theta'=\node{i,t'}$ in $r$ if an $\inv(i,\X)$ action by the environment occurs in round~$t+1$ in $r$ and process $i$ performs the matching $\ret(\X)$ action in round~$t'$. In this case we denote $\X.s\triangleq\theta$ and 
$\X.e\triangleq\theta'\!$, and use $t_{\X.s}(r)$ to denote the operation's starting time~$t$ and $t_{\X.e}(r)$ to denote its ending time~$t'$. We do not mention the run~$r$ in these notations when it is clear from the context.
An operation~$\Op$  {\em completes} in a run~$r$ if $r$ contains both its invocation and its response.

           \begin{definition}[Real-time order and Concurrency]
   For two operations $\X$ and $\Y$  in $r$ we say that $\X$ {\em precedes} $\Y$ in~$r$, denoted $\X<_{r}\Y$,  if $t_{\X.e}(r)<t_{\Y.s}(r)$, i.e., if~$\X$ completes before~$\Y$ is invoked. 
        If neither $\X<_{r}\Y$ nor $\Y<_{r}\X$, then $\X$ and $\Y$ are considered {\em concurrent} in~$r$. 
\end{definition}
 
We will use the following terminology in our analysis of operations: 
\begin{definition} Let~$\X$ and~$\Y$ be operations. 
\begin{itemize}
   \item  We write $\X\ob_r\Y$ and say that $\X$ \emph{occurs before}~$\Y$ in~$r$ if $\X.s\ob_r\Y.e$. (This generalizes~$\ob$ to operations.) 
   \item We say that process $i$ \emph{runs solo} during an operation~$X$ in~$r$ if the only agents taking actions in~$r$ between $t_{X.s}$ and $t_{X.e}$ are~$i$ and $d_i$. We say that an operation $\X$ performed by $i$ runs solo if $i$ runs solo while performing $\X$.
    \item $\X$ is said to {\em run in isolation in}~$r$ if no operation is concurrent to $\X$ in $r$.
\end{itemize}
\end{definition}

\subsection{The Need for Synchronization for Establishing Occurs Before}\label{sec:applyDtF}
Suppose that a distributed object employs operations $\X$ and~$\Y$ for which we can show that $\X<_r\Y$ requires that $\X\ob_r\Y$. This is very common, for example, in the case of linearizable objects; we will demonstrate examples in \Cref{sec:linearizable}. Then the structure of occurs-before chains will imply that costly synchronization actions are sometimes needed. 
We will show a more specific result, stating that in the absence of communication, the high cost must be paid. 
In this section we establish a connection between running solo, feedback loops, and the need for fences or rmw operations. 
This connection will be used in \Cref{sec:linearizable} to establish synchronization costs for concrete objects. 
We start with a slightly technical lemma concerning occurs-before chains that use at most two processes. 
   
   \begin{definition}[$\ijj$-only chain]
       Let $\theta_1=\node{i,t_1}$, $\theta_2=\node{j,t_2}$ such that $\theta_1\ob_r\theta_2$. If for all $\node{b,\cdot}$ in the occurs-before chain, it is the case that $b\in\{i,d_i,j,d_j\}$, we say that the chain is $\ijj${\em -only.}
   \end{definition}
   
We can show:

\begin{lemma}\label{lem:obcomposedOnlyOf_i_And_j}
        Let $\node{i,t_1}\ob_r\node{j,t_2}$ with $i\ne j$ be an $\ijj$-only chain. Then there must be a time~$t$ satisfying
        $t_1\leq t<t_2$ such that one of the following conditions holds:
        \begin{enumerate}
            \item \label{lem:obcomposedOnlyOf_i_And_jit1} $\node{i,t}.\alpha=\RMW$ and there exists $t<t'<t_2$ such that $\node{j,t'}.\alpha\in \{\F,\RMW,\RfM\}$, or
            \item \label{lem:obcomposedOnlyOf_i_And_jit2} $\node{i,t}.\alpha\in\{\W,\RfB\}$ and there exist $t<t'<t''<t_2$ such that $\node{d_i,t'}.\alpha=\prop$ and $\node{j,t''}.\alpha\in\{\RfM,\F,\RMW\}$, or
            \item \label{lem:obcomposedOnlyOf_i_And_jit3} $\node{i,t}.\alpha=\RfM$ and there exists $t<t'<t_2$ such that $\node{j,t'}.\alpha\in\{\F,\RMW\}$
        \end{enumerate}
       
\end{lemma}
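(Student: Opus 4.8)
The plan is to analyze the structure of an $\ijj$-only occurs-before chain by examining its base steps, and to argue that such a chain must ``cross over'' from $i$ to $j$ at some point, and that this crossover forces one of the three listed patterns. First I would recall that by \Cref{def:obTSO} the chain $\node{i,t_1}\ob_r\node{j,t_2}$ decomposes into base steps (clauses \ref{it:local}--\ref{it:fences}) stitched together by transitivity, and that every node appearing on the chain belongs to one of the four agents $i,d_i,j,d_j$. Since the chain begins on $i$'s timeline and ends on $j$'s, and $i\ne j$, the agents must change along the chain. The key observation is that the only base steps that move between the ``$i$-side'' agents $\{i,d_i\}$ and the ``$j$-side'' agents $\{j,d_j\}$ are those of clause~\ref{it:samevar} (a memory access to a common variable by two different agents)—clauses \ref{it:local}, \ref{it:defhbprocprop}, and \ref{it:fences} each relate agents that are either equal or of the form $i$ and $d_i$. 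So I would isolate the \emph{first} base step in the chain whose source is an $i$-side agent and whose target is a $j$-side agent, and show it must be an instance of clause~\ref{it:samevar}.

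Next I would do a case analysis on this crossing step, which relates a memory-access node $\node{b,t}$ with $b\in\{i,d_i\}$ to a memory-access node $\node{c,t'}$ with $c\in\{j,d_j\}$ on the same variable $x$. By the definition of memory-access actions, the source action is one of $\RMW$, $\prop$, or $\RfM$, and likewise for the target; I would also use the exceptions (a) and (b) of clause~\ref{it:samevar} to rule out impossible combinations, together with \Cref{obs:si-iimpliesfence} to pin down what the immediately following node on $j$'s timeline must be. The crossing source node on $b$'s timeline must be traced back to an action by $i$ itself: if $b=d_i$, then the $\prop$ at that node was preceded (earlier on the chain, via clause~\ref{it:defhbprocprop}) by a $\W$ or $\RfB$ at some node $\node{i,t}$, giving pattern~\ref{lem:obcomposedOnlyOf_i_And_jit2}; if $b=i$ directly, the action is $\RMW$ or $\RfM$, yielding candidates for patterns~\ref{lem:obcomposedOnlyOf_i_And_jit1} and~\ref{lem:obcomposedOnlyOf_i_And_jit3}. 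On the target side, I would use the fact that a subsequent step from a $d_j$-node to a $j$-node is governed by clause~\ref{it:fences} (forcing the eventual $j$-action to be $\F$ or $\RMW$), while a direct memory-access by $j$ is an $\RfM$ or $\RMW$, matching the allowed sets $\{\F,\RMW,\RfM\}$, $\{\RfM,\F,\RMW\}$, and $\{\F,\RMW\}$ in the three cases respectively. Throughout I would invoke \Cref{obs: ob occurs before} to maintain the time orderings $t_1\le t<t'(<t'')<t_2$.

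The main obstacle I anticipate is the bookkeeping needed to guarantee that the crossing step I extract actually yields a node $\node{i,t}$ with $t_1\le t$ and a terminal $j$-node with $t'<t_2$ (or $t'<t''<t_2$), rather than some intermediate node that does not witness one of the clean patterns. In particular, when the crossing happens via a $\prop$ at $d_i$ I must produce the originating $\W$ or $\RfB$ at $i$ and argue it lies at time $\ge t_1$; and when the target first lands on $d_j$ I must follow the chain forward to the $\F$ or $\RMW$ on $j$'s timeline before $t_2$. The subtlety is that clause~\ref{it:propRfM} (exception (b)) permits a $\prop$-to-$\RfM$ pair that is \emph{not} in $\ob_r$, so I must check that the crossing step is a genuine $\ob_r$ base step and not an excepted pair; this is exactly where I expect the case analysis to be delicate, and I would handle it by treating the $b=d_i$, $c=j$, source $=\prop$, target $=\RfM$ combination separately and showing it cannot be the crossing link (routing instead through a fence/rmw on $j$), which is what forces the three-node form of pattern~\ref{lem:obcomposedOnlyOf_i_And_jit2}.
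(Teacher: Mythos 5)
Your overall strategy is the same family as the paper's: observe that, since $i\neq j$, the chain cannot consist of locality links alone, and then do a case analysis on the base links by which it leaves $i$'s side, reading the three patterns off that case split. Your refinement---isolating the \emph{first} link from $\{i,d_i\}$ to $\{j,d_j\}$ and noting that only clause~\ref{it:samevar} of \Cref{def:obTSO} can produce such a crossing---is sound, and is in fact more systematic than the paper's own proof, whose two-case split (the chain contains a base link $\node{i,\cdot}\ob_r\node{j,\cdot}$ or a base link $\node{i,\cdot}\ob_r\node{d_i,\cdot}$) does not explicitly cover chains that cross via a link $\node{i,\cdot}\ob_r\node{d_j,\cdot}$.

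Two steps of your plan are nonetheless wrong as stated, and they are genuine gaps rather than bookkeeping. First, when the crossing source is a $\prop$ at $d_i$, you assert that the chain must have reached that node via clause~\ref{it:defhbprocprop}, i.e., from a $\W$ or $\RfB$ at $i$. This is not forced: a node $\node{d_i,t'}$ performing $\prop(x,\cdot)$ can also be entered by a clause-\ref{it:samevar} link from a node $\node{i,t}$ whose action is $\RfM(x,\cdot)$ or $\RMW(x,\cdot)$, since exception (b) forbids only the \emph{opposite} direction ($\prop$ at $d_i$ followed by $\RfM$ at $i$). Thus a chain such as $\node{i,t_1}$ with $\RfM(x,\cdot)$, then $\node{d_i,t'}$ with $\prop(x,\cdot)$, then a node of $j$, is a legitimate $\ijj$-only chain on which your back-tracing finds no $\W$/$\RfB$ node at all. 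Recovering pattern~\ref{lem:obcomposedOnlyOf_i_And_jit2} there requires an operational argument your plan never invokes: since $\RfM(x,\cdot)$ is enabled only when $\restrict{\buf(i)}{x}=\emptyseq$, the tag propagated at $t'$ must have been written by $i$ strictly after $t_1$, which supplies the missing $\W$ node. Second, your treatment of exception (b) is inverted. Exception (b) concerns a dispatcher and \emph{its own} process ($b=d_i$, $c=i$); it never applies to the pair $(d_i,j)$ with $j\neq i$. Hence the combination ``source $=\prop$ at $d_i$, target $=\RfM$ at $j$'' is a genuine $\ob_r$ link, it \emph{can} be the crossing link, and there is nothing to rule out---pattern~\ref{lem:obcomposedOnlyOf_i_And_jit2} explicitly allows $\RfM$ at the $j$-node. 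Where exception (b) actually does the work is on the target side: a chain that lands on $d_j$ cannot step onward to $j$ by $\prop\to\RfM$, so it must use clause~\ref{it:fences} (or clause~\ref{it:samevar} with an $\RMW$), and that is precisely what confines the $j$-node to $\{\F,\RMW\}$ in pattern~\ref{lem:obcomposedOnlyOf_i_And_jit3}. As written, your plan would stall in exactly the case you flag as delicate, attempting to prove an impossibility that does not hold.
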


\begin{proof}
    Let $\pi$ be such a chain. Observe first that since $i\neq j$, the chain does not contain only links obtained by applying the locality clause. Thus, the chain must contain:
    \begin{itemize}
        \item a base link of the form $\node{i,\cdot}\ob_r\node{j,\cdot}$. This can be obtained only if \Cref{lem:obcomposedOnlyOf_i_And_jit1} holds or \Cref{lem:obcomposedOnlyOf_i_And_jit3} holds, or
        \item a base link of the form $\node{i,\cdot}\ob_r\node{d_i,\cdot}$. This can be obtained only if \Cref{lem:obcomposedOnlyOf_i_And_jit2} holds.
    \end{itemize}
\end{proof}    

    Information in TSO can be conveyed by receiving feedback from other agents. We define:
    \begin{definition}[feedback loop]
         We say that there is a {\em feedback loop} between nodes $\node{i,t_1}$ and $\node{i,t_2}$ in run $r$ (and write $\node{i,t_1}\fb_r\node{i,t_2}$) if there is a node $\node{b,t}$ with $b\notin\{i,d_i\}$ such that $\node{i,t_1}\ob_r\node{b,t}\ob_r\node{i,t_2}$.
         For an operation $\X$, we say that $\X$ contains a feedback loop in $r$ if $\X.s\fb_r\X.e$.
        
    \end{definition}
Clearly, \Cref{lem:obcomposedOnlyOf_i_And_jit1} and \Cref{lem:obcomposedOnlyOf_i_And_jit3} in \Cref{lem:obcomposedOnlyOf_i_And_j} imply the necessity of $\RMW$ or $\F$. Using \Cref{thm:feedbackloop}, we show that \Cref{lem:obcomposedOnlyOf_i_And_jit2} also typically leads to the same requirement. In particular, we show that without  performing a~$\F$ or a $\RMW$, a process that runs solo cannot be guaranteed that an item it wrote or read from buffer has been propagated. More formally: 
    \begin{theorem}\label{thm:feedbackloop}
        Let $\X$ be an operation in $r$ that does not contain a feedback loop.
        Then there exists a run $\tilde{r}\loc r$ such that
            $\X$ runs solo
            in $\tilde{r}$.
    \end{theorem}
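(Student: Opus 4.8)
The plan is to apply the Delaying-the-Future theorem (\Cref{thm:DtFTSO}) twice: once to evacuate the interfering nodes to \emph{after} $\X.e$, and once to evacuate the remaining ones to \emph{before} $\X.s$. Let $i$ be the process performing $\X$, and consider the open interval $(t_{\X.s},t_{\X.e})$. The nodes $\node{b,t}$ in this interval with $b\notin\{i,d_i\}$ are exactly those that must be removed. I would partition them into $B\triangleq\{\theta:\theta\ob_r\X.e\}$ and $F$, the rest. The first thing to record is the consequence of the hypothesis: since $\X$ contains no feedback loop, there is no other-agent node $\theta$ with $\X.s\ob_r\theta\ob_r\X.e$; hence every $\theta\in B$ fails $\X.s\ob_r\theta$, i.e.\ $B$ is disjoint from the causal future of $\X.s$. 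Two one-line observations then pin down the target positions: a node of $F$ is not below $\X.e$ in $\ob_r$, so nothing in the order forbids moving it after $\X.e$; whereas a node $\theta\in B$ satisfies $\theta\ob_r\X.e$, so by \Cref{obs: ob occurs before} it must remain before $\X.e$, and therefore the only way to take it out of the interval is to move it before $\X.s$.

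For the construction I would invoke \Cref{thm:DtFTSO} in two phases and compose the resulting runs; since local equivalence $\loc$ is transitive, the final run is still $\loc r$. In \emph{Phase 1} I delay $F$ past $\X.e$: choose a set $S_1$ whose causal past $\Past_r(S_1)$ contains $\Pastrp(\{\X.e\})$ together with all of $i$'s and $d_i$'s activity up to $\X.e$, yet is disjoint from $F$; with a shift $\Delta_1$ exceeding the interval length, every node of $F$ is pushed strictly beyond $t_{\X.e}$ while the operation and its causal past stay fixed. In \emph{Phase 2} I shift the whole operation past all of $B$: choose $S_2$ so that $\Past_r(S_2)$ fixes $B$ (and $\Past_r(\X.s)$) while $\X.s$, $\X.e$ and all the $i$/$d_i$ activity strictly between them fall in the delayed part. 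This assignment is legitimate precisely because no node of $B$ lies above $\X.s$, so fixing $B$ while delaying $\X.s$ respects $\ob_r$. Taking $\Delta_2$ larger than the length of the operation moves the operation later than every node of $B$. In the final run $B$ then lies before the shifted $\X.s$ and $F$ after the shifted $\X.e$, and what remains in the shifted interval is only $i$'s own actions, moving as a block, together with the $d_i$ propagations serving them --- exactly the solo condition, so I set $\tilde r$ to be this run.

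The main obstacle is the choice of $S_1,S_2$ and the verification that the two phases compose into a single legal TSO run. I must ensure each ``fixed'' set is genuinely of the form $\Past_r(S)$ --- downward-closed under $\ob_r$ and finite (which holds because $\theta\ob_r\theta'$ forces $t_\theta<t_{\theta'}$) --- and that the operation's nodes and the nodes of $F$ really land in the complementary delayed part. This is where the no-feedback-loop hypothesis does the essential work: it guarantees that no other-agent interval node is simultaneously in the future of $\X.s$ and the past of $\X.e$, so $B$ and $F$ can be sent to opposite sides without violating the order-preservation built into \Cref{thm:DtFTSO} (cf.\ \Cref{claim:order}). The second delicate point, already met inside the proof of \Cref{thm:DtFTSO}, is the possible conversion of a read: when a propagation in $B$ is moved before $\X.s$ while $i$'s matching read stays in the interval, an $\RfM$ may become an $\RfB$ or vice versa. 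I would dispatch this exactly as in that proof, using that the $\ob_r$-edges among memory accesses to a common variable are preserved, so $i$ reads the same tag in $\tilde r$ and local equivalence is kept. Finally I would check that Phase~2's shift does not drag any node of $F$ back into the interval, which follows by choosing the two shift magnitudes sufficiently far apart.
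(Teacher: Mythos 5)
Your two-phase plan is the same as the paper's proof: a first application of \Cref{thm:DtFTSO} that pins the causal past of $\X.e$ and evacuates the remaining nodes beyond the end of $\X$, followed by a second application that pins the surviving interference (your $B$) and delays the operation past it, with the no-feedback-loop hypothesis used exactly where you use it---to show that $\X.s$ does not lie in the past of the second fixed set. However, your concrete specification of the fixed sets has two genuine defects, each of which breaks the construction as written.

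First, your Phase-1 set need not exist. You require $\Past_r(S_1)$ to contain \emph{all} of $d_i$'s activity up to $\X.e$ while being disjoint from $F$. Any set of the form $\Past_r(S_1)$ is downward closed under $\ob_r$, and a node of $F$ can sit $\ob_r$-below a propagation by $d_i$: if $j$ performs an $\RMW$ on $x$ inside the interval and $d_i$ later propagates a write to $x$, clause~3 of \Cref{def:obTSO} yields $\node{j,t}\ob_r\node{d_i,t'}$, yet nothing forces $\node{j,t}\ob_r\X.e$, since an edge from $d_i$ back into $i$'s timeline requires an $\F$ or $\RMW$ at $i$ (\Cref{obs:si-iimpliesfence}; the $\prop$-then-$\RfM$ edge is explicitly excepted). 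So containing that $d_i$ node forces containing a node of $F$. The paper's $S_1=\Pastrp(\X.e)\cup\Pastrp(\{\node{b,t_{\X.s}-1}\}_{b\in\Ag})$ simply does not pin $d_i$'s non-past propagations; they are delayed together with $F$, which is harmless because $d_i$ never violates solo-ness. Second, and more seriously, you never pin the nodes \emph{before} the operation begins. An other-agent node at time $t_0<t_{\X.s}$ that is not in $\Past_r(\X.e)$ is delayed by $\Delta_1$ in your Phase 1 and can land inside the window (e.g.\ $t_0=t_{\X.s}-1$ lands exactly at $t_{\X.e}$ when $\Delta_1=t_{\X.e}-t_{\X.s}+1$). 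Such a newcomer is not in $B$, and since $\Past_r(\X.s)\subseteq\Past_r(\X.e)$ (locality plus transitivity) it is not in the past of your Phase-2 set either; it is therefore delayed by $\Delta_2$ together with the operation and stays concurrent with $\X$ in $\tilde r$, defeating solo-ness. This is exactly what the start-line term $\Pastrp(\{\node{b,t_{\X.s}-1}\}_{b\in\Ag})$ in the paper's $S_1$ (retained, minus the $i$/$d_i$ nodes, in its $S_2$) is there to prevent. A final, minor point: no choice of ``far apart'' shift magnitudes is needed to keep $F$ out of the way; \Cref{thm:DtFTSO} only ever delays, and every node of $F$, sitting above $t_{\X.e}$ after Phase 1, lies outside the past of the Phase-2 set and is pushed forward again by the same $\Delta$.
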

    
    \begin{proof}
    Let $\X$ be an operation in $r$ that does not contain a feedback loop. Denote $\X.s = \node{i,t_1}$ and $\X.e = \node{i,t_2}$.
        Let \[S_1\triangleq\Past^+(\X.e)\cup\Past^+(\{\node{b,t_1-1}\}_{b\in\Ag})\]

        We apply \Cref{thm:DtFTSO} to the run $r$, the set $S_1$ and $\Delta=t_2-t_1+1$ to delay all nodes not in $\Past^+(S_1)$ by $\Delta$, obtaining a run $r'\loc r$ (See \Cref{fig:runningsolo} (b)). By construction, every agent $b$ is delayed in~$r'$  for $\Delta$ rounds starting from time $t_b$, where $t_b$ is the latest time such that $\node{b,t_b} \ob_r \X.e$ (if no such $t_b$ exists, we set $t_b \triangleq t_1-1$). Observe that the only nodes at which actions are performed concurrent  with~$\X$ in~$r'$ are ones that are in $\Past_{r'}(S_1)$. 
        We now define:
\[S_2~\triangleq ~S_1\backslash\{\node{i,t},\node{d_i,t}|t\geq t_1-1\}\] 
That is, we obtain $S_2$ by  removing  from $S_1$ all nodes at $d_i$ or $i$ during $\X$'s execution. 
We claim that $\X.s \notin \Past_{r'}^+(S_2)$. Indeed, if $\X.s \ob_{r'} \theta$ for some $\theta = \node{b,t} \in S_2$, then $t>t_1$ and by definition of $S_1$, it follows that $\theta\ob_{r'}\X.e$ and $b \notin\{d_i, i\}$ (since nodes of $d_i,i$ at time greater than $t_1-1$ are not in $S_2$), implying a feedback loop $\X.s \ob_{r'} \theta \ob_{r'} \X.e$, contradicting the assumption that~$\X$ does not contain a feedback loop. 
We now apply \Cref{thm:DtFTSO} to $r'$ and $S_2$ with the same $\Delta=t_2-t_1+1$ and obtain the run $\tilde{r}$. Because $\X.s\notin\Past_{r'}^+(S_2)$, we now obtain that $\X$ runs solo  in~$\tilde{r}$ between times $t_1+\Delta$ and $t_2+\Delta$
(see \Cref{fig:runningsolo}(d)).
        \begin{figure}
            \centering
            \includegraphics[width=1\linewidth]{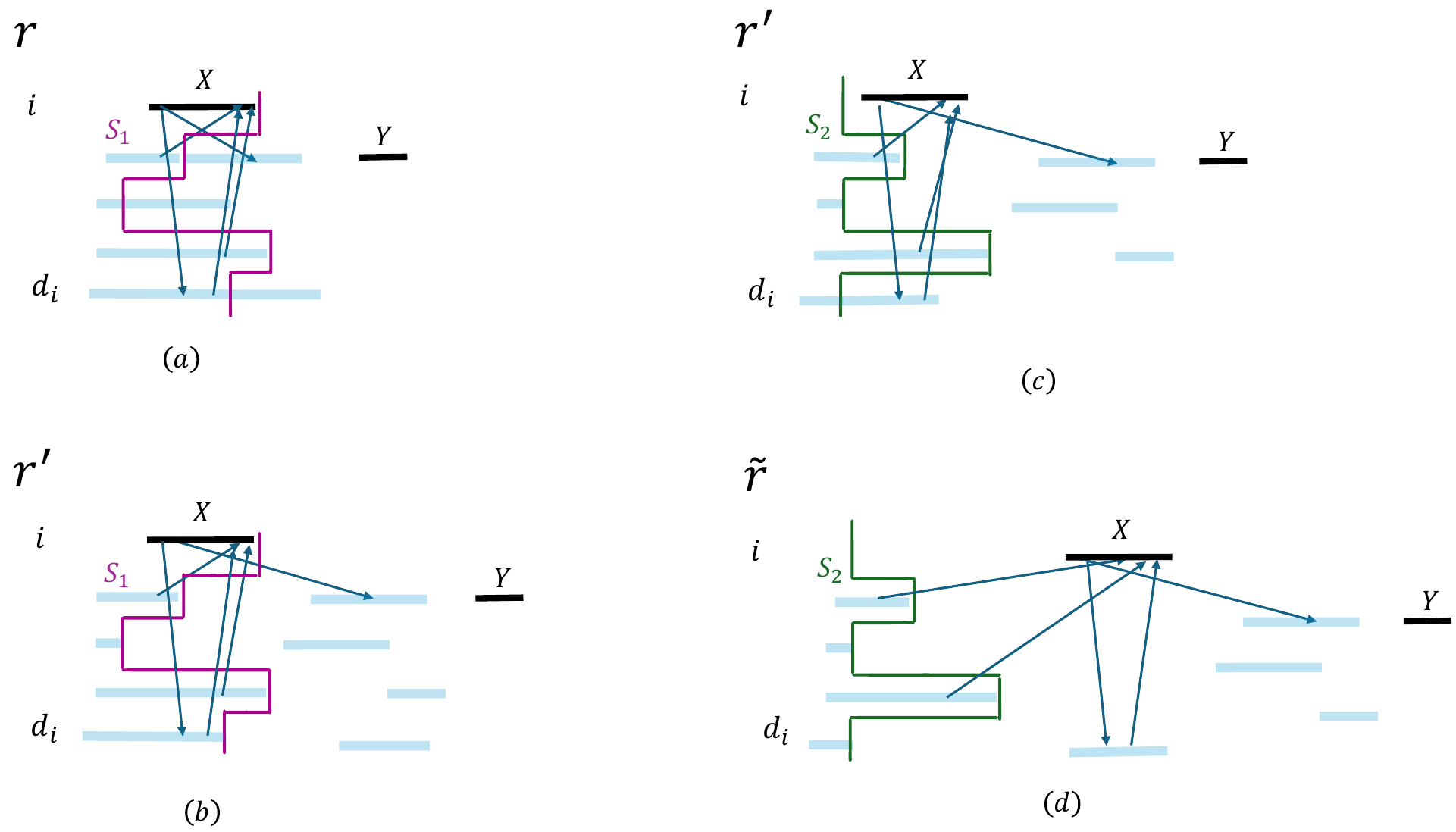}
            \caption{The construction of $\tilde{r}$ in \Cref{thm:feedbackloop}; blue lines represent nodes where agents move.}
            \label{fig:runningsolo}
        \end{figure}
    \end{proof}

We can now formally prove that in order to ensure that some value written in an operation has been propagated, the operation forces a process to either complete a feedback loop, or perform a costly synchronization action:
\begin{theorem}
\label{thm:feedbackorsynch}
 Let $\X$ be an operation in $r$ during which process $i$ writes a tag $\kappa$. If 
 \begin{enumerate}
     \item $\X$ does not contain a feedback loop, and 
     \item $i$ does not perform a $\F$ or $\RMW$ action during~$\X$. 
 \end{enumerate}
Then there is a run $r'\loc r$ in which $\kappa$ is not propagated to memory before $\X.e$.
\end{theorem}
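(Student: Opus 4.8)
The plan is to build $r'$ by one application of \Cref{thm:DtFTSO} that pushes the propagation of $\kappa$ past the end of $\X$, after first establishing that this propagation is not causally before the completion of $\X$. Write $\X.s=\node{i,t_1}$, $\X.e=\node{i,t_2}$, let $\node{i,t_w}$ with $t_1<t_w<t_2$ be the node at which $i$ writes $\kappa$, and let $\theta_{\ret}$ be the node carrying the $\ret(\X)$ action, so that $\theta_{\ret}$ is $i$'s last node inside $\X$ and $\theta_{\ret}\ob_r\X.e$ by locality. Each tag is propagated at most once; if $\kappa$ is never propagated before $\X.e$ in $r$ we take $r'=r$, so assume $\kappa$ has a unique propagation node $\node{d_i,t_\kappa}$ with $t_\kappa<t_2$.

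The heart of the argument is the claim $\node{d_i,t_\kappa}\not\ob_r\theta_{\ret}$. First, $\X.s\ob_r\node{d_i,t_\kappa}$: locality gives $\node{i,t_1}\ob_r\node{i,t_w}$ and clause~\ref{it:defhbprocprop} of \Cref{def:obTSO} gives $\node{i,t_w}\ob_r\node{d_i,t_\kappa}$, since $\node{d_i,t_\kappa}$ propagates the item written at $\node{i,t_w}$. Now suppose for contradiction that $\node{d_i,t_\kappa}\ob_r\theta_{\ret}$ and fix a witnessing chain. Any chain node $\node{c,t}$ with $c\notin\{i,d_i\}$ would give $\X.s\ob_r\node{c,t}$ (transitivity through $\node{d_i,t_\kappa}$) and $\node{c,t}\ob_r\theta_{\ret}\ob_r\X.e$, i.e.\ a feedback loop $\X.s\fb_r\X.e$, contradicting hypothesis~(1); hence every chain node lies on $i$'s or $d_i$'s timeline. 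As the chain runs from $d_i$ to $i$, it contains a base step $\node{d_i,s}\ob_r\node{i,s'}$, and by \cref{obs:si-iimpliesfence} we get $\node{i,s'}.\alpha\in\{\F,\RMW\}$. By \cref{obs: ob occurs before} all chain nodes have times in $(t_1,t_2]$, and $\node{i,s'}$ occurs at a time at most that of $\theta_{\ret}$, so $\node{i,s'}$ is either $\theta_{\ret}$ itself (impossible, as $\theta_{\ret}.\alpha=\ret(\X)$) or a genuine $\F$/$\RMW$ performed by $i$ strictly inside $\X$ (contradicting hypothesis~(2)). This proves the claim.

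With the claim in hand, apply \Cref{thm:DtFTSO} to $r$ with $S\triangleq\{\theta_{\ret}\}$ and any $\Delta>t_2$, obtaining $r'\loc r$ of the same protocol~$P$. Since $\theta_{\ret}\in\Pastrp(S)$, the return action keeps its timing, so $\X$ still completes at time $t_2$ and $t_{\X.e}(r')=t_2$. On the other hand, the claim gives $\node{d_i,t_\kappa}\notin\Past_r(\theta_{\ret})$, and as $\node{d_i,t_\kappa}\ne\theta_{\ret}$ we get $\node{d_i,t_\kappa}\notin\Pastrp(S)$; by part~(a) of the theorem the unique $\prop$ of $\kappa$ is therefore performed in $r'$ at time $t_\kappa+\Delta>t_2=t_{\X.e}(r')$. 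Hence $\kappa$ is not propagated before $\X.e$ in $r'$, and $r'\loc r$, as required.

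The main obstacle is the causal-past claim, and within it the step controlling where the chain re-enters $i$'s timeline: one must use exception~(b) of clause~\ref{it:samevar} (which rules out a $\prop$-to-$\RfM$ edge) together with \cref{obs:si-iimpliesfence} to argue that any base edge from $d_i$ into $i$ carries an $\F$ or $\RMW$. A second, easily overlooked subtlety is that $S$ must be anchored at the \emph{return} node $\theta_{\ret}$ rather than at $\X.e=\node{i,t_2}$: if $i$ performs an $\F$ or $\RMW$ immediately after $\X$ completes, that synchronization action would create an edge $\node{d_i,t_\kappa}\ob_r\X.e$ and pull $\kappa$'s propagation into $\Pastrp(\{\X.e\})$, so cutting at $\X.e$ would fail to delay it; cutting one step earlier, at the internal $\ret(\X)$ action, avoids this while still fixing $t_{\X.e}(r')=t_2$. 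Alternatively, one may first pass to a locally equivalent run in which $\X$ runs solo via \Cref{thm:feedbackloop}, which makes the confinement of the chain to $\{i,d_i\}$ immediate.
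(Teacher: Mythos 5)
Your proof is correct, but it takes a genuinely different route from the paper's. The paper proceeds in two stages: it first invokes \Cref{thm:feedbackloop} (itself two applications of \Cref{thm:DtFTSO}) to pass to a locally equivalent run in which $\X$ runs solo, where confinement of any $\ob$ chain to the timelines of $i$ and $d_i$ is immediate; it then concludes $\node{d_i,t_{\X.s}}\not\ob\X.e$ from \Cref{obs:si-iimpliesfence} and hypothesis~2, and applies \Cref{thm:DtFTSO} a third time with the anchor set $S=\{\X.e\}\cup\{\node{b,t_{\X.s}}\}_{b\in\Ag}$. You instead work directly in $r$: you use hypothesis~1 to confine any putative chain from the propagation node $\node{d_i,t_\kappa}$ to the return node to the timelines of $\{i,d_i\}$ (any excursion through another agent would close a feedback loop, since $\X.s\ob_r\node{d_i,t_\kappa}$), then use \Cref{obs:si-iimpliesfence} exactly as the paper does to force an $\F$ or $\RMW$ inside $\X$, and finish with a single application of \Cref{thm:DtFTSO} anchored at $S=\{\theta_{\ret}\}$. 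Your route is more economical (one DtF application instead of three) and, by anchoring at the internal $\ret(\X)$ node rather than at $\X.e$, it cleanly sidesteps a boundary case the paper's proof glosses over: if $i$ performs an $\F$ or $\RMW$ at the node $\X.e$ itself (i.e., immediately after returning), then $\node{d_i,t_\kappa}\ob_r\X.e$ does hold, and the paper's claim $\node{d_i,t_{\X.s}}\not\ob\X.e$ is only valid under the reading that ``during $\X$'' includes that node. What the paper's detour buys is reusability: the solo-run lemma (\Cref{thm:feedbackloop}) serves again in \Cref{thm:xaybsolonew} and \Cref{thm:scantoupnew}, whereas your confinement argument is tailored to this statement. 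One shared caveat: both your proof and the paper's rely on the reading of \Cref{thm:DtFTSO}(a) in which nodes of $S$ itself (i.e., all of $\Pastrp(S)$, not merely $\Past_r(S)$) retain their timing; this is what the appendix construction actually establishes, so it is an imprecision of the theorem statement rather than a gap in your argument.
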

\begin{proof}
    Let $r$ be a run in which process~$i$ runs solo between times $t_{\X.s}$ and $t_{\X.e}$ and in which $\node{i,t'}.\alpha=\W$ with $\node{i,t'}.\alpha.tag={\kappa}$ for some $t_{\X.s}\leq t'< t_{\X.e} $. Assume items 1 and 2 do not hold. By \Cref{thm:feedbackloop} we have that there is a run $r'\loc r$ in which $i$ runs solo during~$X$. 
        By local equivalence of $r'$, we have that $i$ writes $\kappa$ during $\X$ as well and $i$ does not perform $\F$ nor $\RMW$ during $\X$ in $r'$ neither. Thus, since $\X$ runs solo in $r'$, the only agents that might take actions during $\X$ are $i$ and $d_i$. By \Cref{obs:si-iimpliesfence} and the fact that $\X$ does not contain $\RMW$ or $\F$ we obtain that $\node{d_i,t_{\X.s}}\not\ob_r\X.e$. 
    Applying \Cref{thm:DtFTSO} to $r'$, with $S=\{\X.e\}\cup\{b,t_{\X.s}(r')\}_{b\in\Ag}$ and $\Delta=t_{\X.e}(r')-t_{\X.s}(r')$, we obtain a run $\tilde{r}\loc r'\loc r$ in which $\kappa$ is written by~$i$ during~$\X$ and is not propagated to memory before $\X.e$. Q.E.D.
\end{proof}

This is a powerful result. In memory models like TSO, a process can share information with others only by writing to memory and ensuring that the write becomes visible. Consequently, if completing an operation requires other processes to observe a write, the process must either construct a feedback loop—i.e., receive confirmation of information flow through another process, or perform a $\F$ or $\RMW$ operation. However, in obstruction-free protocols—where each process must be able to complete its operation independently of others—waiting for feedback is not an option. Our theorem thus shows that in such settings, using fences or $\RMW$ operations becomes a necessary mechanism to ensure the visibility of writes before an operation can safely terminate.
\subsection{Linearizable Objects}
\label{sec:linearizable}
We now show how our results can be applied in the context of linearizable implementation of shared objects imply the necessity of using $\F$ or $\RMW$. Roughly speaking, an implementation $\mathcal{I}$ of an object is said to be linearizable if in every execution of $\mathcal{I}$, operations appear to occur instantaneously in a way that is consistent with the original execution and that satisfy the sequential specification of the object. See \cite{HerlihyLineari} for a formal definition. We focus on the implementation of classical shared objects such as registers and snapshots in the TSO memory model.
\subsubsection{Registers}
A \emph{register} is a shared object that supports two operations: $\lREAD$ and $\lWRITE$. We consider the implementation of a multi-writer multi-reader (MWMR) register, where every process may perform both reads and writes, in the TSO memory model. A register is said to be \emph{atomic} if its operations appear instantaneous, and each read returns the value written by the most recent preceding write (or a fixed default value if no such write exists). For ease of exposition, we assume that a given value~$v$ will be written to the register at most once in any given execution. 
This avoids ambiguity in matching read operations to the corresponding write, which is particularly useful when reasoning about linearizability. Since our focus is on the ordering and visibility of operations, rather than value repetition or reuse, this restriction allows us to uniquely associate each read with a specific write without requiring additional disambiguation mechanisms such as timestamps or version numbers. Importantly, this assumption does not restrict generality, as any implementation where values may be written multiple times can be transformed into an equivalent one using uniquely tagged values. We consider in \Cref{thm:xaybsolonew} and \Cref{cor:writemustsync} register implementations that are obstruction-free \cite{DoubleQueueHerlihy2003}. Informally, an implementation is obstruction-free if it guarantees that a process is able to complete its pending operations in a finite number of its own steps, given that other processes do not take steps.
We say that an operation $\X$ is a \emph{$v$-operation}
and write $\X v$ if (i) $\X$ is a read that returns value $v$, or (ii) $\X$ is a write operation writing~$v$.

An important implication of \Cref{thm:DtFTSO} is the following result:
\begin{theorem}\label{thm:registerTSO}
    Let~$r$ be a run of a linearizable register implementation in TSO and let $\Xa <_r \Yb$ such that $\Yb$ completes in $r$. 
    If $\Yb$ runs in isolation in $r$ and $\sa \neq \sbb$, then $\Xa \ob_r \Yb$.
\end{theorem}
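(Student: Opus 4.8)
The plan is to argue by contradiction through the reordering power of \Cref{thm:DtFTSO}, mirroring the message-chain argument of \cite{CommRequirDISC24}. If $\X$ and $\Y$ are performed by the same process, the locality clause of \Cref{def:obTSO} already gives $\X.s\ob_r\Y.e$, so assume they run on distinct processes $i\ne j$, write $\X.s=\node{i,t_{\X.s}}$ and $\Y.e=\node{j,t_{\Y.e}}$, and suppose toward a contradiction that $\Xa\not\ob_r\Yb$, i.e. $\X.s\not\ob_r\Y.e$.

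First I would determine, for $S=\{\Y.e\}$, which nodes \Cref{thm:DtFTSO} keeps fixed and which it is free to delay. On the one hand, the locality clause gives $\Y.s\ob_r\Y.e$, and more generally $\node{j,t}\ob_r\Y.e$ for every $t_{\Y.s}\le t<t_{\Y.e}$, so all of $\Y$'s nodes lie in $\Past_r^+(S)$. On the other hand, no node $\node{i,t}$ with $t\ge t_{\X.s}$ lies in $\Past_r^+(S)$: if $\node{i,t}\ob_r\Y.e$ held for some $t>t_{\X.s}$, then locality would give $\X.s\ob_r\node{i,t}$ and transitivity would yield $\X.s\ob_r\Y.e$, against our assumption (the case $t=t_{\X.s}$ is excluded directly). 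In particular $\X.e=\node{i,t_{\X.e}}\notin\Past_r^+(S)$.

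Next I would apply \Cref{thm:DtFTSO} to $r$, $S$, and any shift $\Delta>t_{\Y.e}$, obtaining a locally equivalent run $r'\loc r$. By part~(a) every node of $\Y$ keeps its original time, while $\X$ is shifted forward by $\Delta$, so that $\X$ is now invoked at time $t_{\X.s}+\Delta\ge\Delta>t_{\Y.e}$; hence $\Y$ completes strictly before $\X$ is invoked and the real-time order is reversed: $\Y<_{r'}\X$. Since $r'\loc r$, every process performs the same actions and returns the same values in both runs, so in $r'$ the operation $\X$ is still an $\sa$-operation and $\Y$ still a $\sbb$-operation, and $r'$ is a run of the same linearizable implementation.

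The remaining, and main, task is to contradict linearizability of $r'$. Because $r'$ is a run of a linearizable register implementation it admits a linearization consistent with $<_{r'}$ and the sequential register specification; I would exploit that $\Y$ runs in isolation --- pinning $\Y$'s linearization point at a clean cut, with every real-time predecessor ordered before it and every successor after it --- together with the one-write-per-value convention, which uniquely matches each read to the write of its value. A short case analysis on whether $\X$ and $\Y$ are reads or writes then shows that the reversed order $\Y<_{r'}\X$, combined with $\sa\ne\sbb$, cannot be reconciled with the register semantics: the write supplying $\Y$'s value (or the write that $\X$ reads) is forced onto the wrong side of $\Y$'s cut, making some read return a value other than that of the most recent preceding write. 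The delicate point --- and the step I expect to be the real obstacle --- is ruling out the ``escape'' in which this witnessing write is simply relocated to repair the linearization: this is where the backward-closure of $\Past_r$ under $\ob_r$ must be invoked to show that any witness able to remain on the correct side of $\Y$'s cut would itself $\ob_r$-precede $\Y.e$ and, chasing the resulting memory-access and propagation links of \Cref{def:obTSO}, reinstate $\X.s\ob_r\Y.e$, contradicting the assumption. Handling this bookkeeping uniformly across the read/write cases, and across the TSO-specific distinction between $\Y$ reading from memory and from its buffer, is where the proof requires the most care.
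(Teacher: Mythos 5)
Your setup is sound, and it matches the route the paper intends: assume $\X.s\not\ob_r\Y.e$, take $S=\{\Y.e\}$, observe that all of $\Y$'s nodes lie in $\Past_r^+(S)$ while no node $\node{i,t}$ with $t\ge t_{\X.s}$ does, and apply \Cref{thm:DtFTSO} with large $\Delta$ to obtain $r'\loc r$ with $\Y<_{r'}\X$. (One small repair: by irreflexivity of $\ob_r$ we have $\Y.e\notin\Past_r(S)$, so part~(a) does \emph{not} keep the action at $\Y.e$ in place; you should instead invoke part~(b) with $\Y.e\in S\subseteq\Past_r^+(S)$ to conclude that $j$'s local state at time $t_{\Y.e}$ in $r'$ already records $\ret(\Y)$, hence $\Y$ completes by then in $r'$.) Note that the paper itself gives no more than this skeleton: its proof consists of asserting that the proof of Corollary~20 of \cite{CommRequirDISC24} applies verbatim once happens-before is replaced by $\ob_r$, precisely because that proof uses nothing beyond a Delaying-the-Future theorem and linearizability.

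The genuine gap is your final step, which you yourself flag as ``the real obstacle'': deriving the contradiction from $r'$. As sketched, it cannot be completed. Take $\X=\lWRITE(\sa)$ and $\Y=\lREAD$ returning $\sbb$, and let $\W(\sbb)$ denote the unique write of $\sbb$. Linearizability of $r$ forces $\W(\sbb)<_r\Y$ and rules out $\W(\sbb)<_r\X$, but it is entirely consistent with $\X.s\not\ob_r\Y.e$ that \emph{every} node of $\W(\sbb)$ lies in $\Past_r(\Y.e)$; in that case $r'$ satisfies $\W(\sbb)<_{r'}\Y<_{r'}\X$ and is perfectly linearizable via the order $\W(\sbb)$, then $\Y$, then $\X$ --- no contradiction arises from $r'$ at all. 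Your proposed escape-blocker --- that a witness remaining on the correct side of $\Y$'s cut would ``reinstate $\X.s\ob_r\Y.e$'' by chasing memory-access and propagation links --- is false: linearizability of $r$ only gives the \emph{temporal} fact $t_{\X.s}\le t_{\W(\sbb).e}$, and temporal precedence never yields $\ob_r$ edges; no causal link between process $i$ and the witness's process need exist. It is also methodologically off target, since the entire point of the paper's argument is that no TSO-specific link-chasing is required (that is exactly why the message-passing proof transfers verbatim). What is actually missing is a further construction beyond a single reversal: since the reversed run can be linearizable, the argument must bring in additional runs or extensions of the protocol with new invocations that are guaranteed to complete --- this is where a liveness hypothesis enters, as signalled by the paper's footnote that the transplanted result ``requires only obstruction-freedom although wait-freedom is assumed'' in \cite{CommRequirDISC24}. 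Your proof never invokes any such assumption, and the case above shows that without it the argument cannot close.
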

\begin{proofsketch}
    This result makes nontrivial use of \Cref{thm:DtFTSO}. An analogous result with happens before replacing 
    $\ob$ is Corollary~20 from~\cite{CommRequirDISC24}. Remarkably, the proof of Corollary~20 there applies verbatim to our statement, because it depends only on their Delaying the Future theorem, which is the counterpart of our \Cref{thm:DtFTSO} when \emph{occurs before} is replaced by \emph{happens before}.\footnote{Observe that the results of \cite{CommRequirDISC24} requires only obstruction-freedom although wait-freedom is assumed there.}
\end{proofsketch}

While one might expect that a read of a value would need to depend on the write of the same value, \Cref{thm:registerTSO} is stronger. For example, both $\Xa$ and $\Yb$ may be reads, or both could be write operations. Obviously, between a read of~$\sa$ and a read of $\sbb\ne\sa$ there must be a write of~$\sbb$. But the fact that an $\ob_r$ chain must be constructed between the read of~$\sa$ and the read of $\sbb$ formally implies that the implementation must use some shared variables in common to the both read operations.

\begin{theorem}\label{thm:xaybsolonew}
Let $\mathcal{I}$ be an obstruction-free linearizable register implementation in TSO, and let $\sbb\neq \sa $.
Suppose that an  operation $\Xa$ completes after running solo and in isolation in a run $r$ of $\mathcal{I}$. If $\Xa$ contains neither $\F$ nor $\RMW$ actions in $r$,  then 
 there exists a run $\tilde{r}$ of $\mathcal{I}$ indistinguishable to $i$ from~$r$ until the end of $\Xa$ in which $\Xa<_{\tilde{r}}\W(\sbb)$ for the 
 $\lWRITE$ operation $\W(\sbb)$, 
 and $\W(\sbb)$  contains a $\F$ or $\RMW$ action.
\end{theorem}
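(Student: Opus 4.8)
The plan is to manufacture $\tilde r$ by letting a \emph{fresh} $\lWRITE(\sbb)$ operation run, performed by some process $j\neq i$, immediately after $\Xa$ and in isolation, and then to read off from the forced $\ob$-chain between $\Xa$ and this write that the write must synchronize. Write $\Xa.s=\node{i,t_1}$ and $\Xa.e=\node{i,t_e}$. The first step is to record that, because $\Xa$ runs solo and in isolation, it contains no feedback loop: by \Cref{obs: ob occurs before} any node $\node{b,t}$ with $\Xa.s\ob_r\node{b,t}\ob_r\Xa.e$ satisfies $t_1<t<t_e$, and solo execution forces $b\in\{i,d_i\}$, so no feedback node exists.

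Next, using the hypothesis that $\Xa$ performs neither $\F$ nor $\RMW$ together with the absence of a feedback loop, I would invoke \Cref{thm:feedbackorsynch} to obtain a run $r'\loc r$ in which no tag written by $i$ during $\Xa$ is propagated to memory before $\Xa.e$; since $r'\loc r$, this run is indistinguishable to $i$ through $\Xa$. I would then define $\tilde r$ to agree with $r'$ up to $\Xa.e$, after which the environment invokes $\lWRITE(\sbb)$ at a process $j\neq i$ (available since the register is multi-writer), scheduling only $j$ and $d_j$ while keeping $d_i$ idle until after the write completes. Obstruction-freedom guarantees that $\W(\sbb)$ terminates, at some node $\W(\sbb).e=\node{j,t_2}$, and by construction $\Xa<_{\tilde r}\W(\sbb)$ with $\W(\sbb)$ running solo and in isolation, so the open interval $(t_1,t_2)$ contains actions only of $i,d_i,j,d_j$.

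With $\tilde r$ in hand I would apply \Cref{thm:registerTSO} — legitimate because $\Xa<_{\tilde r}\W(\sbb)$, $\W(\sbb)$ completes and runs in isolation, and $\sa\neq\sbb$ — to conclude $\Xa.s\ob_{\tilde r}\W(\sbb).e$, i.e.\ $\node{i,t_1}\ob_{\tilde r}\node{j,t_2}$. By the time bounds of \Cref{obs: ob occurs before}, every intermediate node of this chain lies in $(t_1,t_2)$, so it is an $\ijj$-only chain and \Cref{lem:obcomposedOnlyOf_i_And_j} applies: one of its three cases must hold. Case~1 requires $\node{i,t}.\alpha=\RMW$, impossible since $\Xa$ contains no $\RMW$ and $i$ is idle after $\Xa.e$. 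Case~2 requires a $\prop$ by $d_i$ at some $t'\in(t_1,t_2)$, which is exactly what the construction forbids. Hence Case~3 holds, yielding $\node{j,t'}.\alpha\in\{\F,\RMW\}$ for some $t<t'<t_2$; as $j$ moves only during $\W(\sbb)$, this fence or $\RMW$ lies inside $\W(\sbb)$, establishing the claim.

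I expect the main obstacle to be Case~2, i.e.\ genuinely ruling out every $\prop$ of $d_i$ inside $(t_1,t_2)$, and this is precisely where the no-$\F$/no-$\RMW$ hypothesis on $\Xa$ is indispensable: without it $\Xa$'s writes could reach memory and $j$ could establish the chain through a mere $\RfM$, so that $\W(\sbb)$ would not be forced to synchronize. I must therefore check that suppressing $d_i$ is consistent both during $\Xa$ (handled by \Cref{thm:feedbackorsynch}, which keeps all of $\Xa$'s buffered tags, including any $\RfB$ of a pre-$\Xa$ tag, un-propagated) and after $\Xa.e$ (handled by scheduling $d_i$ only past $t_2$, compatible with fairness). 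A secondary point is the degenerate case in which $\Xa$ performs no $\RfM$: then Case~3 is vacuous too, and with Cases~1--2 excluded no $\ijj$-only chain could exist, contradicting \Cref{thm:registerTSO}; hence this case cannot arise for a linearizable $\mathcal{I}$ under our construction (intuitively, a later solo read by a process $\neq j$ could not observe the still-buffered $\sbb$, violating linearizability).
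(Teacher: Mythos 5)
Your overall architecture coincides with the paper's own proof: build $\tilde r$ by letting a process $j\neq i$ run $\W(\sbb)$ alone immediately after $\Xa$, invoke \Cref{thm:registerTSO} to obtain an $\ob_{\tilde r}$-chain from $\Xa$ to $\W(\sbb)$, observe that the chain is $\ijj$-only, and then apply \Cref{lem:obcomposedOnlyOf_i_And_j}, ruling out Cases~1 and~2 so that Case~3 forces a $\F$ or $\RMW$ inside $\W(\sbb)$. The gap is in how you rule out Case~2. \Cref{thm:feedbackorsynch} does not deliver what your construction needs: its conclusion concerns only a tag $\kappa$ that $i$ \emph{writes during} $\Xa$. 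It is silent about tags already sitting in $i$'s store buffer at $\Xa.s$ (running solo and in isolation does not make the buffer empty --- writes of earlier, completed operations of $i$ may still be buffered), and it is vacuous if $\Xa$ writes no tag at all. Hence the run $r'$ you obtain from it may still contain $\prop$ actions by $d_i$ during $\Xa$ that propagate pre-$\Xa$ tags, and your claim that the construction forbids every $\prop$ of $d_i$ inside $(t_1,t_2)$ --- as well as the parenthetical assertion that \Cref{thm:feedbackorsynch} keeps ``any $\RfB$ of a pre-$\Xa$ tag'' un-propagated --- is not justified by that theorem's statement.

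To see that this is fatal rather than cosmetic, suppose $i$'s buffer at $\Xa.s$ holds $\mu=\tup{x,u}$ from an earlier completed write operation, and $\Xa$ is a read that reads $x$ (necessarily via $\RfB$, since a write to $x$ is pending in the buffer) and writes nothing. Then \Cref{thm:feedbackorsynch} is inapplicable, and nothing in your construction prevents $d_i$ from propagating $\mu$ during $\Xa$, after that $\RfB$. In the resulting $\tilde r$ the chain
$\Xa.s \ob_{\tilde r} \node{i,t} \ob_{\tilde r} \node{d_i,t'} \ob_{\tilde r} \node{j,t''} \ob_{\tilde r} \W(\sbb).e$
exists, where the three middle nodes carry the $\RfB$ of $\mu$, the $\prop$ of $\mu$, and an $\RfM$ of $x$ by $j$, using clauses~2 and~3 of \Cref{def:obTSO}; so \Cref{thm:registerTSO} is satisfied, Case~2 of \Cref{lem:obcomposedOnlyOf_i_And_j} holds, and $\W(\sbb)$ need contain no $\F$ or $\RMW$ --- your conclusion fails for this $\tilde r$. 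The paper closes exactly this hole by applying \Cref{thm:DtFTSO} directly to $r$ with $S=\{\Xa.e\}\cup\{\node{b,t_{\Xa.s}}\}_{b\in\Ag}$ and $\Delta=t_{\Xa.e}-t_{\Xa.s}+2$: since $\Xa$ contains no $\F$ or $\RMW$, \Cref{obs:si-iimpliesfence} shows that \emph{no} node of $d_i$ during $\Xa$ lies in $\Past^+_r(\Xa.e)$, hence \emph{all} of $d_i$'s actions in the window --- propagations of old and new tags alike --- are shifted out of it. Replacing your appeal to \Cref{thm:feedbackorsynch} by this direct application of the DtF theorem repairs the argument; the remainder of your proof then goes through and is essentially the paper's.
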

\begin{proof}
    Let $r$ be a run of $\mathcal{I}$ in which operation $\Xa$ performed by process $i$ runs solo and completes, and such that $\Xa$ contains neither $\F$ nor $\RMW$ actions in $r$. Because $\Xa$ runs solo, we have by \Cref{obs:si-iimpliesfence} that $\node{d_i,t_{\Xa.s}}\notin\Past^+(\Xa.e)$.
    Therefore, applying \Cref{thm:DtFTSO} to $r$, with~$S=\{\Xa.e\}\cup\{\node{b,t_{\Xa.s}}\}_{b\in\Ag}$ and $\Delta=t_{\Xa.e}-t_{\Xa.s}+2$ generates a run $r'\loc r$ in which $\node{d_i,t'}.\alpha\neq\prop$ for every $t_{\Xa.s}\leq t'\leq t_{\Xa.e}$ and clearly, $\Xa$ runs solo in $r'$ also. We define $\tilde{r}$ in the following way: $\tilde{r}[0,\dots,t_{\Xa.e}]\triangleq r'[0,\dots,t_{\Xa.e}]$. Now, we add the invocation of $\W(\sbb)$ at process $j$ at $t_{\Xa.e}+1$ in~$\tilde{r}$. In addition, we extend $\tilde{r}$ by letting only $j$ take steps and no propagate events by $d_i$ occur. Since $\mathcal{I}$ is obstruction-free, we have that $\Yb$ completes. Observe that by the construction, every occurs-before chain from $\Xa.s$ to $\W(\sbb).e$ is $\ijj$-only.
    Clearly, \cref{lem:obcomposedOnlyOf_i_And_jit1,lem:obcomposedOnlyOf_i_And_jit2} of \Cref{lem:obcomposedOnlyOf_i_And_j} wrt. $\Xa.s$ and $\W(\sbb)$ do not hold in $\tilde{r}$. Thus, \cref{lem:obcomposedOnlyOf_i_And_jit3} of \Cref{lem:obcomposedOnlyOf_i_And_j} must hold. Since $j$ does not take step between the beginning of $\Xa$ and the beginning of $\W(\sbb)$ in $\tilde{r}$, we obtain that for \cref{lem:obcomposedOnlyOf_i_And_jit3} of \Cref{lem:obcomposedOnlyOf_i_And_j} to hold, $\W(\sbb)$ must contain a Fence or~$\RMW$ in $\tilde{r}$. Otherwise, we obtain a contradiction to \Cref{thm:registerTSO}. 
\end{proof}
\begin{remark*}
    The requirement that $\Xa$ runs in isolation can be weakened to requiring that the transitive closure of the operations concurrent with $\Xa$ completes in $r$. \Cref{thm:xaybsolonew} assumes isolation to simplify the statement of the theorem.
\end{remark*}

Note that $\Xa$ and~$\Yb$ in this statement can both be $\lWRITE$ operations. 
 For obstruction-free register implementations,  techniques used in the proof of  
\Cref{thm:DtFTSO}  can be used to show:
\begin{corollary}\label{cor:writemustsync}
        Let $n\ge 2$ and let $\mathcal{I}$ be an obstruction-free linearizable implementation of a register in TSO for~$n$ processes. Then for every $m>0$ there must be a run $r_m$ of $\mathcal{I}$ in which exactly~$m$  $\lWRITE$ operations are performed, and each of the $\lWRITE$\hspace{.08mm}s
performs a fence~$\F$ or an $\RMW$. 
\end{corollary}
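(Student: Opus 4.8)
The plan is to exhibit, for each $m>0$, a run $r_m$ in which a single process---say process~$p$---performs $m$ $\lWRITE$ operations of pairwise distinct values $v_1,\ldots,v_m$, all different from the register's default value, one after another, each running solo and in isolation. Obstruction-freedom guarantees that every such write completes when $p$ runs alone, so $r_m$ exists and contains exactly these $m$ writes and no other operations. It then remains to show that in $r_m$ each $\W(v_k)$ performs a $\F$ or an $\RMW$.

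I would argue this by contradiction, fixing $k$ and assuming that $\W(v_k)$ contains neither a $\F$ nor an $\RMW$. The heart of the argument mirrors the proof of \Cref{thm:xaybsolonew}: because $\W(v_k)$ runs solo and performs no synchronization action, \Cref{obs:si-iimpliesfence} yields $\node{d_p,t}\notin\Past^+(\W(v_k).e)$ for every $t$ in the execution interval of $\W(v_k)$. Applying \Cref{thm:DtFTSO} with $S=\{\W(v_k).e\}\cup\{\node{b,t_{\W(v_k).s}}\}_{b\in\Ag}$ and a large $\Delta$ then produces a run $r'\loc r_m$ in which \emph{all} of $p$'s propagations belonging to $\W(v_k)$ are pushed past $\W(v_k).e$; in particular every value written by $\W(v_k)$ still resides in $p$'s buffer at time $t_{\W(v_k).e}$, and main memory carries no trace of them. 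This is the key place where delaying an entire operation's worth of propagations---rather than the single tag handled by \Cref{thm:feedbackorsynch}---is essential. By local equivalence $\W(v_k)$ still performs no $\F$ or $\RMW$ in $r'$ and is still the most recent completed write at $t_{\W(v_k).e}$.

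Next I would take the prefix of $r'$ up to $\W(v_k).e$ and extend it by letting a second process $q\neq p$ (which exists since $n\ge 2$) invoke and run a single $\lREAD$ solo, while keeping $d_p$ idle. Obstruction-freedom ensures the read completes, yielding a legitimate run $r''$ of $\mathcal{I}$, which must therefore be linearizable. On one hand, since the read is invoked after $\W(v_k)$ completes and $\W(v_k)$ is the most recent write in $r''$, linearizability forces the read to return $v_k$. On the other hand, $q$ can learn of $\W(v_k)$ only through shared memory, which never reflects any of $\W(v_k)$'s writes during the read, since they stay trapped in $p$'s buffer and $d_p$ is idle; consequently $q$ observes, and reacts to, exactly the memory contents it would see in the run obtained by omitting $\W(v_k)$ altogether, and so returns the same value it returns there---namely $v_{k-1}$, or the default when $k=1$, which differs from $v_k$. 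This contradiction shows $\W(v_k)$ must contain a $\F$ or an $\RMW$, and since $k$ was arbitrary all $m$ writes synchronize.

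I expect the main obstacle to be the indistinguishability step just described, and in particular justifying rigorously that $q$'s read returns the same value in $r''$ as in the $\W(v_k)$-free run: one must argue that $q$ never gains access to the tags trapped in $p$'s buffer and that leaving $d_p$ idle during the (finite) read is compatible with fairness, both of which are handled by propagating $p$'s buffer only after the read has returned. A secondary subtlety worth flagging is that the naive construction, which places the observing reads inside $r_m$ itself and appeals directly to its linearizability, does \emph{not} work: under fair scheduling $d_p$ may propagate $v_k$ to memory between the write and the read, so the read legitimately returns $v_k$ with no synchronization at all. It is precisely the ability to delay \emph{all} of $\W(v_k)$'s propagations in the auxiliary run $r''$ that converts the absence of synchronization into a linearizability violation.
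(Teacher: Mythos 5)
Your proposal is correct in its essentials, but it does not follow the route the paper gestures at, and the difference is worth spelling out. The paper offers no explicit proof of \Cref{cor:writemustsync}: it only remarks that the corollary follows from ``techniques used in the proof of \Cref{thm:DtFTSO}'' after noting that both operations in \Cref{thm:xaybsolonew} may be $\lWRITE$s. That hinted route---chaining consecutive writes and invoking \Cref{thm:registerTSO} together with \Cref{lem:obcomposedOnlyOf_i_And_j}---has an intrinsic limitation that your argument sidesteps: those tools force a synchronization action in the \emph{following} write only in a context where the \emph{preceding} write performs no $\RMW$ and none of its tags reach memory, so write-to-write chaining can only yield ``at least one of each consecutive pair synchronizes,'' never that all $m$ writes do. Your per-write argument---delay all of $\W(v_k)$'s propagations via \Cref{thm:DtFTSO} exactly as in the first step of \Cref{thm:xaybsolonew}, then append a solo $\lREAD$ by $q$ with $d_p$ kept idle, and derive a direct linearizability contradiction from the read's forced return value---applies to each write independently of whether its neighbors synchronize, which is precisely what the universally quantified conclusion requires; your observation that placing the reads inside $r_m$ itself would fail under fair scheduling is likewise correct and important. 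Two details deserve more care in a full write-up. First, the node $\node{d_p,t_{\W(v_k).s}}$ belongs to the set $S$ you feed to \Cref{thm:DtFTSO} and is therefore \emph{not} delayed, so if $d_p$ propagates at that node the memory contents at your two cut points (end of $\W(v_k)$ versus just before its invocation) can differ; this is easily repaired by inserting an idle round when constructing $r_m$, or by cutting immediately after that round in both auxiliary runs. Second, the claim that $q$'s read evolves identically in $r''$ and in the $\W(v_k)$-free run requires the standard round-by-round induction (same local state, same buffer, same memory on every variable $q$ accesses, with the same scheduling chosen for $q$ and $d_q$)---the same style of induction used inside the proof of \Cref{thm:DtFTSO}---but you have flagged this step explicitly, and it goes through because the tags of $\W(v_k)$ remain trapped in $p$'s buffer throughout the read in both runs.
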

We remark that 
Castañeda \emph{et al.} prove in~\cite{WeakMemDISC24} that \emph{spec-available} implementations of linearizable registers in TSO must contain a run in which at least one $\lREAD$ or $\lWRITE$ operation performs an $\F$ or $\RMW$ action. Since every obstruction-free implementation is in particular spec-available, the same follows for obstruction-free implementations.
For obstruction-free implementations, \Cref{cor:writemustsync} provides sharper bounds. 
\subsubsection{Snapshots}
Consider a (single-writer) object $\snapshoto$ storing a
vector of a length~$\size{\Proc}$ over a set of values $W$
(also represented as function in $\Proc \to W$)
 with the initial vector of $\tup{\bot \til \bot}$. 
The operations are
$\set{\updateopp{w} \stt w\in V}$ and $\scanop$, 
with return values $\ack$ and $ (\Proc \to W)$, respectively. 
The specification of $\snapshoto$ 
consists of all complete sequential histories where each $\scanop$ event returns $\vec{V}$
such that $\vec{V}(\proc)$ is the value written by the last preceding $\updateop$ operation by process $\proc$, 
or $\bot$ if no such $\updateop$ exists.  

Similar to our definition of tags for write, read and propagate actions in TSO, we associate with each update operation and scanned value a $tag$ field as follows:
The $k$'th update  operation $\updateop$ by a process~$i$ in a given run will have the field $\updateop.tag\triangleq\node{i,k}$. Similarly, we associate a $tag$ field with each  vector component returned by $\scanop$ if the component of $i$ returned by $\scanop$ is $i$'s $k^{th}$ update operation in $r$. Therefore, the $tag$ field of a scan operation is a vector of length $|\Pi|$ of pairs of the form $\node{i,k}$.
Given two values $v,w$ in vectors returned by scan operations, we write $v\preceq  w$ if the tag associated with the value $v$ is smaller or equal to the one of $w$. 
To facilitate the analysis and w.l.o.g., we assume that a value is written at most once by a given process.
We can now show an unconditional occurs-before connection between scans and updates that precede each other: 
\begin{theorem}[update to scan]\label{thm:uptoscan}
   Let~$r$ be a run of a linearizable snapshot implementation. 
   \begin{enumerate}[label=(\alph*)]
       \item If $\updateop_i(v)<_r\scanop_j$ and $\scanop_j$ completes in~$r$, then $\updateop_i(v)\ob_r\scanop_j$, and 
       \item If $\scanop_i<_r\updateop_j(v)$ and $\updateop_j(v)$ completes in~$r$, then \mbox{$\scanop_i\ob_r\updateop_j(v)$}.
   \end{enumerate}
\end{theorem}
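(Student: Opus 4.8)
The plan is to prove both parts by contradiction, in each case invoking \Cref{thm:DtFTSO} to construct a locally equivalent run in which the two operations appear in the \emph{reversed} real-time order, and then deriving a contradiction with linearizability and the snapshot specification. I describe part~(a) in detail; part~(b) is symmetric.

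For part~(a), assume $\updateop_i(v)<_r\scanop_j$, that $\scanop_j$ completes, and---towards a contradiction---that $\updateop_i(v).s\not\ob_r\scanop_j.e$. The first step is to observe that this assumption lets one shift the \emph{entire} update, and indeed everything on $i$'s timeline from the update's start onward, past the scan. Set $S\triangleq\Past^+_r(\scanop_j.e)$ and write $\updateop_i(v).tag=\node{i,k}$. I claim no node $\node{i,t}$ with $t\ge t_{\updateop_i(v).s}$ lies in $S$: otherwise, for $t>t_{\updateop_i(v).s}$ the locality clause~\ref{it:local} of \Cref{def:obTSO} gives $\updateop_i(v).s\ob_r\node{i,t}$, and transitivity (clause~\ref{it:transitive}) together with $\node{i,t}\ob_r\scanop_j.e$ would yield $\updateop_i(v).s\ob_r\scanop_j.e$, contradicting the assumption (the case $t=t_{\updateop_i(v).s}$ is the assumption itself, since $\scanop_j.e$ lies on a different process). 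Applying \Cref{thm:DtFTSO} to $r$ with this $S$ and a large $\Delta$ (say $\Delta>t_{\scanop_j.e}$) yields a run $r'\loc r$ in which $\scanop_j$ retains its timing while all of these nodes---in particular every update by $i$ of index $\ge k$---are delayed by $\Delta$, so that $\scanop_j<_{r'}\updateop_i(v)$.

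The contradiction then comes from comparing what the scan must return in the two runs. Since $r'\loc r$, process $j$ performs the same scan and returns the same vector $\vec V$ in both. In $r$, linearizability forces a linearization respecting $\updateop_i(v)<_r\scanop_j$, so the last $i$-update preceding $\scanop_j$ has index $\ge k$; using the at-most-once-per-process value assumption to identify the returned value with a unique update, $\vec V(i)$ is therefore the value of an $i$-update of index $\ge k$. In $r'$, by contrast, every $i$-update of index $\ge k$ occurs in real time after $\scanop_j$ completes, so in any linearization of $r'$ they all follow $\scanop_j$; hence $\vec V(i)$ must be the value of an $i$-update of index $<k$, or $\bot$ if none precedes $\scanop_j$. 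These two requirements on the common value $\vec V(i)$ are incompatible, a contradiction. Part~(b) is entirely analogous: assuming $\scanop_i.s\not\ob_r\updateop_j(v).e$ I shift $\scanop_i$ past $\updateop_j(v)$ using $S=\Past^+_r(\updateop_j(v).e)$, obtaining $r'\loc r$ with $\updateop_j(v)<_{r'}\scanop_i$; linearizability then forces $\vec V(j)$ to be the value of a $j$-update of index below that of $\updateop_j(v)$ in $r$ (the scan precedes the update) but of index at least that of $\updateop_j(v)$ in $r'$ (the update now precedes the scan), again contradicting local equivalence.

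The step I expect to be the main obstacle is the linearizability bookkeeping, not the reordering. One must argue carefully---in each run, and in the presence of possibly many updates concurrent with the scan---exactly which update's value the scan is permitted to return for the relevant component, and convert this into a clean comparison of indices. The shift itself is routine once the locality-plus-transitivity observation shows that the start node failing to lie in $\Past^+_r$ propagates to the whole suffix of $i$'s timeline; the delicate point is ensuring that the at-most-once value assumption and the snapshot specification together render the two returned-value constraints provably contradictory, including the boundary case where the scan sees no qualifying update ($\bot$).
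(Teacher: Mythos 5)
Your proposal is correct and takes essentially the same route as the paper's own proof: assume no $\ob_r$ chain, note via locality and transitivity that then \emph{no} later node of the updating (resp.\ scanning) process lies in $\Past^+_r$ of the other operation's end, apply \Cref{thm:DtFTSO} to delay that entire suffix past the other operation, and contradict the snapshot specification using local equivalence of the returned vector. The only differences are cosmetic (the paper takes $S=\{\scanop_j.e\}$ with $\Delta=t_{\scanop_j.e}-t_{\updateop_i.s}+1$, where you take $S=\Past^+_r(\scanop_j.e)$ and a large $\Delta$, which induce the same shift), so no further comparison is needed.
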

\begin{proof}
 To show part (a), let $r$ be a run such that $\updateop_i(v)<_r\scanop_j$ and denote by $\vec{V}$ the return value of $\scanop_j$. By the sequential specification of a snapshot object, we have that $\vec{V}[i]\succeq v$. Assume by way of contradiction that $\updateop_i(v)\not\ob_r\scanop_j$, i.e., for all $v'\succeq v$, it is the case that $\updateop_i(v').s\notin\Past^+(\{\scanop_j.e\})$.
    Thus, applying \Cref{thm:DtFTSO} on $r$, $S=\{\scanop_j.e\}$ and $\Delta=t_{\scanop_j.e}-t_{\updateop_i.s}+1$, we obtain a run $r'\loc r$ such that $\scanop_j<_{r'}\updateop_i(v')$ for all $v'\succeq v$ and $\scanop_j$ returns $\vec{V}[i]\succeq v$ in $r'$ as well, contradicting the snapshot specification.

The proof of (b) is similar: Let $r$ be a run such that $\scanop_i<_r\updateop_j(v)$. Denote by $\vec{V}$ the value returned by $\scanop_j$. By the sequential specification of a snapshot object, we have that $\vec{V}[j]\prec v$. Assume by way of contradiction that $\scanop_i\not\ob_r\updateop_j(v)$, i.e., $\scanop_i\notin\Past^+(\{\updateop_j(v).e\})$.
    Thus, applying \Cref{thm:DtFTSO} on $r$, $S=\{\updateop_j(v).e\}$ and $\Delta=t_{\updateop_j.e}-t_{\scanop_i.s}+1$, we obtain a run $r'\loc r$ such that 
    $\updateop_j(v)<_{r'}\scanop_i$ while $\scanop_i$ returns the same value $\vec{V}$ as in $r$, with $\vec{V}[j]\prec v$, contradicting the snapshot specification.
\end{proof}

What we showed here is the counterpart of \Cref{thm:registerTSO} for snapshot. Now, as in the proof of \Cref{thm:xaybsolonew}, we can use the connection between occurs before and synchronization established in \Cref{sec:applyDtF} to show:

\begin{theorem}\label{thm:scantoupnew}
    Let $\mathcal{I}$ be an obstruction-free linearizable snapshot implementation in TSO.
    Let $r$ be a run of $\mathcal{I}$ in which operation $\updateop_i(v)$ is performed by process $i$ running solo and completes. If $\updateop_i(v)$ contains neither $\F$ nor $\RMW$ actions in $r$, then for all operations $\scanop_j$, there exists a run $\tilde{r}$ of $\mathcal{I}$ indistinguishable to $i$ until the end of $\updateop_i(v)$ in which $\updateop_i(v)<_{\tilde{r}}\scanop_j$ and $\scanop_j$ contains a $\F$ or $\RMW$ action.
\end{theorem}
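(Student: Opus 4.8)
The plan is to mirror the proof of \Cref{thm:xaybsolonew}, replacing \Cref{thm:registerTSO} by its snapshot counterpart \Cref{thm:uptoscan}(a). Write $t_1\triangleq t_{\updateop_i(v).s}$ and $t_2\triangleq t_{\updateop_i(v).e}$. Since $\updateop_i(v)$ runs solo and contains no $\F$ or $\RMW$ action, \Cref{obs:si-iimpliesfence} tells us that no base-case link of the form $\node{d_i,\cdot}\ob_r\node{i,t}$ can enter an $i$-node during the operation; in particular $\node{d_i,t_1}\notin\Past^+(\updateop_i(v).e)$, and more generally none of $d_i$'s nodes in the window $[t_1,t_2]$ lies in the causal past of $\updateop_i(v).e$. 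I would then apply \Cref{thm:DtFTSO} to $r$ with $S=\{\updateop_i(v).e\}\cup\{\node{b,t_1}\}_{b\in\Ag}$ and $\Delta=t_2-t_1+2$ to obtain a locally equivalent run $r'\loc r$ in which $d_i$ performs no $\prop$ during the window $[t_1,t_2]$ (every such propagation of $r$ is shifted past $t_2$ by the choice of $\Delta$) and in which $\updateop_i(v)$ still runs solo and completes at time $t_2$.

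Next I would build $\tilde r$ by copying $r'$ up to and including time $t_2$, invoking $\scanop_j$ at process $j$ in round $t_2+1$, and then letting only $j$ and $d_j$ take steps while $d_i$ performs no $\prop$. Obstruction-freedom guarantees that $\scanop_j$ completes. By construction $\tilde r$ agrees with $r'\loc r$ through the end of $\updateop_i(v)$, so it is indistinguishable to $i$ until then, and $\updateop_i(v)<_{\tilde r}\scanop_j$ since the scan is invoked only after the update returns. \Cref{thm:uptoscan}(a) then yields $\updateop_i(v)\ob_{\tilde r}\scanop_j$, that is, $\node{i,t_1}\ob_{\tilde r}\scanop_j.e$. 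Because only $i$ and $d_i$ move during $(t_1,t_2]$ and only $j$ and $d_j$ move afterwards, \Cref{obs: ob occurs before} (times strictly increase along $\ob$ chains) forces every node of this chain to lie among $\{i,d_i,j,d_j\}$, so the chain is $\ijj$-only.

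Finally I would invoke \Cref{lem:obcomposedOnlyOf_i_And_j} on this $\ijj$-only chain, which guarantees that one of its three conditions holds at some time $t$ with $t_1\le t<t_{\scanop_j.e}$. \cref{lem:obcomposedOnlyOf_i_And_jit1} fails because it requires an $\RMW$ by $i$, whereas $i$ moves only during the update and performs no $\RMW$ there; \cref{lem:obcomposedOnlyOf_i_And_jit2} fails because it requires a $\prop$ by $d_i$ strictly between $t$ and $\scanop_j.e$, and by construction $d_i$ propagates at no node strictly between $t_1$ and $\scanop_j.e$. Hence \cref{lem:obcomposedOnlyOf_i_And_jit3} must hold, and its second clause supplies a node $\node{j,t'}.\alpha\in\{\F,\RMW\}$; since $j$ acts only during $\scanop_j$, this means $\scanop_j$ contains a $\F$ or $\RMW$ action, which is the desired conclusion.

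I expect the main obstacle to be precisely the elimination of \cref{lem:obcomposedOnlyOf_i_And_jit2}: one must rule out the ``write, then propagate, then read-from-memory'' information path, which is exactly the route that would let the scan learn $v$ without synchronizing. Controlling it requires being careful that \emph{all} of $d_i$'s propagations are pushed out of the relevant window by the choice of $\Delta$ and that the tail of $\tilde r$ introduces no further $d_i$ propagations, so that no $\prop$ node of $d_i$ can appear strictly between $t_1$ and $\scanop_j.e$. The argument that $d_i$'s propagations actually leave the window rests on \Cref{obs:si-iimpliesfence} together with the solo assumption, exactly as in the register case.
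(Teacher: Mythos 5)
Your proposal is correct and takes essentially the same approach as the paper: the paper's proof of \Cref{thm:scantoupnew} simply states that it is identical to the proof of \Cref{thm:xaybsolonew} with $\Xa$ replaced by $\updateop_i(v)$ and $\Yb$ replaced by $\scanop_j$, and your argument is exactly that proof written out in full, with \Cref{thm:uptoscan}(a) playing the role of \Cref{thm:registerTSO} in ruling out the chain-free alternative. The extra details you supply (letting $d_j$ move in the extension, justifying the $\ijj$-only property via \Cref{obs: ob occurs before}, and explicitly eliminating \cref{lem:obcomposedOnlyOf_i_And_jit1,lem:obcomposedOnlyOf_i_And_jit2} of \Cref{lem:obcomposedOnlyOf_i_And_j}) match the paper's intended construction.
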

\begin{proof}
    The proof of \Cref{thm:scantoupnew} is identical to the one of \Cref{thm:xaybsolonew} with $\Xa$ being replaced by $\updateop_i(v)$ and $\Yb$ being replaced by $\scanop_j$.
\end{proof}
\begin{theorem}\label{thm:scanupdatenew}
    Let $\mathcal{I}$ be an obstruction-free linearizable snapshot implementation in TSO.
    Let $r$ be a run of $\mathcal{I}$ in which operation $\scanop_i$ is performed by process $i$ running solo and completes. If $\scanop_i$ contains neither $\F$ nor $\RMW$ actions in $r$, then for all operations $\updateop_j(v)$, there exists a run $\tilde{r}$ of $\mathcal{I}$ indistinguishable to $i$ until the end of $\scanop_i$ in which $\scanop_i<_{\tilde{r}}\updateop_j(v)$ and $\updateop_j(v)$ contains a $\F$ or $\RMW$ action.
\end{theorem}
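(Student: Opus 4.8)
The plan is to mirror the proof of \Cref{thm:scantoupnew} (equivalently \Cref{thm:xaybsolonew}), interchanging the roles of the two operations: the solo operation is now the scan $\scanop_i$ and the operation forced to synchronize is the update $\updateop_j(v)$. Accordingly, the occurs-before chain will be established through part (b) of \Cref{thm:uptoscan} rather than through the register result \Cref{thm:registerTSO}, and the conclusion will again be extracted from \Cref{lem:obcomposedOnlyOf_i_And_j}.

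First I would start from the given run $r$ in which $\scanop_i$ runs solo, completes, and contains no $\F$ or $\RMW$ action. Since $i$ runs solo during $\scanop_i$, any base link entering a node of $i$ on $\scanop_i$'s timeline would have to originate at $d_i$, and \Cref{obs:si-iimpliesfence} tells us such a link forces a $\F$ or $\RMW$ at $i$; as $\scanop_i$ has none, we conclude $\node{d_i,t_{\scanop_i.s}}\notin\Past^+(\scanop_i.e)$. I would then apply \Cref{thm:DtFTSO} to $r$ with $S=\{\scanop_i.e\}\cup\{\node{b,t_{\scanop_i.s}}\}_{b\in\Ag}$ and a suitable $\Delta$ (e.g.\ $\Delta=t_{\scanop_i.e}-t_{\scanop_i.s}+2$), obtaining a locally equivalent run $r'\loc r$ in which no $\prop$ by $d_i$ occurs while $\scanop_i$ executes and in which $\scanop_i$ still runs solo.

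Next I would build $\tilde{r}$ by copying $r'$ up to time $t_{\scanop_i.e}$, invoking $\updateop_j(v)$ at process $j$ in the following round, and extending the run by scheduling only $j$ while permitting no $\prop$ by $d_i$. Obstruction-freedom guarantees that $\updateop_j(v)$ completes. By construction $\scanop_i<_{\tilde{r}}\updateop_j(v)$, so part (b) of \Cref{thm:uptoscan} yields $\scanop_i\ob_{\tilde{r}}\updateop_j(v)$, i.e.\ a chain $\scanop_i.s\ob_{\tilde{r}}\updateop_j(v).e$, which is $\ijj$-only because only $i,d_i,j,d_j$ take steps in the relevant window. Finally I would invoke \Cref{lem:obcomposedOnlyOf_i_And_j}: condition \ref{lem:obcomposedOnlyOf_i_And_jit1} fails since $\scanop_i$ performs no $\RMW$ and $i$ is idle after $\scanop_i.e$; condition \ref{lem:obcomposedOnlyOf_i_And_jit2} fails since no $\prop$ by $d_i$ occurs in that window; hence condition \ref{lem:obcomposedOnlyOf_i_And_jit3} must hold, which requires $j$ to perform a $\F$ or $\RMW$ during $\updateop_j(v)$.

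The main obstacle is ensuring that the two-stage construction genuinely rules out conditions \ref{lem:obcomposedOnlyOf_i_And_jit1} and \ref{lem:obcomposedOnlyOf_i_And_jit2}, i.e.\ that no propagation by $d_i$ can sneak into the window between $\scanop_i.s$ and $\updateop_j(v).e$ even though the scan may itself issue writes into $i$'s buffer. The DtF step pushes all of $d_i$'s pending propagations past $\scanop_i.e$, and the explicit schedule of $\tilde{r}$ forbids any $\prop$ by $d_i$ thereafter; together with $i$'s idleness after $\scanop_i.e$, this closes off every route other than condition \ref{lem:obcomposedOnlyOf_i_And_jit3}. One should also confirm that the chain guaranteed by \Cref{thm:uptoscan}(b) really is $\ijj$-only, which is immediate from the schedule of $\tilde{r}$.
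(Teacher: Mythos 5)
Your proposal is correct and matches the paper's intended proof: \Cref{thm:scanupdatenew} is the symmetric counterpart of \Cref{thm:scantoupnew}, i.e.\ it is proved by the argument of \Cref{thm:xaybsolonew} with $\Xa$ replaced by $\scanop_i$, $\Yb$ replaced by $\updateop_j(v)$, and \Cref{thm:registerTSO} replaced by \Cref{thm:uptoscan}(b). Your two-stage DtF construction, the observation that the resulting chain is $\ijj$-only, and the elimination of the first two cases of \Cref{lem:obcomposedOnlyOf_i_And_j} so that the third case forces a $\F$ or $\RMW$ inside $\updateop_j(v)$ are exactly the steps of that argument.
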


\section{Discussion}
The register results of~\cite{WeakMemDISC24} exploit the fact that $\lREAD$ operations are one-sided non-commutative with respect to $\lWRITE$ operations, yielding (roughly) that there exist runs where at least one of two adjacent write–read operations must contain a $\F$ or $\RMW$.
Our framework strengthens this: we prove that there exist runs where 
$\lWRITE$\hspace{.12mm}s must contain a $\F$ or~$\RMW$.
In typical register implementations, $\lWRITE$\hspace{.12mm}s complete with an acknowledgment independently of preceding $\lWRITE$\hspace{.12mm}s, meaning they are not one-sided non-commutative.
Nevertheless, our tools show that there are runs where $\lWRITE$\hspace{.12mm}s must contain $\F$ or $\RMW$.
In addition, the DtF theorem and its uses, both in the TSO setting and in~\cite{CommRequirDISC24}, raise the question of whether similar results hold in other memory models.
Finally, we observe that although our $\ob$ relation in TSO is analogous to happens-before in message passing, it differs in that $\ob$ does not necessarily convey information, even under full information, whereas happens-before does.
Nevertheless, like happens-before, the $\ob$ relation remains a necessary condition for ordering in many cases. This fundamental difference is both interesting and worth deeper investigation.

  \bibliography{ref,z1,z2_DISC23,ref2,references}

  \appendix

\end{document}